\documentclass[a4paper,10pt,english]{article}
\usepackage[a4paper, margin=2cm]{geometry}
\usepackage[utf8]{inputenc}

\usepackage{mathtools,amsmath,amssymb,amsfonts}
\usepackage{mathalfa}
\usepackage{tikz}
\usetikzlibrary{calc}
\usepackage{color}
\usepackage{url}
\usepackage{soul}
\usepackage[mode=errorstop]{pstool}

\usepackage{algorithm}
\usepackage{algorithmic}

\usetikzlibrary{decorations.markings}

\usepackage{enumitem}

\renewcommand{\P}{\mathcal{P}}
\newcommand{\A}{\mathcal{A}}
\newcommand{\E}{\mathcal{E}}
\newcommand{\Eb}{\mathbb{E}}
\newcommand{\X}{\mathcal{X}}
\newcommand{\V}{\mathcal{V}}

\graphicspath{{./figures/}}

\usepackage[english]{babel}

\usepackage{tabularx}

\usepackage{xcolor}

\usepackage{verbatim}
\usepackage{xspace}

\usepackage{amsfonts}
\usepackage{mathtools}
\usepackage{amsthm}
\usepackage{amsmath}

\usepackage{amssymb}

\usepackage{setspace}

\setlength{\textfloatsep}{0.1 cm}
\usepackage{ifthen}
\usepackage{tikz}
\usetikzlibrary{automata,positioning, arrows}

\newtheorem{thm}{Theorem}
\newtheorem{lemma}{Lemma}
\newtheorem{prop}{Proposition}
\newtheorem{cor}{Corollary}
\theoremstyle{definition}

\newtheorem{defn}{Definition}

\newtheorem{myexample}{Example}
\newtheorem{rem}{Remark}

\newenvironment{example}[1][0]
{ % if no argument are passed starts a new myExample environment
  \ifthenelse{\equal{#1}{0}}{
  \myexample
}
{ % if the an argument is passed, then set the Example number
  \myexample
  % and decrease the example counter by 1
  \addtocounter{myexample}{-1}
}
}
{\endmyexample}

\newcommand{\virgolette}[1]{``#1''}

\newcommand{\R}{\mathbb{R}}{}
{}
\newcommand{\N}{\mathbb{N}}

\newcommand{\B}{\mathbb{B}}

\newcommand{\cA}{\mathcal{A}}

\newcommand{\cK}{\mathcal{K}}
\newcommand{\cKL}{\mathcal{KL}}

\newcommand{\M}{\langle M \rangle}
\newcommand{\cX}{\mathcal{X}}
\newcommand{\cF}{\mathcal{F}}

\newcommand{\cS}{\mathcal{S}}
\newcommand{\cT}{\mathcal{T}}

\newcommand{\cV}{\mathcal{V}}

\newcommand{\cZ}{\mathcal{Z}}

\newcommand{\wt}{\widetilde}

\newcommand{\dw}{\textnormal{dw}}
\DeclareMathOperator{\co}{co}
\DeclareMathOperator{\Inn}{Int}
\DeclarePairedDelimiterX{\inp}[2]{\langle}{\rangle}{#1, #2}

%\headers{Stability of Switched Affine Systems}{M.~Della Rossa, L.N.~Egidio, and R.M.~Jungers}
\title{Stability of Switched Affine Systems: Arbitrary and Dwell-Time Switching\thanks{R.J. is a FNRS honorary Research Associate. This project has received funding from the European Research Council (ERC) under the European Union's Horizon 2020 research and innovation programme under grant agreement No 864017 - L2C. R.J. is also supported by the Innoviris Foundation and the FNRS (Chist-Era Druid-net). The first and second authors contributed equally. \\The Authors are with the ICTEAM,        UCLouvain, 4 Av. G. Lema\^{i}tre, 1348 Louvain-la-Neuve, Belgium. ).
		        {\tt\small \{matteo.dellarossa,lucas.egidio,} {\tt\small raphael.jungers\}@uclouvain.be}}
}
	
\author{Matteo Della Rossa
	\and Lucas N. Egidio
	\and Raph\"ael M. Jungers 

}

\begin{document}

\maketitle
\begin{abstract}
    The dynamical behavior of switched affine systems is known to be more intricate than that of the well-studied switched linear systems, essentially due to the existence of distinct equilibrium points for each subsystem. First, under arbitrary switching rules, the stability analysis must be generally carried out with respect to a compact set with non-empty interior rather than to a singleton. We provide a novel proof technique for existence and outer approximation of attractive invariant sets of a switched affine system, under the hypothesis of global uniform stability of its linearization. On the other hand, considering dwell-time switching signals, forward invariant sets need not exist for this class of switched systems, even for stable ones. Hence, more general notions of stability/boundedness are introduced and studied, highlighting the relations of these concepts to the uniform stability of the linear part of the system under the same class of dwell-time switching signals. These results reveal the main differences and specificities of switched affine systems with respect to linear ones, providing a first step for the analysis of switched systems composed by subsystems \emph{not} sharing the same equilibrium. Numerical methods based on linear matrix inequalities and sum-of-squares programming are presented and illustrate the developed theory.  
\end{abstract}

\section{Introduction}
Switched dynamical systems provide a mathematical model for a large class of phenomena and  have been the subject of intense analysis in the past decades; for an overview, see the monographs~\cite{liberzon,ShorWirMas07}.
In this framework, given $M$ vector fields $f_1,\dots, f_M:\R^n\to \R^n$, we consider the differential equation
\begin{equation}\label{eq:SwitchIntro}
\dot x(t)=f_{\sigma(t)}(x(t)),
\end{equation}
where $\sigma:\R_+\to \{1,\dots,M\}$ is a so-called \emph{switching signal}, which selects, at each instant of time, one subsystem among $f_1,\dots, f_M$ to be active, i.e., to define how the state $x(t)$ evolves. One of the first but yet challenging problems studied in the literature is the \emph{stability} analysis of~\eqref{eq:SwitchIntro}, specifically for some particular classes of subsystems $f_1,\dots, f_M$ and/or some particular classes of switching signals $\sigma:\R_+\to \{1,\dots ,M\}$.

Regardless the specific framework (linear or non-linear subsystems, arbitrary or restricted switching signals, discrete or continuous-time evolution), a classical assumption in switched systems literature consists in considering subsystems that \emph{share the same equilibrium} at the origin, i.e., $f_1(0)=\dots=f_M(0)=0$; the stability analysis is then carried out with respect to this common equilibrium {(this is for instance the case when the subsystems $f_i$ are linear functions)}. This assumption, which is reasonable in several practical contexts, simplifies the analysis, and leads to elegant and handy Lyapunov characterizations of (several kinds of) stability notions, for an overview we refer to~\cite{liberzon,ShorWirMas07,Jung09, Wir05, geromel2006stability} and references therein. On the other hand it is often fruitful or even necessary to consider subsystems which \emph{do not} share the same equilibrium, leading to a greater modeling capacity but paying the price of a more convoluted stability analysis. A natural and interesting case is provided by switched \emph{affine} systems, thus considering maps $f_1,\dots, f_M:\R^n\to \R^n$ in~\eqref{eq:SwitchIntro} which are affine, i.e., $f_i(x)=A_ix+b_i$, for all $i\in \{1,\dots, M\}$, for some given matrices and vectors $A_i\in \R^{n\times n}$ and $b_i\in \R^n$. 

Switched affine models are often adopted in dynamical system analysis and control, mostly in the form of piecewise affine systems, to characterize or approximate complex non-linear behaviors, see \cite{kerrigan2002optimal,garulli2012survey}. { In this case, the switching events are state-dependent and subsystems are selected according to which operating region the state trajectory lies in. Beyond that, the use of affine models is also recurrent in power electronics~\cite{deaecto2010switched,egidio2020switched}, as several power converters can be conveniently represented by them.  In this setting, the switching signal is generally a control variable that must be designed to guide the state vector towards a user-defined goal.}

In our work, instead, we focus on the continuous-time switched affine systems with switching signals that are considered to be unknown exogenous disturbances/inputs and the analysis is then carried out in two distinct cases:
\begin{itemize}[leftmargin=*]
    \item the \emph{arbitrary switching case}, in which no further assumption is made on the switching signals;
    \item the \emph{dwell-time switching case}, where it is assumed the existence of a \emph{dwell-time} $\tau>0$, which represents the minimal time interval between two consecutive switching instants.
\end{itemize}

In both of these cases, in analyzing the stability of a given switched affine system, our approach requires to consider, as a first step, the corresponding \emph{linearized} switched system. More precisely, supposing that subsystems $f_i:\R^n\to \R^n$ in ~\eqref{eq:SwitchIntro} are affine and defined by matrices and vectors $(A_i,b_i)\in \R^{n\times n}\times \R^n$ with $i\in \{1,\dots,M\}$, we consider the corresponding switched linear system
\[
\dot x(t)=A_{\sigma(t)}x(t)
\]
as a surrogate in some of our analyses.
This is motivated by the fact that stability (with respect to the origin) of switched linear systems is a fundamental and mature topic in switched systems literature, see for example~\cite{liberzon,Mor96,LinAnts09} and references therein.  
Thus, considering the linearization of system~\eqref{eq:SwitchIntro}, we rely on  well-known converse Lyapunov results for switched \emph{linear} systems, see~\cite{DayMat99,MolPya89} for the arbitrary switching case, and~\cite{Wir05, ChiGugSig19} for the dwell-time case. Assuming that the corresponding linearized system is stable, the aforementioned converse Lyapunov theorems provide us with the existence of (multiple) Lyapunov functions for the linearized system. {Therefore, we generalize these results to cope with stability/boundedness properties for the original switched affine system.} As said, since the affine subsystems do not, in general, share the same equilibrium, we need to consider tailored concepts of stability. 

In the arbitrary switching  case, under the hypothesis that the corresponding linearized system is stable, we prove the existence of a compact forward invariant attractor and that all the trajectories exponentially converge to it. Moreover, some numerical techniques based on semidefinite and sum-of-squares optimization are proposed in order to provide outer approximations of the minimal forward invariant compact attractor. {The proposed analysis is inspired by the results presented in~\cite{AthJun16} for the discrete-time setting, and the continuous-time case partially introduced in~\cite{NilBosSig13}.}

When considering switching signals with minimum dwell-time, we prove that the solutions of switched affine systems are ultimately bounded, i.e., all the solutions eventually enter a compact set without leaving it afterwards, under the assumption that the linearized system is stable for the corresponding class of dwell-time signals. On the other hand, we also prove that, due to the presence of multiple equilibria, a compact forward invariant set \emph{does not} exist, in general, unlike the arbitrary switching case previously studied.
{We thus provide tailored definitions characterizing the asymptotic behavior of switched affine systems under dwell-time assumption, allowing us to underline the peculiarity of this setting with respect to the case of arbitrary switching signals.}
Numerical outer approximations of the attracting/bounding regions are provided, by means of semidefinite programming.

This manuscript is organized as follows: In Section~\ref{sec:Prelimin} we present the considered framework along with the necessary definitions and we recall the foundational Lyapunov results concerning switched linear systems. In Section~\ref{sec:Arbitrary} we study the arbitrary switching case, proving the existence of a forward invariant compact attractor, together with a numerical scheme to approximate it; the analysis for dwell-time switching signals is then developed in Section~\ref{sec:DwellTime}, {underlining the subtleties and the differences with respect to the general arbitrary switching case.} Section~\ref{sec:Concl} closes the manuscript providing some concluding remarks together with possible future directions of research.
\\
\textbf{Notation:} Given $M\in \N$, we define $\M\coloneqq\{1,\dots, M\}$. The symbols $\R_{>0}$ and $\R_{\geq 0}$ denote the sets positive and nonegative real numbers, respectively. With the notation $G:A\rightrightarrows B$ we denote a \emph{set-valued map} between $A$ and $B$. Given a set $S\subset \R^n$, the symbols $\overline S$, $\partial S$ and $\Inn(S)$ denote the \emph{closure, boundary} and \emph{interior} of $S$, respectively; $\co(S)$ denotes its \emph{convex hull}.
\section{Preliminaries}\label{sec:Prelimin}
Given $M\in \N$, consider  $\cF\coloneqq \{ (A_i, b_i)_{i\in \M}\,\vert A_i\in \R^{n \times n}, b_i\in \R^n\}$ and the continuous-time switched system
\begin{equation}\label{eq:Switchingsystems}
\dot x(t)=A_{\sigma(t)}x(t)+b_{\sigma(t)},
\end{equation}
where the \emph{switching signals} $\sigma$ are selected, in general, among the set $\cS$ defined by
\begin{equation}\label{eq:arbitrarySwitching}
\cS\coloneqq \left\{\sigma:\R_{\geq 0}\to \M\;\vert\;\;\sigma \;\text{piecewise constant} \right\}.
\end{equation}
We recall that a function $\gamma:\R_{\geq 0}\to \M$ is said to be \emph{piecewise constant} if it has a finite number of discontinuity points in any bounded subinterval of $\R_{\geq 0}$.
Moreover, without loss of generality, we suppose that signals $\sigma\in \cS$ are right-continuous.

 {Stability analysis of~\eqref{eq:Switchingsystems} under \emph{arbitrary switching signals}, can be equivalently tackled studying the differential inclusion
 \begin{equation}\label{eq:RegDiffInc}
\dot x \in F(x)\coloneqq \co\{A_i x+b_i,\,\,i \in \M\}.
\end{equation}
Indeed, it can be proven that the set of solutions to~\eqref{eq:Switchingsystems} is dense in the set of solutions of~\eqref{eq:RegDiffInc}, see for example~\cite{Filippov88},~\cite[Section 2]{ShorWirMas07} and~\cite[Theorem 2, pag.~124]{AubinCellina84} .}
% \begin{equation}\label{eq:DiffInc}
% \dot x \in G(x)\coloneqq \{A_i x+b_i,\,\,i \in \M\}.
% \end{equation}
% Solutions of this inclusion are, by definition, solutions of~\eqref{eq:arbitrarySwitching} but possibly considering \emph{measurable} switching signals $\sigma:\R_{\geq 0}\to \M$ and thus extending the class $\cS$. In view of stability analysis, the effects of this extension are negligible and we do not proceed further in this discussion; the interested reader can find more details in~\cite{Filippov88} and~\cite[Section 2]{ShorWirMas07}.
% Moreover, it can be seen that solutions of~\eqref{eq:DiffInc} are dense in the set of solutions of the  \emph{convexified} differential inclusion

% which has the appealing property that $F:\R^n\rightrightarrows \R^n$ has convex values, see the discussion provided in~\cite[Theorem 2, pag.~124]{AubinCellina84}.Summarizing in studying the asymptotic behavior of~\eqref{eq:arbitrarySwitching} \emph{under arbitrary switching}, when necessary we consider~\eqref{eq:DiffInc} or~\eqref{eq:RegDiffInc}.}

It turns out that the stability/asymptotic properties of~\eqref{eq:Switchingsystems} are closely related to the properties of the corresponding \emph{linearized} switching system given by 
\begin{equation}\label{eq:DeAffinedSystem}
    \dot x(t)=A_{\sigma(t)}x(t).
\end{equation}
From now on, we call \eqref{eq:DeAffinedSystem} the \emph{linearization of system~\eqref{eq:Switchingsystems}}.
Given any $\sigma\in \cS$, let us call $\Psi_\sigma:\R_{\geq 0}\times\R^n\to \R^n$ the flow map of~\eqref{eq:Switchingsystems}, i.e., 
\[
\Psi_\sigma(t,x)\coloneqq \text{solution to~\eqref{eq:Switchingsystems}, starting at $x(0)=x$, evaluated at time $t\in \R_{\geq 0}$.}
\]
By classical linear time-varying systems literature, we have
\begin{equation}\label{eq:Solution}
\Psi_\sigma(t,x)=\Phi_\sigma(t,0)x+\int_0^t\Phi_\sigma(t,s)b_{\sigma(s)}\,ds,
\end{equation}
where $\Phi_\sigma:\R_{\geq 0}\times\R_{\geq 0}\to\R^{n\times n}$ is the state-transition matrix of the linearization~\eqref{eq:DeAffinedSystem}, see, for example,~\cite{khalil2002nonlinear}. 

We introduce the following classical definitions of stability for the linearized system~\eqref{eq:DeAffinedSystem}, which will be used in subsequent sections to characterize the behavior of~\eqref{eq:Switchingsystems}.
\begin{defn}\label{Defn:StabilityNotion}
Given any subset of switching signals $\wt \cS\subseteq \cS$ we say that the linearized system~\eqref{eq:DeAffinedSystem} is \emph{uniformly globally asymptotically stable (UGAS) on $\wt \cS$} if there exists a class $\cKL$  function\footnote{ A continuous function $\beta:\R_{\geq 0}\times \R_{\geq 0}\to \R_{\geq 0}$ is of \emph{class $\mathcal{KL}$} if $\beta(0,s)=0$ for all $s$, $\beta(\cdot,s)$ is strictly increasing for all $s$, and $\beta(r,\cdot)$ is decreasing and $\beta(r,s)\to 0$ as $s\to\infty$, for all $r$. } $\beta$ such that
\[
|\Phi_\sigma(t,0)x|\leq \beta(|x|,t)\,\,\,\forall x\in \R^n,\;\;\forall t\in \R_{\geq 0},\;\;\forall\;\sigma\in \wt \cS. 
\]
% It is well-known that the function $\beta:\R_{\geq 0}\times\R_{\geq 0}\to \R_{\geq 0}$ can be chosen of the form $\beta(a,t)=Me^{-\kappa t}a$, for some $M,\kappa>0$.
\end{defn}
We recall here a classical Lyapunov converse result for linear switching systems under arbitrary switching rules.
\begin{lemma}[Theorem 1 in~\cite{MolPya89}]\label{lemma:ConverseLyap}
The linearized switching system~\eqref{eq:DeAffinedSystem} is UGAS on $\cS$ if and only if there exist a \emph{norm} $v:\R^n\to \R_{\geq 0}$ and a scalar $\kappa>0$ such that
\begin{equation}\label{eq:LyapunovArbitraryNorm}
    v(\Phi_\sigma(t,0)x)\leq e^{-\kappa t}v(x),\,\,\forall x\in \R^n,\,\,\forall t\in \R_{\geq 0},\,\,\forall \sigma\in \cS.
\end{equation}
This in particular implies that UGAS of~\eqref{eq:DeAffinedSystem} on $\cS$ is equivalent to \emph{exponential stability}, i.e., the function $\beta\in \cKL$ in Definition~\ref{Defn:StabilityNotion} can be chosen of the form $\beta(a,t)=M e^{-\kappa t} a$ for some $M>0$.
\end{lemma}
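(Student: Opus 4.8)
The statement bundles a converse Lyapunov result with the equivalence between UGAS and exponential stability, so I would organize the proof around three implications: (i) existence of the norm $v$ implies UGAS; (ii) UGAS implies exponential stability; and (iii) exponential stability implies existence of $v$. The first implication is immediate: since all norms on $\R^n$ are equivalent, there are constants $0<c_1\le c_2$ with $c_1|x|\le v(x)\le c_2|x|$, and feeding~\eqref{eq:LyapunovArbitraryNorm} into this sandwich yields $|\Phi_\sigma(t,0)x|\le (c_2/c_1)e^{-\kappa t}|x|$, which is a $\cKL$ bound of the exponential form, hence UGAS on $\cS$.

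The real content lies in (ii) and (iii), both of which exploit two structural features of~\eqref{eq:DeAffinedSystem}: the homogeneity $\Phi_\sigma(t,0)(\alpha x)=\alpha\,\Phi_\sigma(t,0)x$ inherited from linearity, and the closure of $\cS$ under time-shifts and concatenations. For (ii), I would use homogeneity to upgrade the pointwise $\cKL$ bound into a bound on operator norms: restricting to $|x|=1$ gives $\|\Phi_\sigma(t,0)\|\le\beta(1,t)$ uniformly over $\sigma\in\cS$. Since $\beta(1,t)\to 0$, pick $T>0$ with $\beta(1,T)\le 1/2$. The cocycle identity $\Phi_\sigma(t_1{+}s,t_1)=\Phi_{\sigma_{t_1}}(s,0)$, where $\sigma_{t_1}(\cdot):=\sigma(t_1+\cdot)\in\cS$ is the shifted signal, then shows $\|\Phi_\sigma(kT,0)\|\le 2^{-k}$ for every $k$ and every $\sigma$; combined with the uniform bound $\beta(1,0)$ on the remaining sub-$T$ interval, this gives $|\Phi_\sigma(t,0)x|\le M e^{-\kappa t}|x|$ with $\kappa=(\ln 2)/T$.

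For (iii), given the exponential bound, I would fix $0<\lambda<\kappa$ and define
\[
v(x):=\sup_{\sigma\in\cS}\ \sup_{s\ge 0}\ e^{\lambda s}\,\bigl|\Phi_\sigma(s,0)x\bigr|.
\]
Finiteness follows since $e^{\lambda s}|\Phi_\sigma(s,0)x|\le M e^{-(\kappa-\lambda)s}|x|\le M|x|$, and taking $s=0$ gives $v(x)\ge|x|$, so $|x|\le v(x)\le M|x|$. As a finite supremum of the norms $x\mapsto e^{\lambda s}|\Phi_\sigma(s,0)x|$ (each a genuine norm because $\Phi_\sigma(s,0)$ is invertible), $v$ inherits absolute homogeneity and subadditivity and is positive definite by the lower bound, hence is a norm, automatically continuous in finite dimension. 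The contraction with rate $\lambda$ comes from a concatenation argument: writing $\Phi_{\sigma'}(s,0)\Phi_\sigma(t,0)=\Phi_{\sigma''}(t{+}s,0)$ for the concatenated signal $\sigma''\in\cS$ (equal to $\sigma$ on $[0,t)$ and to $\sigma'(\cdot-t)$ afterwards), one gets $e^{\lambda s}|\Phi_{\sigma'}(s,0)\Phi_\sigma(t,0)x|=e^{-\lambda t}\,e^{\lambda(t+s)}|\Phi_{\sigma''}(t{+}s,0)x|$; taking the supremum over $\sigma',s$ and enlarging it to a supremum over all of $\cS$ and all $\tau\ge 0$ yields $v(\Phi_\sigma(t,0)x)\le e^{-\lambda t}v(x)$, which is~\eqref{eq:LyapunovArbitraryNorm} after renaming $\lambda$ as $\kappa$.

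I would expect the main obstacle to be the bookkeeping in steps (ii) and (iii) ensuring that the signal manipulations stay inside $\cS$: the shift-invariance used to iterate the half-decay over successive windows of length $T$, and the concatenation used to turn a composition of two transition matrices into a single flow of a legitimate switching signal. These closure properties are precisely what make the bounds \emph{uniform} over $\sigma$, and hence what allow the single norm $v$ to work for all signals simultaneously; verifying that concatenations of piecewise-constant signals remain piecewise constant, while routine, is the linchpin of the construction.
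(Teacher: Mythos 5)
Your argument is correct and complete. Note, however, that the paper does not prove Lemma~\ref{lemma:ConverseLyap} at all: it is imported verbatim from the literature, with the proof deferred to \cite[Theorem~1]{MolPya89} and \cite[Theorem~III.1]{DayMat99}. What you have written is a sound, self-contained reconstruction of that classical argument: step (i) is the trivial direction via equivalence of norms; step (ii) is the standard upgrade from a uniform $\cKL$ bound to a uniform exponential bound, using homogeneity to pass to operator norms and the shift-invariance of $\cS$ to iterate a half-decay over windows of length $T$; and step (iii) is the extremal-norm (Barabanov-type) construction $v(x)=\sup_{\sigma\in\cS}\sup_{s\geq 0}e^{\lambda s}|\Phi_\sigma(s,0)x|$ with $0<\lambda<\kappa$, whose decay estimate rests on the closure of $\cS$ under concatenation. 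You correctly identify the two closure properties (shift and concatenation of piecewise-constant, right-continuous signals) as the linchpin of uniformity, and you verify the points that are easy to gloss over: finiteness of the supremum, the fact that a finite pointwise supremum of norms is a norm (hence convex and automatically continuous in finite dimension), and positive definiteness via the lower bound $v\geq|\cdot|$. The only cosmetic remark is that your construction delivers the decay rate $\lambda$ for any $\lambda<\kappa$ rather than $\kappa$ itself, which is exactly what the lemma requires (existence of \emph{some} positive rate), so nothing is lost.
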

The proof of this Lemma is provided in~\cite[Theorem 1]{MolPya89}, see also \cite[Theorem III.1]{DayMat99}.

It is usual in the switched systems setting to refine the analysis only focusing on \emph{subclasses} of $\cS$. One of the most common subclass is given by the set of dwell-time switching signals, introduced in the seminal paper~\cite{Mor96}. Formally, given a $\tau>0$, $\cS_{\dw}(\tau)$ denotes the class of \emph{dwell-time switching signals} defined by
\begin{equation}\label{eq:DwelltimeSignal}
    \cS_{\text{dw}}(\tau)\coloneqq \left\{\sigma\in \cS\;\vert\;t^\sigma_k -t^\sigma_{k-1}\geq \tau,\;\forall \;t^\sigma_k>0  \right\},
\end{equation}
where $\{t_k^\sigma\}$ denotes the set of time instants at which $\sigma$ is discontinuous, and by convention, $t^\sigma_0=0$ for all $\sigma\in \cS$. This class can be intuitively seen as the set of \virgolette{slow} switching signals, i.e., signals for which two distinct switching events cannot occur on time intervals smaller than the given threshold $\tau>0$.

As for the arbitrary switching signals case (Lemma~\ref{lemma:ConverseLyap}), the stability of~\eqref{eq:DeAffinedSystem} under any dwell time class $\cS_{\dw}(\tau)$ can also be characterized via a converse Lyapunov result, this time involving  \emph{multiple Lyapunov norms}, as recalled in the following statement.
\begin{lemma}[\cite{ChiGugSig19,Wir05}]\label{lemma:DwellTimeConverseLyapLinear}
Given $\tau>0$, the linearized system in~\eqref{eq:DeAffinedSystem} is UGAS on $\cS_{\dw}(\tau)$ if and only if there exist norms $v_1\dots v_M:\R^n\to \R_{\geq 0}$  and $\kappa>0$ such that
    \begin{align}
    v_i(e^{A_it}x)&\leq e^{-\kappa t} v_i(x),\;\;\;\;\forall x\in \R^n,\;\forall\,t\in \R_{\geq 0},\;\forall i\in \M.\label{eq:Cond1}\\
    v_i(e^{A_i\tau}x)&\leq e^{-\kappa \tau}v_j(x),\;\;\;\forall x\in \R^n,\;\forall (i,j)\in \M^2.\label{eq:Cond2}
    \end{align}
    In particular, for any $\tau>0$, UGAS of~\eqref{eq:DeAffinedSystem} on $\cS_\dw(\tau)$ is equivalent to \emph{exponential stability}.
\end{lemma}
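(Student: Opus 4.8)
The plan is to prove both implications with exponential stability as the working notion, reading \eqref{eq:Cond1}--\eqref{eq:Cond2} as a multiple-Lyapunov certificate: \eqref{eq:Cond1} is an \emph{intra-mode} contraction and \eqref{eq:Cond2} an \emph{across-switch} contraction over a full dwell period. For sufficiency ($\Leftarrow$), fix $\sigma\in\cS_{\dw}(\tau)$ with switching instants $0=t_0<t_1<\cdots$, let $p_k=\sigma(t_k)$ be the mode active on $[t_k,t_{k+1})$, put $\Delta_k=t_{k+1}-t_k\ge\tau$ and $q_k=\Phi_\sigma(t_k,0)x$. I would monitor the ``previous-mode'' value $W_k:=v_{p_{k-1}}(q_k)$. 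Writing $\Delta_k=(\Delta_k-\tau)+\tau$ and factoring $q_{k+1}=e^{A_{p_k}(\Delta_k-\tau)}\big(e^{A_{p_k}\tau}q_k\big)$, apply \eqref{eq:Cond1} to the excess time and then \eqref{eq:Cond2} to the terminal $\tau$-block: $W_{k+1}=v_{p_k}(q_{k+1})\le e^{-\kappa(\Delta_k-\tau)}v_{p_k}(e^{A_{p_k}\tau}q_k)\le e^{-\kappa\Delta_k}v_{p_{k-1}}(q_k)=e^{-\kappa\Delta_k}W_k$. Telescoping, and treating the first interval with \eqref{eq:Cond1} alone, yields $W_k\le e^{-\kappa t_k}v_{p_0}(x)$. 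Since the finitely many $v_i$ are equivalent to $|\cdot|$, say $c|z|\le v_i(z)\le C|z|$, I convert $W_k$ back to the Euclidean norm and use \eqref{eq:Cond1} once more on the fractional interval $[t_k,t)$ to get $|\Phi_\sigma(t,0)x|\le (C^2/c^2)\,e^{-\kappa t}|x|$, uniformly in $\sigma\in\cS_{\dw}(\tau)$; this is exponential stability, hence UGAS.

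For necessity ($\Rightarrow$), assume UGAS on $\cS_{\dw}(\tau)$. First I would upgrade the $\cKL$ bound to a uniform exponential one: by homogeneity $|\Phi_\sigma(t,0)x|\le g(t)|x|$ with $g(t)=\beta(1,t)\to0$, and composing trajectories \emph{only at switching instants} (where the shifted signal again lies in $\cS_{\dw}(\tau)$) gives a sub-multiplicative estimate for $\sup_{\sigma}\|\Phi_\sigma(t,0)\|$, whence $\sup_{\sigma}\|\Phi_\sigma(t,0)\|\le M_0 e^{-\lambda t}$, the finite mode family bounding the leftover sub-$\tau$ tails. The same holds for the adjoint system $\dot z=A_{\sigma(t)}^\top z$, since its transition norms equal those of the original along time-reversed dwell sequences, so it too is UGAS on $\cS_{\dw}(\tau)$ at rate $\lambda$. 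Fixing $0<\kappa<\lambda$, let $\widehat\Phi_\sigma$ be the adjoint transition matrix and define the future-dwell suprema $w_i(z):=\sup\{e^{\kappa t}\,|\widehat\Phi_\sigma(t,0)z| : \sigma\in\cS_{\dw}(\tau),\,\sigma(0)=i,\,t\ge0\}$. The margin $\lambda-\kappa>0$ makes each $w_i$ finite with $|z|\le w_i(z)\le M_0|z|$, and as a supremum of norms $w_i$ is a norm. Prepending a mode-$i$ block of length $t$ (respectively a forced length-$\tau$ block) to a competitor keeps the signal in $\cS_{\dw}(\tau)$ and, via the cocycle identity, gives $w_i(e^{A_i^\top t}z)\le e^{-\kappa t}w_i(z)$ and $w_j(e^{A_i^\top\tau}z)\le e^{-\kappa\tau}w_i(z)$ for all $(i,j)$.

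It remains to set $v_i:=w_i^\ast$, the dual norms. Duality is order-reversing and satisfies $(w\circ T)^\ast=w^\ast\circ T^{-\top}$ for invertible $T$; dualizing $w_i\circ e^{A_i^\top t}\le e^{-\kappa t}w_i$ gives $v_i(e^{A_i t}x)\le e^{-\kappa t}v_i(x)$, i.e.\ \eqref{eq:Cond1}, while dualizing $w_j\circ e^{A_i^\top\tau}\le e^{-\kappa\tau}w_i$ gives $v_i(e^{A_i\tau}x)\le e^{-\kappa\tau}v_j(x)$, i.e.\ exactly \eqref{eq:Cond2} with the stated orientation. The ``in particular'' claim then follows by combining the two directions: \eqref{eq:Cond1}--\eqref{eq:Cond2} force the exponential estimate of the sufficiency step, so UGAS on $\cS_{\dw}(\tau)$ coincides with exponential stability.

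The main obstacle is the necessity direction, on two counts. Technically, unlike the arbitrary-switching case, a time-shift $\sigma(\cdot+s)$ of a dwell signal need not belong to $\cS_{\dw}(\tau)$ (its first interval may be shorter than $\tau$), so the sub-multiplicativity behind the exponential upgrade must be invoked at switching instants, the sub-$\tau$ tails absorbed using the finitely many matrices $\{e^{A_it}:0\le t\le\tau\}$. Conceptually, the crux is that the naturally available switching inequality has the ``wrong'' orientation: prepending a dwell block to future trajectories produces $w_j(e^{A_i^\top\tau}z)\le e^{-\kappa\tau}w_i(z)$, and it is precisely the passage to the adjoint system together with the order-reversing duality that converts this into the stated \eqref{eq:Cond2}. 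That mode-dependence and a single common rate $\kappa$ are both unavoidable is explained by the fact that a common norm decreasing along every mode for all times would certify stability under \emph{arbitrary} switching, a strictly stronger property.
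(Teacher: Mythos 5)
The paper does not actually prove this lemma: it is stated as a citation to \cite{ChiGugSig19} and \cite[Corollary 6.5]{Wir05}, so there is no in-paper argument to compare against. Judged on its own, your proof is correct and captures the two genuinely delicate points. The sufficiency direction is the standard one: the splitting $\Delta_k=(\Delta_k-\tau)+\tau$, the monitored quantity $W_k=v_{p_{k-1}}(q_k)$, and the telescoped bound $W_k\le e^{-\kappa t_k}v_{p_0}(x)$ are all sound (note only that on the fractional interval $[t_k,t)$ you flow in mode $p_k$ while $W_k$ is measured in $v_{p_{k-1}}$, so one norm-equivalence conversion is consumed there; your constant $C^2/c^2$ is consistent with this). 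The necessity direction is where a naive attempt fails, and you correctly diagnose why: the supremum $\sup_{\sigma(0)=i,\,t\ge0}e^{\kappa t}|\Phi_\sigma(t,0)x|$ built on the \emph{forward} system yields the switching inequality with the indices in the wrong order, and passing to the adjoint system plus order-reversing duality, $(w\circ T)^\ast=w^\ast\circ T^{-\top}$, is a legitimate way to flip it; I checked that dualizing $w_j\circ e^{A_i^\top\tau}\le e^{-\kappa\tau}w_i$ and substituting $x=e^{A_i\tau}y$ gives exactly \eqref{eq:Cond2} with the stated orientation. The prepending arguments respect the dwell-time constraint because the paper's definition forces the first interval of any $\sigma\in\cS_{\dw}(\tau)$ to have length at least $\tau$, so gluing a mode-$i$ block of length $t\ge0$ (same mode) or exactly $\tau$ (different mode) in front keeps the signal in $\cS_{\dw}(\tau)$.

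Two steps are stated rather than proved and deserve a sentence each if this were to be written out in full. First, the upgrade from the $\cKL$ bound to $\sup_\sigma\|\Phi_\sigma(t,0)\|\le M_0e^{-\lambda t}$: your fix (concatenate at switching instants, absorb the sub-$\tau$ tail by $\max_{i\in\M}\sup_{0\le r\le\tau}\|e^{A_ir}\|$) is the right one, but it is the load-bearing step and should be spelled out. Second, the claim that the adjoint system is UGAS on $\cS_{\dw}(\tau)$ at the same rate: the time-reversal identity $\widehat\Phi_\sigma(T,0)^\top=\Phi_{\sigma^{\mathrm{rev}}}(T,0)$ is correct, but the reversed signal's \emph{first} interval is the original signal's final partial interval, which may be shorter than $\tau$; this is handled by the same tail-absorption device, at the cost of a constant $e^{\lambda\tau}\max_i\sup_{0\le r\le\tau}\|e^{A_ir}\|$, not of the rate. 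Neither point is a gap in the idea, only in the write-up.
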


The proof of this Lemma can be found in  \cite{ChiGugSig19} and  \cite[Corollary 6.5]{Wir05}. 

The supremum over $\kappa\in \R_{>0}$ for which the norm(s) as in Lemma~\ref{lemma:ConverseLyap} and Lemma~\ref{lemma:DwellTimeConverseLyapLinear} can be found represents the best exponential decay rate for system~\eqref{eq:DeAffinedSystem} on $\cS$ and $\cS_\dw(\tau)$, respectively; its opposite $-\kappa$ is also called the \emph{(maximal)  Lyapunov exponent} of~\eqref{eq:DeAffinedSystem} on $\cS$ and $\cS_{\dw}(\tau)$, respectively. We decided to keep this sign convention (considering the best decay rate instead of the Lyapunov exponent) for notational simplicity. The interested reader can find further discussion concerning the computation of the Lyapunov exponent for example in~\cite{ShorWirMas07}, \cite{ChiGugSig19} and references therein.

\section{Arbitrary Switching}\label{sec:Arbitrary}
In this section we first study the behavior of the switched affine system~\eqref{eq:Switchingsystems} under arbitrary switching rules, i.e., considering switching signals in the class $\cS$ given in~\eqref{eq:arbitrarySwitching}. Since, in general, the subsystems of~\eqref{eq:Switchingsystems} do not share a common equilibrium, we analyze asymptotic properties with respect to \emph{sets}. More specifically, under the hypothesis that the linearized system~\eqref{eq:DeAffinedSystem} is UGAS on $\cS$ we provide a proof of existence (and  numerical approximations) of the minimal forward invariant set. 
We first recall some definitions, characterizing properties of \emph{sets} with respect to solutions of~\eqref{eq:Switchingsystems}. 
\begin{defn}\label{defn:MainDefinitions}
Given any {subset of switching signals} $\wt \cS\subseteq \cS$ and a compact set $C\subset \R^n$, we say that:
\begin{enumerate}[leftmargin=0.8cm]
    \item $C$ is \emph{forward invariant} for~\eqref{eq:Switchingsystems} on $\wt \cS$ if, for all $x\in C$, all $\sigma\in \wt \cS$, and all $t\in \R_{\geq 0}$, it holds that
    \[
    \Psi_\sigma(t,x)\in C.
    \]
    \item A forward invariant set $C$  is \emph{minimal} if, for every forward invariant set $D$, it holds that $C\subseteq D$.
    \item $C$ is \emph{attractive} for~\eqref{eq:Switchingsystems} on $\wt \cS$ if for every $x\in \R^n$ and every $\sigma\in\wt  \cS$, we have
    \[
  |\Psi_\sigma(t,x)|_C\to 0,\,\,\,\,\text{as } \,t\to +\infty,
    \]
     where $|z|_C\coloneqq \min_{y\in C}\{|x-y|\}$ denotes the distance between a point $z\in \R^n$ and the compact set $C$,  with respect to the Euclidean norm.
    \item  $C$ is  \emph{$\cKL$-stable} for~\eqref{eq:Switchingsystems} on $\wt \cS$ if there exists a $\cKL$ function $\beta$ such that, for all $x\in \R^n$, all $\sigma\in \wt \cS$ and all $t\in \R_{\geq 0}$, it holds that
\[
|\Psi_\sigma(t,x)|_{C}\leq \beta(|x|_C,t).
\]
\end{enumerate}
\end{defn}
If a minimal forward invariant set $C$ exists,  minimality ensures its uniqueness. Moreover, it is easy to see that $\cKL$-stability implies forward invariance and attractiveness, but it is a stronger property in general (for further discussions see \cite{TeelPraly00}).
 In the following statement we prove a crucial property of forward invariant sets.
\begin{lemma}\label{lemma:convexity}
For any $t\in \R_{\geq 0}$, any $x,y\in \R^n$, any $\sigma\in \cS$, and any $\lambda\in[0,1]$ we have
\[
\Psi_\sigma(t, \lambda x+(1-\lambda)y)=\lambda \Psi_\sigma(t,x)+(1-\lambda)\Psi_\sigma(t,y).
\]
In particular given any set of signals $\wt \cS\subseteq \cS$, if $C\subset\R^n$ is forward invariant for~\eqref{eq:Switchingsystems} on $\wt S$, so is $\co\{C\}$.
\end{lemma}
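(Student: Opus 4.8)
The key observation is the explicit solution formula~\eqref{eq:Solution}, which expresses $\Psi_\sigma(t,x)$ as an affine function of the initial condition $x$: it is the sum of a \emph{linear} term $\Phi_\sigma(t,0)x$ (depending linearly on $x$) and an \emph{integral} term that does \emph{not} depend on $x$ at all. The plan is to exploit this affine structure directly to establish the claimed identity, and then to deduce the invariance of the convex hull as a corollary.

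First I would fix $t\in \R_{\geq 0}$, $\sigma\in\cS$, $x,y\in \R^n$, and $\lambda\in[0,1]$, and simply substitute $\lambda x + (1-\lambda)y$ into~\eqref{eq:Solution}. Using linearity of the matrix-vector product $\Phi_\sigma(t,0)\bigl(\lambda x+(1-\lambda)y\bigr)=\lambda\,\Phi_\sigma(t,0)x+(1-\lambda)\,\Phi_\sigma(t,0)y$, and noting that the integral term $\int_0^t\Phi_\sigma(t,s)b_{\sigma(s)}\,ds$ is a constant vector independent of the initial condition, I would write this constant as $\lambda + (1-\lambda) = 1$ times itself, i.e. split it as $\lambda\bigl(\int_0^t\Phi_\sigma(t,s)b_{\sigma(s)}\,ds\bigr)+(1-\lambda)\bigl(\int_0^t\Phi_\sigma(t,s)b_{\sigma(s)}\,ds\bigr)$. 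Regrouping the linear and integral contributions associated with $x$ and with $y$ then reconstructs precisely $\lambda\,\Psi_\sigma(t,x)+(1-\lambda)\,\Psi_\sigma(t,y)$, which is the desired equality. This step is entirely mechanical once the solution formula is in hand.

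For the second assertion, I would argue that forward invariance of $C$ on $\wt\cS$ is preserved under taking convex hulls. Let $z\in\co\{C\}$; by definition $z=\sum_{k} \lambda_k z_k$ for finitely many $z_k\in C$ and convex coefficients $\lambda_k\geq 0$ with $\sum_k\lambda_k=1$. The affine identity just proved extends by a routine induction from two points to any finite convex combination, giving $\Psi_\sigma(t,z)=\sum_k\lambda_k\,\Psi_\sigma(t,z_k)$ for every $\sigma\in\wt\cS$ and every $t\in\R_{\geq 0}$. Since $C$ is forward invariant, each $\Psi_\sigma(t,z_k)$ lies in $C$, and hence their convex combination $\Psi_\sigma(t,z)$ lies in $\co\{C\}$. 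As $z\in\co\{C\}$, $\sigma\in\wt\cS$ and $t\in\R_{\geq 0}$ were arbitrary, this shows $\co\{C\}$ is forward invariant.

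I do not anticipate a genuine obstacle here: the whole result is a direct consequence of the affine dependence of the flow on the initial state, which is already encoded in~\eqref{eq:Solution}. The only points requiring minor care are the extension of the two-point identity to arbitrary finite convex combinations (a straightforward induction, or one can invoke the definition of $\co$ and the affinity of $x\mapsto\Psi_\sigma(t,x)$ directly) and the implicit observation that $\co\{C\}$ is again compact so that it qualifies as a candidate set in the sense of Definition~\ref{defn:MainDefinitions}; compactness of the convex hull of a compact set in $\R^n$ is classical (Carathéodory's theorem).
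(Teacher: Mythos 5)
Your proposal is correct and follows essentially the same route as the paper: both prove the affine identity by substituting into the variation-of-constants formula~\eqref{eq:Solution} and splitting the $x$-independent integral term as $\lambda+(1-\lambda)$ times itself, then deduce invariance of the convex hull. If anything, your treatment of the second part is slightly more complete than the paper's, which only verifies the claim for two-point convex combinations, whereas you extend by induction to arbitrary finite convex combinations and note compactness of $\co\{C\}$.
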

\begin{proof}
{For the first part of the statement, using~\eqref{eq:Solution}, we compute:}
\[
\begin{aligned}
\Psi_\sigma(t, \lambda x+(1-\lambda)y)&=\Phi_\sigma(t,0)(\lambda x+(1-\lambda)y)+\int_0^t\Phi_\sigma(t,s)b_{\sigma(s)}\,ds\\&=
\lambda\left (  \Phi_\sigma(t,0)x+\int_0^t\Phi_\sigma(t,s)b_{\sigma(s)}\,ds\right )+&\\&\quad+(1-\lambda)\left (  \Phi_\sigma(t,0)y+\int_0^t\Phi_\sigma(t,s)b_{\sigma(s)}\,ds\right )\\&=\lambda \Psi_\sigma(t,x)+(1-\lambda)\Psi_\sigma(t,y).
\end{aligned}
\]
For the second part, consider any $\wt \cS\subseteq \cS$ and suppose that $C$ is a forward invariant set for~\eqref{eq:Switchingsystems} on $\wt S$. Consider any $x,y\in C$, any $\sigma\in \wt \cS$, any $\lambda\in [0,1]$ and any $t\in \R_{\geq 0}$, then
\[
\Psi_\sigma(t, \lambda x+(1-\lambda)y)=\lambda \Psi_\sigma(t,x)+(1-\lambda)\Psi_\sigma(t,y)\in \co\{C\},
\]
since by forward invariance of $C$, we have $\Psi_\sigma(t,x), \Psi_\sigma(t,y)\in C$.
\end{proof}
\subsection{Minimal Forward Invariant and Attractive Set}
{
In this subsection, we prove the existence of minimal forward invariant and attractive sets for~\eqref{eq:Switchingsystems} on $\cS$, i.e., under arbitrary switching sequences. Moreover, we also prove that this set is $\cKL$-stable. For that, we first need a \virgolette{theoretic outer bound}, ensuring that, under some hypotheses, forward invariant sets do exist. }
\begin{prop}\label{prop:Kvm}
Consider system~\eqref{eq:Switchingsystems}, and suppose that the linearized system  in~\eqref{eq:DeAffinedSystem} is UGAS on $\cS$. Consider a norm $v:\R^n\to \R_{\geq 0}$ and a scalar $\kappa>0$ as in Lemma \ref{lemma:ConverseLyap}. Then there exists $R>0$ such that the compact set
\begin{equation}\label{eq:InvariantLevelSetK}
    \cK_{v,R}\coloneqq \{x\in \R^n\,\,\vert\,\,v(x)\leq R\},
\end{equation}
is forward invariant for \eqref{eq:Switchingsystems} on $\cS$.
\end{prop}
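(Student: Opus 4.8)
The plan is to estimate $v(\Psi_\sigma(t,x))$ directly from the variation-of-constants formula~\eqref{eq:Solution} and to exploit the contraction property~\eqref{eq:LyapunovArbitraryNorm} supplied by Lemma~\ref{lemma:ConverseLyap}. First I would apply the triangle inequality for the norm $v$ to~\eqref{eq:Solution}, splitting the bound into a \emph{homogeneous} contribution $v(\Phi_\sigma(t,0)x)$ and a \emph{forced} contribution $v\bigl(\int_0^t\Phi_\sigma(t,s)b_{\sigma(s)}\,ds\bigr)$. The homogeneous term is controlled immediately by~\eqref{eq:LyapunovArbitraryNorm}, giving $v(\Phi_\sigma(t,0)x)\leq e^{-\kappa t}v(x)$. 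The whole argument then reduces to a uniform-in-$\sigma$ bound on the forced term.

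For the forced term I would bring $v$ inside the integral using the integral form of the triangle inequality (valid since $v$ is a norm, hence convex and positively homogeneous), yielding $v\bigl(\int_0^t\Phi_\sigma(t,s)b_{\sigma(s)}\,ds\bigr)\leq\int_0^t v\bigl(\Phi_\sigma(t,s)b_{\sigma(s)}\bigr)\,ds$. The crucial observation is that the class $\cS$ of arbitrary switching signals is \emph{shift-invariant}: for $s\geq 0$ the signal $\sigma_s\coloneqq\sigma(\,\cdot\,+s)$ again belongs to $\cS$, and the transition matrix satisfies $\Phi_\sigma(t,s)=\Phi_{\sigma_s}(t-s,0)$. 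Applying~\eqref{eq:LyapunovArbitraryNorm} to $\sigma_s$ therefore gives the contraction estimate $v\bigl(\Phi_\sigma(t,s)y\bigr)\leq e^{-\kappa(t-s)}v(y)$ for all $y\in\R^n$ and $t\geq s$. Setting $b_{\max}\coloneqq\max_{i\in\M}v(b_i)$, this produces the pointwise bound $v\bigl(\Phi_\sigma(t,s)b_{\sigma(s)}\bigr)\leq e^{-\kappa(t-s)}b_{\max}$, uniformly over $\sigma\in\cS$.

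Integrating and combining the two contributions, I would obtain the key inequality
\[
v(\Psi_\sigma(t,x))\leq e^{-\kappa t}v(x)+b_{\max}\int_0^t e^{-\kappa(t-s)}\,ds
=e^{-\kappa t}v(x)+\frac{b_{\max}}{\kappa}\bigl(1-e^{-\kappa t}\bigr),
\]
valid for every $x\in\R^n$, every $\sigma\in\cS$ and every $t\in\R_{\geq 0}$. The right-hand side is a convex combination of $v(x)$ and $b_{\max}/\kappa$ with weights $e^{-\kappa t}$ and $1-e^{-\kappa t}$. Choosing $R\coloneqq b_{\max}/\kappa$ (any larger value works equally well), for $x\in\cK_{v,R}$, i.e.\ $v(x)\leq R$, the bound collapses to $v(\Psi_\sigma(t,x))\leq e^{-\kappa t}R+(1-e^{-\kappa t})R=R$, so $\Psi_\sigma(t,x)\in\cK_{v,R}$ for all $t$ and all $\sigma\in\cS$. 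This is exactly forward invariance of the compact sublevel set $\cK_{v,R}$.

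The main obstacle, and the only place where genuine care is needed, is the uniform contraction bound on $\Phi_\sigma(t,s)$ for $s>0$: Lemma~\ref{lemma:ConverseLyap} only provides the estimate for the transition matrix $\Phi_\sigma(t,0)$ starting at time $0$, so the extension to arbitrary initial times $s$ relies essentially on the shift-invariance of $\cS$, which is precisely the feature of \emph{arbitrary} switching being used here. The interchange of $v$ with the integral is routine given that $v$ is a norm, and the boundedness $v(b_{\sigma(s)})\leq b_{\max}$ follows from the finiteness of the index set $\M$; these I would state but not belabor.
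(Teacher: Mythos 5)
Your proposal is correct and follows essentially the same route as the paper's proof: the variation-of-constants decomposition, the integral triangle inequality, the shift-invariance argument giving $v(\Phi_\sigma(t,s)y)\leq e^{-\kappa(t-s)}v(y)$, and the choice $R=b_{\max}/\kappa$. Your final step (reading the bound as a convex combination of $v(x)$ and $R$) is a marginally more direct way to conclude forward invariance than the paper's phrasing, but it is the same argument.
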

\begin{proof}
Define $B_{max}\coloneqq \max_{i\in \M}\{v(b_i)\}$.
For any $x\in \R^n$, any $\sigma\in \cS$ and any $t\in \R_{\geq 0}$, compute
\[
\begin{aligned}
v(\Psi_\sigma(t,x))&=v\left( \Phi_\sigma(t,0)x+\int_0^t \Phi_\sigma(t,s)b_\sigma(s)\,ds\right )\\
&\leq v( \Phi_\sigma(t,0)x)+v\left(\int_0^t \Phi_\sigma(t,s)b_\sigma(s)\,ds\right )\\
&\leq e^{-\kappa t}v(x)+\int_0^t v (\Phi_\sigma(t,s)b_\sigma(s))\,ds\\
&\leq e^{-\kappa t}v(x)+\int_0^t e^{-\kappa(t-s)} v (b_\sigma(s))\,ds\\
&\leq e^{-\kappa t}v(x)+B_{max}\int_0^t e^{-\kappa(t-s)} \,ds\\
&=e^{-\kappa t}v(x)+B_{max}\frac{(1-e^{-\kappa t})}{\kappa},
\end{aligned}
\]
where we used~\eqref{eq:LyapunovArbitraryNorm} and the fact that, for any $\sigma\in \cS$ and any $t,s\in \R_{\geq 0}$ it holds that $\Phi_\sigma(t,s)=\Phi_{\wt \sigma}(t-s,0)$
with $\wt \sigma\in \cS$ defined by $\wt \sigma(t')=\sigma(t'+s)$.
It thus holds that if $x\in \R^n$ is such that $v(x)\geq \frac{B_{max}}{\kappa}=:R$, then $v(\Psi_\sigma(t,x))\leq v(x),$ for all $t\in \R_{\geq 0}$, concluding the proof.
\end{proof}
This statement provides a first outer bound for the minimal forward invariant set (if it exists). In the following, we show that such a set does exist. {To this aim, we introduce the set-valued map $\cK:\R_{\geq 0}\rightrightarrows \R^{n}$ defined by
\begin{equation}\label{eq:K(t)}
\begin{aligned}
\cK(t)\coloneqq \bigcup_{\sigma \in \cS}\left \{\int_0^t\Phi_\sigma(t,s)b_{\sigma(s)}\,ds\;\right \}.
\end{aligned}
\end{equation}
Equivalently, for every $t\in \R_{\geq 0}$, the set $\cK(t)$ represents the \emph{reachable set} of~\eqref{eq:Switchingsystems} at time $t$ starting at $0$, i.e.
\[
\cK(t)=\{\Psi_\sigma(t,0)\;\;\vert\;\;\sigma\in \cS\}.
\]
In what follows, studying the properties of the set-valued map $\cK:\R_{\geq 0}\rightrightarrows \R^{n}$, we prove the existence of the minimal forward invariant set.}

\begin{thm}\label{lemma:K_inftyDef}
Let us consider $\cF\coloneqq \{ (A_i, b_i)_{i\in \M}\,\vert A_i\in \R^{n \times n}, b_i\in \R^n\}$ defining a switched affine system as in \eqref{eq:Switchingsystems}. If the linearized system \eqref{eq:DeAffinedSystem} is UGAS on $\cS$, the limit
\begin{equation}
    \cK_\infty=\lim_{t\to \infty}\cK(t)\label{eq:K_inf_def}
\end{equation}
(in the Hausdorff metric on compact sets) of the sequence in~\eqref{eq:K(t)} is well-defined and equal to the minimal forward invariant set of \eqref{eq:Switchingsystems} on $\cS$.
\end{thm}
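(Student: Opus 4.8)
The plan is to work in the complete metric space of nonempty compact subsets of the fixed compact set $\cK_{v,R}$ from Proposition~\ref{prop:Kvm}, equipped with the Hausdorff metric. Indeed, the computation in the proof of Proposition~\ref{prop:Kvm} applied with $x=0$ gives $v(\Psi_\sigma(t,0))\le \frac{B_{max}}{\kappa}(1-e^{-\kappa t})\le R$, so $\cK(t)\subseteq \cK_{v,R}$ for every $t$. Since the Hausdorff distance is insensitive to taking closures, I will phrase everything in terms of $\overline{\cK(t)}$, which is compact; by the density of piecewise-constant solutions in the solutions of the inclusion~\eqref{eq:RegDiffInc} together with Lemma~\ref{lemma:convexity}, each $\overline{\cK(t)}$ is in fact compact and convex, although only compactness is needed below. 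The strategy has three parts: show that the net $\{\cK(t)\}_{t\ge 0}$ is Cauchy as $t\to\infty$ (so that $\cK_\infty$ is well defined by completeness), then prove that $\cK_\infty$ is forward invariant, and finally that it is contained in every nonempty forward invariant set.

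The core estimate rests on a cocycle decomposition of the flow. Splitting the integral in~\eqref{eq:Solution} at time $t$ and using $\Phi_\sigma(t+\delta,s)=\Phi_{\tilde\sigma}(t+\delta-s,0)$ as in the proof of Proposition~\ref{prop:Kvm}, one obtains, for the shift $\tilde\sigma(\cdot)=\sigma(\cdot+t)$,
\[
\Psi_\sigma(t+\delta,0)=\Psi_{\tilde\sigma}\bigl(\delta,\Psi_\sigma(t,0)\bigr),
\]
so that $\cK(t+\delta)$ is exactly the set reachable from $\cK(t)$ in time $\delta$. To bound the Hausdorff distance between $\cK(t)$ and $\cK(t+\delta)$ I will prove two inclusions. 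For the first, given $p'\in\cK(t+\delta)$ I drop the contribution of the initial interval $[0,\delta]$: writing $p'=\int_0^{t+\delta}\Phi_{\sigma'}(t+\delta,s)b_{\sigma'(s)}\,ds$ and substituting $r=s-\delta$ on $[\delta,t+\delta]$ identifies that portion with a point $\Psi_{\hat\sigma}(t,0)\in\cK(t)$, while the remaining $[0,\delta]$ integral has $v$-norm at most $\frac{B_{max}}{\kappa}e^{-\kappa t}$ by the decay~\eqref{eq:LyapunovArbitraryNorm}. For the reverse inclusion, given $p\in\cK(t)$ I prepend a dummy mode on $[0,\delta]$ and run the shifted signal afterwards; the same estimate shows that the resulting reachable point of $\cK(t+\delta)$ differs from $p$ by at most $\frac{B_{max}}{\kappa}e^{-\kappa t}$ in $v$-norm. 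Hence $d_H(\cK(t),\cK(t+\delta))\le \frac{B_{max}}{\kappa}e^{-\kappa t}$ uniformly in $\delta\ge0$, which is the Cauchy property; this estimate, with its two shift arguments, is the main obstacle of the proof.

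Forward invariance of $\cK_\infty$ follows from the concatenation of switching signals. Given $p\in\cK_\infty$, pick $p_\tau\in\cK(\tau)$ with $p_\tau\to p$; for any $\sigma\in\cS$ and $t\ge0$, concatenating the signal realizing $p_\tau$ with the shifted $\sigma$ shows $\Psi_\sigma(t,p_\tau)\in\cK(\tau+t)$. Since the flow is a $v$-contraction in its initial condition (by~\eqref{eq:LyapunovArbitraryNorm}, $v(\Phi_\sigma(t,0)(x-y))\le v(x-y)$), we have $\Psi_\sigma(t,p_\tau)\to\Psi_\sigma(t,p)$, while $\cK(\tau+t)\to\cK_\infty$; as $\cK_\infty$ is closed, this forces $\Psi_\sigma(t,p)\in\cK_\infty$.

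For minimality, let $D$ be any nonempty compact forward invariant set and fix $x_0\in D$. From~\eqref{eq:Solution} one has the affine identity $\Psi_\sigma(t,0)=\Psi_\sigma(t,x_0)-\Phi_\sigma(t,0)x_0$; here $\Psi_\sigma(t,x_0)\in D$ by forward invariance, while $v(\Phi_\sigma(t,0)x_0)\le e^{-\kappa t}v(x_0)\to0$. Thus every point of $\cK(t)$ lies within $v$-distance $e^{-\kappa t}v(x_0)$ of $D$, and letting $t\to\infty$ yields $\cK_\infty\subseteq D$ since $D$ is closed. Together with the previous paragraph, this identifies $\cK_\infty$ as the unique minimal forward invariant set, completing the plan.
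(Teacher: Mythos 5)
Your proof is correct, but it reaches the conclusion by a genuinely different route than the paper in two of the three parts. For well-definedness, the paper first translates coordinates so that $b_1=0$ (using invertibility of the Hurwitz matrix $A_1$), deduces the monotonicity $\cK(t)\subseteq\cK(\bar t)$ for $\bar t\ge t$ by prepending mode $1$, and obtains $\cK_\infty=\overline{\bigcup_{t\ge0}\cK(t)}$ as the limit of an increasing, uniformly bounded family; you instead prove the quantitative two-sided estimate $d_H(\cK(t),\cK(t+\delta))\le \frac{B_{max}}{\kappa}e^{-\kappa t}$ uniformly in $\delta$ by splitting the integral in~\eqref{eq:Solution} at $s=\delta$ and shifting the signal, and conclude by completeness of the hyperspace of compact subsets of $\cK_{v,R}$. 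This avoids both the translation and the monotonicity argument, and yields as a by-product an explicit exponential rate $d_H(\cK(t),\cK_\infty)\le\frac{B_{max}}{\kappa}e^{-\kappa t}$ that the paper's argument does not make visible. The forward-invariance step is essentially the paper's (concatenation of signals, then passage to the closure), except that you replace the citation of continuous dependence on initial conditions by the sharper observation that $\Psi_\sigma(t,\cdot)$ is a $v$-contraction. For minimality, the paper again exploits $b_1=0$: it runs the constant signal $\sigma\equiv1$ to show $0\in C$ for any compact forward invariant $C$, and then uses that $\cK_\infty$ is the closure of the reachable set from $0$; your identity $\Psi_\sigma(t,0)=\Psi_\sigma(t,x_0)-\Phi_\sigma(t,0)x_0$ for an arbitrary $x_0\in D$ shows directly that $\cK(t)$ lies in an exponentially shrinking neighbourhood of $D$, which is slicker and again dispenses with the normalization.

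One genuine (though inessential, since you explicitly do not use it) error: the parenthetical claim that each $\overline{\cK(t)}$ is \emph{convex} is false. Lemma~\ref{lemma:convexity} gives affinity of $x\mapsto\Psi_\sigma(t,x)$ for a \emph{fixed} $\sigma$, and neither it nor the density of switched solutions in the relaxed inclusion~\eqref{eq:RegDiffInc} makes the union over $\sigma$ of the points $\Psi_\sigma(t,0)$ convex; the relaxed inclusion here has switching acting on the matrices $A_i$, not only on an additive input, so Aumann/Lyapunov-type convexity of reachable sets does not apply. Indeed, if every $\overline{\cK(t)}$ were convex, the Hausdorff limit $\cK_\infty$ would be convex as well, contradicting the paper's Example~\ref{ex:FirstExample}. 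Strike that sentence and the rest stands.
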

\begin{proof}
Without loss of generality, we suppose $b_1=0$; indeed, the general case is reduced to this framework by applying the translation $T:\R^{n}\to \R^{n}$ defined as $T(x)\coloneqq  x+A_1^{-1}b_1$, {where $A_1^{-1}$ exists since, by hypothesis, $A_1$ is a Hurwitz matrix. \\
\emph{Well-defineteness:} Firstly, we prove that, for all $\overline t \geq t \geq 0$, we have
$\cK(t)\subseteq \cK(\overline t)$.}
Considering any $x\in \cK(t)$, by definition, there exists a $\sigma\in \cS$, such that $x=\Psi_\sigma(t,0)$.
Now define $\wt \sigma \in \cS$ by
\[
\wt \sigma(s)\coloneqq \begin{cases}
1, \,\,\,\,\,\,\,&\text{if } s <\overline t-t,\\
\sigma(s-\overline t+t),\,\,&\text{if } s \geq \overline t-t.
\end{cases}
\]
Since $b_1=0$, it is clear that $\Psi_{\wt \sigma}(\overline t-t,0)=0$, and thus $x=\Psi_{\wt \sigma}(\overline t,0)\in \cK(\overline t)$, proving that 
$\cK(t)\subseteq \cK(\overline t)$.
By Proposition \ref{prop:Kvm}, for every $t\geq 0$, the reachable sets $\cK(t)$ are uniformly bounded { since they are included in the compact set $\cK_{v,R}$ defined in~\eqref{eq:InvariantLevelSetK}}, recalling that $0\in \cK_{v,R}$ and $\cK_{v,R}$ is forward invariant.  {Thus, the set 
\[
\cK_\infty=\overline{\bigcup_{t\geq 0}\cK(t)}
\]
 is a well-defined compact set, being the limit of an increasing sequence of compact sets, see \cite[Chapter 4.B]{RockWets98}.}\\
\emph{Forward invariance:} Consider first $x\in \bigcup_{t\geq 0}\cK(t)$ and consider any $\sigma \in \cS$, we prove that, for any $t_0\geq 0$, $\Psi_\sigma(t_0,x)\in \bigcup_{t\geq 0}\cK(t)$. Since $x\in \bigcup_{t\geq 0}\cK(t)$ there exists a $T\geq 0$ and a $\sigma_1\in \cS$ such that $x=\Psi_{\sigma_1}(T,0)$.
Defining $\wt \sigma\in \cS$ by
\[
\wt \sigma(s)\coloneqq \begin{cases}
\sigma_1(s)\,\,\,\,&\text{if }s< T,\\
\sigma(s-T)\,\,\,&\text{if } s\geq T,
\end{cases}
\]
we have $\Psi_\sigma(t_0,x)=\Psi_{\wt \sigma}(T+t_0,0)\in \cK(T+t_0)\subseteq \bigcup_{t\geq 0}\cK(t)$. Now, for the limit case where $x\in \cK_\infty=\overline{\bigcup_{t\geq 0}\cK(t)}$, for any $\sigma\in \cS$ and any $t_0\geq 0$, we want to prove that $\Psi_\sigma(t_0,x)\in \cK_\infty$. Consider a sequence $(x_k)_{k\in \N}$ such that $x_k\to x$ as $k\to \infty$ and $x_k\in \bigcup_{t\geq 0}\cK(t)$ for all $k\in \N$. We have already proven  that $\Psi_\sigma(t_0,x_k)\in \bigcup_{t\geq 0}\cK(t)$, for all $k\in \N$. Now by continuity from initial conditions (see \cite[Theorem 3.4]{khalil2002nonlinear}) we have
\[
\Psi_\sigma(t_0,x)=\lim_{k\to \infty}\Psi_\sigma(t_0,x_k),
\]
and thus $\Psi_\sigma(t_0,x)\in \overline{\bigcup_{t\geq 0}\cK(t)}=\cK_\infty$, concluding the proof.\\
\emph{Minimality:} Consider a  compact forward invariant set $C\subset \R^n$,  any initial condition $x\in C$, and the constant switching signal $\sigma(s)= 1$, for all $s\in \R_{\geq 0}$. Consider any strictly increasing sequence $(t_k)_{k\in \N}$ such that $t_k>0$ for all $k\in \N$ and $t_k\to +\infty$ as $k\to \infty$. Then $x_k=\Psi_\sigma(t_k,x)\in C$ for all $k\in \N$ by forward invariance of $C$, and since $\lim_{k\to \infty}x_k=\lim_{k\to \infty}\Psi(t_k,x)=\lim_{k\to \infty}e^{A_1t_k}x=0$, by compactness we have $0\in C$. Now by definition of forward invariance and since $\cK_\infty$ is defined as the reachable set from $0$, we conclude that $\cK_\infty\subseteq C$.
\end{proof}
% \emph{Attractiveness}
% Consider any $x\in \R^n$ and any $\sigma \in \cS$. For any $t\geq 0$, recalling \eqref{eq:Solution} we have
% \[
% \Psi_\sigma(t,x)=\Phi_\sigma(t,0)x+\int_0^t\Phi_\sigma(t,s)b_{\sigma(s)}\,ds.
% \]
% The first term, $\Phi_\sigma(t,0)x$, goes to $0$ as $t\to +\infty$, since, by hypothesis, \eqref{eq:DeAffinedSystem} is UGAS on $\cS$. The second term lies in $K(t)$ for every $t\in \R_{\geq 0}$, since it can be re-written as
% \[
% \int_0^t\Phi_\sigma(t,s)b_{\sigma(s)}\,ds=\Psi_\sigma(t,0),
% \]
% concluding the proof.
An existence result similar to Theorem~\ref{lemma:K_inftyDef} was proven in~\cite{NilBosSig13}, in a slightly different setting and with a different methodology. We presented here our proof based on Lemma~\ref{lemma:ConverseLyap} since it allows us to not only prove the existence of the minimal forward invariant set $\cK_\infty$, but also to verify that it is (exponentially) $\cKL$-stable (and thus, in particular, attractive), as proven in the following proposition.
\begin{prop}\label{prop:KLstability}
Under the hypothesis of Theorem~\ref{lemma:K_inftyDef}, the compact set $\cK_\infty$ is $\cKL$-stable for~\eqref{eq:Switchingsystems} on $\cS$.
\end{prop}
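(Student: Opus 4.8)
The plan is to reduce the problem to a comparison between the trajectory $\Psi_\sigma(t,x)$ and a trajectory issued from a point of $\cK_\infty$ closest to $x$, and then to exploit the affine structure of the flow together with the forward invariance of $\cK_\infty$ established in Theorem~\ref{lemma:K_inftyDef}. Concretely, I would fix $x\in\R^n$, $\sigma\in\cS$ and $t\in\R_{\geq 0}$. Since $\cK_\infty$ is nonempty and compact, I first pick a point $z\in\cK_\infty$ realizing the Euclidean distance, i.e.\ $|x-z|=|x|_{\cK_\infty}$. By forward invariance (Theorem~\ref{lemma:K_inftyDef}), the comparison trajectory satisfies $\Psi_\sigma(t,z)\in\cK_\infty$, so that $\Psi_\sigma(t,z)$ is a legitimate competitor in the distance minimization and
\[
|\Psi_\sigma(t,x)|_{\cK_\infty}\leq |\Psi_\sigma(t,x)-\Psi_\sigma(t,z)|.
\]

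The key observation is that the affine term in~\eqref{eq:Solution} is independent of the initial condition, hence cancels in the difference: from Lemma~\ref{lemma:convexity} (equivalently, directly from~\eqref{eq:Solution}) the map $x\mapsto\Psi_\sigma(t,x)$ is affine with linear part $\Phi_\sigma(t,0)$, giving
\[
\Psi_\sigma(t,x)-\Psi_\sigma(t,z)=\Phi_\sigma(t,0)(x-z).
\]
It then remains to bound $|\Phi_\sigma(t,0)(x-z)|$ in the Euclidean norm. For this I pass to the Lyapunov norm $v$ of Lemma~\ref{lemma:ConverseLyap}: by equivalence of norms on $\R^n$ there are constants $c_1,c_2>0$ with $c_1|w|\leq v(w)\leq c_2|w|$ for all $w\in\R^n$, and combining this with the contraction estimate~\eqref{eq:LyapunovArbitraryNorm} yields
\[
|\Phi_\sigma(t,0)(x-z)|\leq \tfrac{1}{c_1}v(\Phi_\sigma(t,0)(x-z))\leq \tfrac{1}{c_1}e^{-\kappa t}v(x-z)\leq \tfrac{c_2}{c_1}e^{-\kappa t}|x-z|.
\]
Chaining the three displays and recalling $|x-z|=|x|_{\cK_\infty}$ gives $|\Psi_\sigma(t,x)|_{\cK_\infty}\leq M e^{-\kappa t}|x|_{\cK_\infty}$ with $M:=c_2/c_1$, so the claim holds with the $\cKL$ function $\beta(r,t)=M e^{-\kappa t}r$, which also certifies the exponential rate announced in the statement.

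The genuinely delicate point is not any single estimate but the structural cancellation in the second step: it is precisely because the subsystems are affine (so that $\Psi_\sigma(t,\cdot)$ is affine with the \emph{same} linear part $\Phi_\sigma(t,0)$ for every $\sigma$) that the difference of two solutions obeys the linearized dynamics and can therefore be controlled by the single Lyapunov norm of Lemma~\ref{lemma:ConverseLyap}. The remaining care is mostly bookkeeping: one must invoke nonemptiness and compactness of $\cK_\infty$ to guarantee that a nearest point $z$ exists, and use forward invariance to ensure $\Psi_\sigma(t,z)$ remains in $\cK_\infty$. Finally, I would emphasize that all estimates are uniform in $\sigma\in\cS$, so the resulting $\beta$ does not depend on the switching signal, as required by the definition of $\cKL$-stability.
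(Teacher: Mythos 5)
Your proof is correct and follows essentially the same route as the paper's: decompose $x$ via a nearest point of $\cK_\infty$, use forward invariance so that the trajectory from that point stays in $\cK_\infty$, exploit the affine structure of~\eqref{eq:Solution} so the difference of two solutions obeys the linearized dynamics, and conclude via the contraction in the Lyapunov norm of Lemma~\ref{lemma:ConverseLyap} together with equivalence of norms. The only cosmetic difference is that the paper measures the distance to $\cK_\infty$ in the Lyapunov norm $v$ throughout (via the function $W$) and converts to the Euclidean distance only at the end, whereas you project in the Euclidean norm and switch to $v$ mid-estimate; both yield the same bound $M e^{-\kappa t}|x|_{\cK_\infty}$.
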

\begin{proof}
Consider a scalar $\kappa>0$ and a norm $v:\R^n\to \R_{\geq 0}$ satisfying the properties in Lemma~\ref{lemma:ConverseLyap}, and define $\B_v\coloneqq \{x\in \R^n\;\vert\;v(x)\leq 1\}$, i.e., the unit ball of the norm $v$. We consider $W:\R^n\to \R_{\geq 0}$, the \emph{distance from $\cK_\infty$ with respect  to $v$}, defined by
\begin{equation}
\label{eq:Definitionv-Haudrorff}
W(x)\coloneqq \min_{y\in \cK_\infty}\{v(x-y)\}=\min\{r\in \R_{\geq 0}\;\vert\;x\in \cK_\infty+r\B_v\}.
\end{equation}
First, for any $x\in \cK_\infty$, forward invariance of $\cK_\infty$ implies that $W(\Psi_\sigma(t,x))=0$, for all $\sigma\in \cS$ and all $t\in \R_{\geq 0}$. Now consider any $x\notin \cK_\infty$, any $\sigma\in \cS$ and any $t\in \R_{\geq 0}$. By equation~\eqref{eq:Definitionv-Haudrorff}, we can decompose $x$ as $x=\hat x+y$ with $\hat x\in \cK_\infty$ and $y\in W(x)\B_v$. Computing
\[
\begin{aligned}
	W(\Psi_\sigma(t,x))&=W\left(\Phi_\sigma(t,0)(\hat x+y)+\int_0^t \Phi_\sigma(t,s)b_{\sigma(s)}\,ds\right)\\&= W\left(\Phi_\sigma(t,0)\hat x+\int_0^t \Phi_\sigma(t,s)b_{\sigma(s)}\,ds+\Phi_\sigma(t,0)y\right).
\end{aligned}
\]
Recalling again the forward invariance of $\cK_\infty$, we have $\Phi_\sigma(t,0)\hat x+\int_0^t \Phi_\sigma(t,s)b_{\sigma(s)}\,ds\in \cK_\infty$ and, by Lemma \ref{lemma:ConverseLyap}, we have $\Phi_\sigma(t,0)y\in e^{-\kappa t}W(x)\B_v$, proving that
\[
W(\Psi_\sigma(t,x))\leq e^{-\kappa t}W(x).
\]
Now by equivalence of norms in $\R^n$, it can be seen that there exist $M_1,M_2\in\R_{>0}$ such that
\[
M_1|x|_{\cK_\infty}\leq W(x)\leq M_2|x|_{\cK_\infty},\;\;\forall x\in \R^n
\]
where, we recall, $|\cdot|_{\cK_\infty}$ denotes the distance from $\cK_\infty$ \emph{with respect to the Euclidean distance}. Thus we have
\begin{equation}
|\Psi_\sigma(t,x)|_{\cK_\infty}\leq \frac{M_2}{M_1}e^{-\kappa t}|x|_{\cK_\infty},\;\;\forall x\in \R^n,\;\;\forall \sigma\in \cS,\;\;\forall t\in \R_{\geq 0},\label{eq:decay_rate_Kinf}
\end{equation}
proving Item {4.} of Definition~\ref{defn:MainDefinitions} and concluding the proof.
\end{proof}

We note that the decay rate $\kappa>0$ of the linearized system~\eqref{eq:DeAffinedSystem} with respect to the origin is somehow conserved by the switched affine system~\eqref{eq:Switchingsystems}, in the sense that $\kappa>0$ is also the exponential decay rate of~\eqref{eq:Switchingsystems} with respect to the minimal forward invariant set $\cK_\infty$, as shown in \eqref{eq:decay_rate_Kinf}. More precisely, the condition in Item~4. of Definition~\ref{defn:MainDefinitions} is satisfied in particular by a $\cKL$ function $\beta$ of the form $\beta(s,t)=Mse^{-\kappa t}$, for $M>0$, and where $\kappa>0$ is the decay rate of the linearized system~\eqref{eq:DeAffinedSystem}.
{\begin{rem}[Properties of $\cK_\infty$]
Since $\cK_\infty$ is formally defined as an infinite union of sets in \eqref{eq:K_inf_def}, it can be hard to construct it explicitly, as we will discuss in the next subsection. However, despite this non-constructive definition of $\cK_\infty$, we are able to provide some remarkable properties. First of all, given $\cF\coloneqq \{ (A_i, b_i)_{i\in \M}\,\vert A_i\in \R^{n \times n}, b_i\in \R^n\}$, let us consider the set of \emph{Filippov equilibria} defined by
\begin{equation}\label{eq:Filippov_eq}
\text{Fil}_0(\cF)\coloneqq \left \{y=-\left (\sum_{i\in\M}\overline \lambda_i A_i\right)^{\!\!-1}\!\!\!\left (\sum_{i\in\M}\overline \lambda_i b_i\right)\;\;\Big\vert\;\;\overline\lambda\in \Lambda_n \right \}=\{y\in \R^n\;\vert\;0\in F(y)\},
\end{equation}
where the set-valued map $F$ is defined in~\eqref{eq:RegDiffInc}. As discussed in Section~\ref{sec:Prelimin}, solutions to~\eqref{eq:Switchingsystems} under arbitrary switching rules are dense in the set of solutions to the differential inclusion~\eqref{eq:RegDiffInc}, and this in particular implies $\text{Fil}_0(\cF)\subseteq \cK_\infty$.
Moreover, we can ensure that the set $\cK_\infty$ is (path) connected: consider $x_1,x_2\in \cK_\infty$, and consider the constant signal defined by $\widehat \sigma(t)\equiv 1$. By forward invariance of $\cK_\infty$ it is clear that the corresponding trajectories are contained in $\cK_\infty$, more precisely $\{\Psi_{\widehat \sigma}(t,x_j)\;\vert\;t\geq 0\}\subseteq \cK_\infty$ for all $j\in \{1,2\}$. Since $\cK_\infty$ is closed and  $\lim_{t\to +\infty}\Psi_{\widehat \sigma}(t,x_1)=\lim_{t\to +\infty}\Psi_{\widehat \sigma}(t,x_2)=-A_1^{-1}b_1$, there is a path in $\cK_\infty$ connecting $x_1$ and $x_2$ (intuitively, the union of the two trajectories). We have thus proven that $\cK_\infty$ is path-connected. However, in the subsequent numerical example, we show that $\cK_\infty$ is not convex, in general. \hfill $\triangle$
\end{rem}}
\begin{myexample}\label{ex:FirstExample}
Consider a switched affine system as in~\eqref{eq:Switchingsystems} defined by
\[
A_1=\begin{bmatrix}
      	-1 & -1 \\ 
      	 0 & -1
     \end{bmatrix},~A_2=\begin{bmatrix}
      	-1 & 0 \\ 
      	 -1 & -1
     \end{bmatrix},~b_1=b_2=\begin{bmatrix}
     -1\\-1
     \end{bmatrix}
\]
The set $\text{Fil}_0(\cF)$ of the Filippov equilibria, defined in \eqref{eq:Filippov_eq}, is depicted in black in Figure~\ref{fig:counter_ex} and, as previously discussed, we know that $\text{Fil}_0(\cF)\subseteq \cK_\infty$. {Considering the convex set $\mathcal{Z}$ depicted in Figure~\ref{fig:counter_ex}, it is possible to show that, for any $x\in \partial \mathcal{Z}$ and for any $i\in \{1,2\}$ we have  $A_ix+b_i\notin \cT_\cZ(x)$, where  $\cT_\cZ(x)$ denotes the tangent cone\footnote{ Given a closed convex set $C\subset \R^n$ and $x\in \R^n$ the \emph{tangent cone to $C$ at $x$} is the set $\cT_C(x)\coloneqq \text{cl}\left(\{z\in \R^n\;\vert\;\exists\;\alpha_0>0\;\; \text{s.t. } \;x+\alpha z\in C,\;\forall \alpha\in (0,\alpha_0)\}\right)$, see \cite[Appendix A]{BroTan20} for further discussions.} to $\cZ$ at $x$. Applying the Nagumo Theorem~\cite[Chapter 4, pag. 175]{AubinCellina84}, by forward invariance of $\cK_\infty$, this implies $\text{Int}(\mathcal{Z})\cap \cK_\infty=\emptyset$. Consider the points $x_{e1},x_{e2}\in \text{Fil}_0(\cF)$, defined by $x_{ei}=-A_i^{-1}b_i$ with $i\in \{1,2\}$, i.e., the equilibria of each affine subsystem. It holds that $x^*=\frac{1}{2}x_{e1}+\frac{1}{2}x_{e2}$ is both in $\text{Int}(\mathcal{Z})$ and in $\text{co}(\text{Fil}_0(\cF))\subseteq\text{co}(\cK_\infty)$, which implies that $\cK_\infty$ is non-convex. }
\begin{figure}
    \centering
	\def\svgwidth{.6\linewidth}
    \input{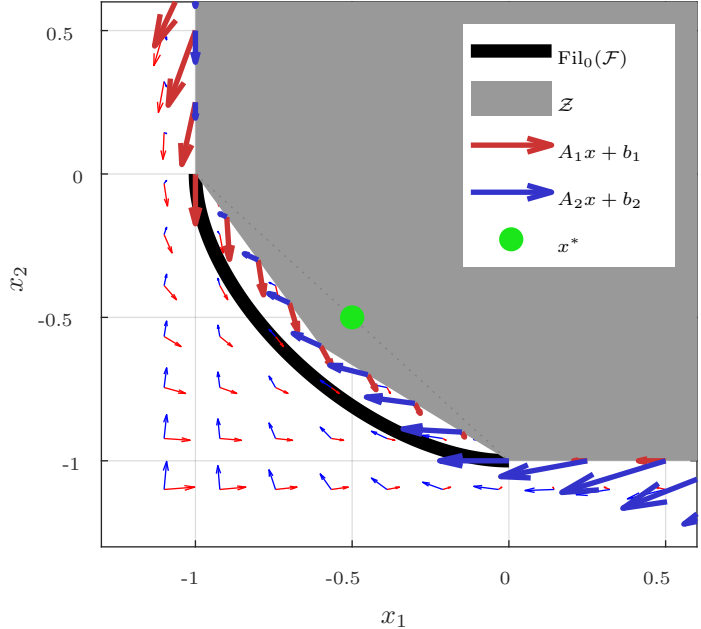}
%  \includegraphics[width=.7\linewidth]{counter_ex.eps}    %
% 	\psfragfig[scale=0.5]{counter_ex}{%
% 	  \psfrag{Xfil}{{\scriptsize$\text{Fil}_0(\cF)$}}
% 	  \psfrag{data1}{{\scriptsize$\mathcal{Z}$}}
% 	  \psfrag{F1}{{\scriptsize$\!A_1x\!+\!b_1$}}
% 	  \psfrag{F2}{{\scriptsize$\!A_2x\!+\!b_2$}}
% 	  \psfrag{data2}{{\scriptsize$x^*$}}
% 	  \psfrag{x1}{{$x_1$}}
% 	  \psfrag{x2}{{$x_2$}}
%   }

    \caption{\small Representation of the state-space for Example~\ref{ex:FirstExample} with the set of Filippov equilibria $\text{Fil}_0(\cF)$, the set $\mathcal{Z}$ for which the vector fields $A_ix+b_i,~i\in\{1,2\}$ do not point inwards, {and the point $x^*=\frac{1}{2}x_{e1}+\frac{1}{2}x_{e2}=[-0.5~-0.5]^\top\in\text{co}\{\text{Fil}_0(\cF)\}$.} This demonstrates that $\cK_\infty$ is non-convex since $x^*\in\mathcal{Z}$ implies that $x^*\notin \cK_\infty$. } 
    \label{fig:counter_ex}
\end{figure}
%{red}{Here I would present our toy example, just explaining why $K_\infty$ is nt convex. In the following section we will re-consider it, showing the approximation of it with quadratics/SOS sublevel sets.}
\end{myexample}
This example will be later recalled when we present numerical procedures to over-approximate $\cK_\infty$, {providing further insights in its geometric characterization.}

Before concluding this subsection, we provide a \virgolette{negative} result, stating that (non-trivial) forward invariant sets do not exist if the linearized system~\eqref{eq:DeAffinedSystem} is unstable. 
    \begin{prop}\label{prop:Instability}
Consider $\cF\coloneqq \{ (A_i, b_i)_{i\in \M}\,\vert A_i\in \R^{n \times n}, b_i\in \R^n\}$. If the linearized system~\eqref{eq:DeAffinedSystem} is unstable on $\cS$ (i.e., there exist $x\in \R^n$ and $\sigma \in \cS$ such that $\limsup_{t\to\infty}|\Phi_\sigma(t,0)x|=+\infty$) then there does not exist a compact forward invariant set with non-empty interior for the switched affine system~\eqref{eq:Switchingsystems} on $\cS$. Moreover, if the set of matrices $\cA=\{A_1,\dots, A_M\}\subset \R^{n\times n}$ is also \emph{irreducible}\footnote{A set of matrices $\cA\subset \R^{n\times n}$ is said to be \emph{irreducible} if $\{0\}$ and $\R^n$ are the only vector subspaces which are invariant under all matrices in $\cA$.} then there exists no compact non-singleton forward invariant set for~\eqref{eq:Switchingsystems} on $\cS$.
\end{prop}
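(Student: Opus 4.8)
The key structural observation I would exploit is that the difference of two solutions of the affine system~\eqref{eq:Switchingsystems} under the \emph{same} signal is a solution of the linearization~\eqref{eq:DeAffinedSystem}. Indeed, from~\eqref{eq:Solution}, for any $x,y\in\R^n$, $\sigma\in\cS$ and $t\geq 0$,
\[
\Psi_\sigma(t,x)-\Psi_\sigma(t,y)=\Phi_\sigma(t,0)(x-y),
\]
since the particular-solution integral term is identical for $x$ and $y$ and cancels. Consequently, if $C$ is a compact forward invariant set for~\eqref{eq:Switchingsystems} on $\cS$, then the \emph{difference set} $D\coloneqq\{c_1-c_2\mid c_1,c_2\in C\}$ is compact and forward invariant for the \emph{linear} system~\eqref{eq:DeAffinedSystem}: for $w=c_1-c_2\in D$ one has $\Phi_\sigma(t,0)w=\Psi_\sigma(t,c_1)-\Psi_\sigma(t,c_2)\in D$ by invariance of $C$. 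This reduces both claims to a statement about invariant sets of the (unstable) linear system.

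For the first assertion I would argue by contradiction. Using Lemma~\ref{lemma:convexity} I may assume $C=\co\{C\}$ is convex, as the convex hull is still compact and forward invariant and inherits a non-empty interior. Then $D$ is convex, symmetric about the origin, and, since $C$ contains an open ball of some radius $r$, $D$ contains the open ball $B(0,2r)$ (write $w=(c_0+w/2)-(c_0-w/2)$ for $|w|<2r$). As $D$ is bounded and forward invariant for~\eqref{eq:DeAffinedSystem}, every linear trajectory starting in $B(0,2r)$ stays bounded; by homogeneity of $\Phi_\sigma(t,0)$ this forces $\sup_t|\Phi_\sigma(t,0)z|<\infty$ for every $z\in\R^n$ and every $\sigma\in\cS$, directly contradicting the instability hypothesis $\limsup_t|\Phi_\sigma(t,0)x|=+\infty$.

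For the second assertion I would again pass to the convex hull, now only assuming $C$ non-singleton, so that $D$ is a compact convex symmetric non-singleton set containing $0$ that is forward invariant for~\eqref{eq:DeAffinedSystem}. Let $V\coloneqq\operatorname{span}(D)\neq\{0\}$. The crucial step is to show $V$ is invariant under every $A_i$: taking the constant signal $\sigma\equiv i$, for each $w\in D$ the curve $e^{A_it}w$ lies in $D\subseteq V$, hence $t^{-1}(e^{A_it}w-w)\in V$ for $t>0$, and letting $t\to0^+$ the closedness of $V$ yields $A_iw\in V$; linearity then gives $A_iV\subseteq V$. Irreducibility of $\cA$ forces $V=\R^n$, so the convex symmetric set $D$ has full affine hull and hence non-empty interior, again containing some ball $B(0,\rho)$. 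We are then exactly in the situation of the first assertion, and the same boundedness argument contradicts instability.

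The main obstacle is not the computation but identifying the right object to track, namely the difference set $D=C-C$, which converts forward invariance of the affine system into forward invariance of the linear one. Once this is in place the first part is immediate, and the only genuinely technical point in the second part is the differentiate-at-zero argument showing that forward invariance of a set for the linear flow forces its span to be a common invariant subspace of the $A_i$ --- after which irreducibility does the rest.
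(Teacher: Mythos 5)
Your proof is correct. The first assertion is handled essentially as in the paper: both arguments rest on the identity $\Psi_\sigma(t,x)-\Psi_\sigma(t,y)=\Phi_\sigma(t,0)(x-y)$, i.e., that differences of solutions under a common signal solve the linearization~\eqref{eq:DeAffinedSystem}; whether one phrases this via the difference set $D=C-C$ containing a ball around the origin (your version) or via a specific pair $y\in\Inn(C)$, $z=y+\lambda x\in C$ along the destabilizing direction (the paper's version) is immaterial. Where you genuinely diverge is the second assertion. The paper invokes a non-trivial external result (\cite[Theorem 3]{ChiMasSig12}, \cite{ProJun15a}): for an irreducible set $\cA$ generating an unstable switched linear system there is a norm $v$ and $\rho>0$ such that \emph{every} nonzero initial condition admits a switching signal along which $v(\Phi_{\sigma_x}(t,0)x)=e^{\rho t}v(x)$; applied to $x=z-y$ with $y\neq z$ in $C$ this immediately produces an unbounded difference of two solutions confined to $C$. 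You instead stay elementary: after passing to the convex hull, $D=C-C$ is a compact, convex, symmetric, forward invariant non-singleton set for the linear flow, its span $V$ is shown to be $A_i$-invariant for every $i$ by differentiating $e^{A_it}w$ at $t=0^+$ inside the closed subspace $V$, irreducibility forces $V=\R^n$, and then $D$ contains a ball around the origin so the first part applies verbatim. Your route is self-contained and uses only the definition of irreducibility, at the price of being purely qualitative; the paper's route imports a deeper theorem but yields, as a by-product, an explicit exponential divergence rate $\rho$ for the separating trajectories. Both arguments are sound.
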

\begin{proof}
Let us consider $x\in \R^n$ and $\sigma \in \cS$ which provide an unbounded trajectory for the linearized system~\eqref{eq:DeAffinedSystem}. Suppose by contradiction that $C\subset\R^n$ is a compact forward invariant set for~\eqref{eq:Switchingsystems}, and $\Inn(C)\neq \emptyset$. Consider any $y\in \Inn(C)$, and a $\lambda > 0$ small enough such that $z\coloneqq  y+\lambda x\in C$. Recalling the solution~\eqref{eq:Solution}, we obtain
\[
|\Psi_\sigma(t,z)-\Psi_\sigma(t,y)|=|\Phi_\sigma(t,0)z-\Phi_\sigma(t,0)y|=|\Phi_\sigma(t,0)(z-y)|=\lambda\,|\Phi_\sigma(t,0) x|,
\]
and thus 
\[
\limsup_{t\to \infty}|\Psi_\sigma(t,z)-\Psi_\sigma(t,y)|=+\infty,
\]
in contradiction to the fact that $y,z\in C$ and $C$ is a compact forward invariant set.

For the second part of the statement, recalling the results contained in~\cite[Theorem 3]{ChiMasSig12} or~\cite{ProJun15a}, since $\cA$ is irreducible and generates an unstable switched system, there exist $\rho>0$ and a norm $v:\R^n\to \R_{\geq 0}$ such that
\begin{equation}\label{eq:DestabilizeSequence}
\forall\;x\in \R^n,\;\exists\, \sigma_x\in \cS\;\;\text{such that}\;\;v(\Phi_{\sigma_x}(t,0)x)=e^{\rho t}v(x), \;\;\forall\;t\in \R_{\geq 0}.
 \end{equation}
 Suppose, by contradiction, that $C\subset\R^n$ is a compact forward invariant set and is not a singleton. Consider $y,z\in C$ with $y\neq z$ and define $x\coloneqq  z-y$. For $\sigma_x\in \cS$ defined in~\eqref{eq:DestabilizeSequence}, we have 
 \[
|\Psi_{\sigma_x}(t,x_2)-\Psi_{\sigma_x}(t,x_1)|=|\Phi_{\sigma_x}(t,0)z-\Phi_{\sigma_x}(t,0)y\;|=|\Phi_{\sigma_x}(t,0)x|.
 \] 
By equivalence of norms, this implies that $\limsup_{t\to \infty}|\Psi_{\sigma_x}(t,z)-\Psi_{\sigma_x}(t,y)|=+\infty$ and we have again a contradiction, since by hypothesis $C$ is compact and forward invariant.
\end{proof}
It is easy to see that the singleton-forward invariant case occurs if and only if all the subsystems share the same equilibrium, or, in other words, the  switched affine system is simply a translation of a linear switched system. We have thus proved that, if the linearized system~\eqref{eq:DeAffinedSystem} is unstable and under the mild irreducibily assumption, non-trivial affine switched systems~\eqref{eq:Switchingsystems} have \emph{no} compact forward invariant sets.

\subsection{Outer Approximation of the Minimal Forward Invariant Set}
In this subsection we propose two numerical methods for computing an outer approximation of $\cK_\infty$, i.e., forward invariant sets $\cK\supseteq \cK_\infty$ as close as possible to $\cK_\infty$ (in a sense that we will clarify).
The first method is a direct consequence of Proposition~\ref{prop:Kvm} when considering \emph{quadratic norms}, providing forward invariant ellipsoids, and it is presented in the following statement.
	\begin{prop}\label{coro:lmi_arb}
		If there exists a symmetric matrix $S\in\R^{n\times n}$, a vector $c\in\R^n$ and a scalar $\kappa>0$ satisfying the inequalities
		%\begin{equation}\label{eq:obj_arb}
		%\min_{S,c,\kappa} \ln\det(S),~~\text{\rm s.t.}
		%\end{equation}
		\begin{subequations}\label{eq:lmis_arb}
        \begin{align}
		&SA_i^\top+A_iS\prec-2\kappa S,\;\;\;\;\;\forall i\in \M,\label{eq:LMI_arb_1} \\&\begin{bmatrix}
		\kappa^2 & (A_ic+b_i)^\top\\
		A_ic+b_i & S
		\end{bmatrix}\succ 0,\qquad \forall i\in\M,\label{eq:LMI_arb_2}
        \end{align}
		\end{subequations}
		then the ellipsoidal set
		\begin{equation}\label{eq:KQ_def}
		{\cK}_Q\coloneqq \{x\in\R^n~\vert~(x-c)^\top S^{-1}(x-c)\leq 1\}\supseteq \cK_\infty
		\end{equation}
		is a forward invariant set for system \eqref{eq:Switchingsystems} on $\cS$.
	\end{prop}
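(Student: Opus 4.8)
The plan is to read the two matrix inequalities as statements about a single quadratic norm, and then to invoke Proposition~\ref{prop:Kvm} after translating the system so that the center of the ellipsoid becomes the origin. First I would set $P \coloneqq S^{-1}$; this is legitimate because the block structure of \eqref{eq:LMI_arb_2} forces $S \succ 0$ (a principal submatrix of a positive definite matrix is positive definite), so the ellipsoid $\cK_Q$ in \eqref{eq:KQ_def} is a genuine, compact, full-dimensional set. Define the quadratic norm $v(x) \coloneqq \sqrt{x^\top P x}$. Multiplying \eqref{eq:LMI_arb_1} on both sides by $P = S^{-1}$ shows it is equivalent to $A_i^\top P + P A_i \prec -2\kappa P$ for all $i \in \M$; differentiating $x^\top P x$ along each mode and concatenating over the finitely many switching instants then yields $v(\Phi_\sigma(t,0)x) \le e^{-\kappa t} v(x)$ for all $\sigma \in \cS$, $t \ge 0$, i.e. exactly condition \eqref{eq:LyapunovArbitraryNorm}. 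In particular the linearization \eqref{eq:DeAffinedSystem} is UGAS on $\cS$, so Proposition~\ref{prop:Kvm} and Theorem~\ref{lemma:K_inftyDef} both apply, and $v,\kappa$ form an admissible pair for Lemma~\ref{lemma:ConverseLyap}.

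Next I would translate the origin to the center $c$ by the change of variable $\tilde x \coloneqq x - c$, under which \eqref{eq:Switchingsystems} becomes $\dot{\tilde x} = A_i \tilde x + \tilde b_i$ with $\tilde b_i \coloneqq A_i c + b_i$, so that $\cK_Q = \{\tilde x \mid v(\tilde x) \le 1\}$ in the new coordinates. A Schur-complement expansion of \eqref{eq:LMI_arb_2} shows it is equivalent to $\tilde b_i^\top S^{-1} \tilde b_i < \kappa^2$, that is $v(\tilde b_i) < \kappa$ for every $i \in \M$. Now I apply Proposition~\ref{prop:Kvm} to the translated system with the norm $v$ and decay rate $\kappa$: its proof establishes the estimate $v(\Psi_\sigma(t,\tilde x)) \le e^{-\kappa t} v(\tilde x) + B_{\max}(1-e^{-\kappa t})/\kappa$ with $B_{\max} = \max_i v(\tilde b_i)$, from which one reads off that \emph{every} level set $\{v(\tilde x) \le R'\}$ with $R' \ge B_{\max}/\kappa$ is forward invariant, not merely the single level $R = B_{\max}/\kappa$. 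Since $v(\tilde b_i) < \kappa$ gives $B_{\max}/\kappa < 1$, the level $R' = 1$ is admissible, and hence $\cK_Q$ is forward invariant for the translated (equivalently, the original) system on $\cS$.

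Finally, the inclusion $\cK_Q \supseteq \cK_\infty$ is immediate: $\cK_Q$ is compact and forward invariant, and by Theorem~\ref{lemma:K_inftyDef} the set $\cK_\infty$ is the \emph{minimal} forward invariant set on $\cS$, so $\cK_\infty \subseteq \cK_Q$. The computation is essentially routine once the dictionary between the LMIs and the norm $v$ is fixed; the step requiring genuine care is the last observation of the previous paragraph, namely that Proposition~\ref{prop:Kvm} leaves invariant every sufficiently large $v$-ball and that \eqref{eq:LMI_arb_2} is precisely the condition making the unit $v$-ball centered at $c$ large enough. I expect that translating \eqref{eq:LMI_arb_2} into the bound $v(A_i c + b_i) < \kappa$ via the Schur complement, and checking the direction of all the strict inequalities, will be the only place where a slip is possible. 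A self-contained alternative, should one prefer not to re-read the proof of Proposition~\ref{prop:Kvm}, is to bound $\tfrac{d}{dt}\,(x-c)^\top P (x-c) = 2(x-c)^\top P A_i (x-c) + 2(x-c)^\top P \tilde b_i$ directly, using \eqref{eq:LMI_arb_1} on the first term and the Cauchy--Schwarz inequality in the $P$-inner product together with $v(\tilde b_i) < \kappa$ on the second, to obtain strict decrease of $(x-c)^\top P(x-c)$ on $\partial \cK_Q$.
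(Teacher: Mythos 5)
Your proposal is correct and follows essentially the same route as the paper: identify $v(x)=\sqrt{x^\top S^{-1}x}$ as a norm satisfying Lemma~\ref{lemma:ConverseLyap} via \eqref{eq:LMI_arb_1}, translate the system to the center $c$, read \eqref{eq:LMI_arb_2} through the Schur complement as $v(A_ic+b_i)<\kappa$ so that $B_{max}/\kappa<1$, and conclude forward invariance of the unit sublevel set from Proposition~\ref{prop:Kvm}. If anything, you are slightly more explicit than the paper on two points it leaves implicit — that the proof of Proposition~\ref{prop:Kvm} actually yields invariance of \emph{every} sublevel set at level at least $B_{max}/\kappa$, and that the inclusion $\cK_\infty\subseteq\cK_Q$ follows from the minimality established in Theorem~\ref{lemma:K_inftyDef} — both of which are exactly the right justifications.
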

	\begin{proof}
	Consider system~\eqref{eq:Switchingsystems} translated to $c\in\R^n$, which is equivalent to replace $b_i\gets A_ic+b_i$ in~\eqref{eq:Switchingsystems}.
	From the term (2,2) in \eqref{eq:LMI_arb_2}, one has that $S^{-1}\succ0$. Inequality~\eqref{eq:LMI_arb_1}, in turn, implies that the norm $v(x) =\sqrt{x^\top S^{-1}x}$ satisfies the condition~\eqref{eq:LyapunovArbitraryNorm} in Lemma~\ref{lemma:ConverseLyap}. Moreover, the second inequality, by the Schur Complement Lemma (see~\cite[Section A.5.5]{boyd2004convex}), is equivalent to 
	\begin{align}
	    \kappa^2&> \max_{i\in\M} (A_ic+b_i)^\top S^{-1}(A_ic+b_i)= B_{max}^2
	\end{align}
	which ensures that $R=(\frac{B_{max}}{\kappa})^2<1$. Therefore, $x\notin {\cK}_Q$ implies $v(x)>1>R$ and therefore, by Proposition~\ref{prop:Kvm}, ${\cK}_Q$ is a forward invariant set.
	\end{proof}
An optimization problem to minimize the volume of the ellipsoidal set ${\cK}_Q$ can be stated considering the objective function $\min_{S,c,\kappa} \ln\det(S)$ subject to~\eqref{eq:lmis_arb}. The volume of ${\cK}_Q$ is proportional to $\det(S)$ and the $\ln$ function makes the objective function concave in $S$. However, for a given $\kappa>0$, this problem is a concave-minimization problem, which is generally hard to solve, \cite{Rockafellar70}. Some  strategies (see \cite{apkarian2000robust}) to handle this problem are convex-optimization methods  for local minimization and branch-and-bound algorithms for global minimization. A good alternative objective function, leading to a convex optimization problem is ${\rm Tr}(S)$, which, instead of minimizing the volume of the ellipsoid $\cK_Q$, minimizes the sum of the square lengths of the semi-axes, see \cite[Section~2.2.2]{boyd2004convex}. 
Notice that the point $c\in \R^n$ to which the system is translated is also a variable of the optimization problem, which allows us not only to estimate the size of $\cK_\infty$ but also to optimize a suitable center in the state space for the ellipsoid containing it.

 Nevertheless, it is well known that the existence of a common \emph{quadratic} Lyapunov function for the linearized part (i.e., a matrix $S>0$ satisfying~\eqref{eq:LMI_arb_1}) is a restrictive condition for stability. Moreover, Example~\ref{ex:FirstExample} shows how the set $\cK_\infty$ is in general non-convex; it thus seems natural, in order to improve our estimation of the minimal forward invariant set $\cK_\infty$, to consider sub-level sets of more general functions. In what follows, we propose a construction based on sub-level set of non-homogeneous  \emph{sum-of-squares (SOS) polynomials}, see~\cite{papachristodoulou2002construction} for details. 
\begin{prop}[Outer Approximaton via SOS polynomials]\label{prop:sos_arb}
If there exist a non-trivial polynomial $V(x)\in \R[x]$ of degree $d\in\N$ and scalars $r>0$, $\beta\geq0$ such that 
the following SOS constraints are satisfied
\begin{subequations}
\label{eq:sos_arb}
        \begin{align}
       V(x)-\epsilon\|x\|_{d}^{d}~\text{\small\rm is SOS} \label{eq:sos_arb_1}&\\
        -\left(\frac{\partial V}{\partial x}(x)\right)^{\top}(A_ix+b_i) -\beta(V(x)-r)~\text{\small\rm is SOS}&\quad\forall i\in\M\label{eq:sos_arb_2}
        \end{align}
\end{subequations}     
for some $\epsilon>0$, then the set %minimal forward invariant set $K_\infty$ satisfies
\begin{equation}
     \cK_{\rm SOS}\coloneqq  \{x\in\R^n~\vert~V(x)\leq r\}\supseteq \cK_\infty\label{eq:Ksos_def}
\end{equation}
is a forward invariant set for system \eqref{eq:Switchingsystems} on $\cS$.
\end{prop}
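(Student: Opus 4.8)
The plan is to establish the two assertions in the statement separately: that $\cK_{\rm SOS}$ is forward invariant for~\eqref{eq:Switchingsystems} on $\cS$, and that it contains $\cK_\infty$. The starting observation is that a polynomial which is SOS is nonnegative everywhere, so the two constraints in~\eqref{eq:sos_arb} translate into the pointwise inequalities $V(x)\geq \epsilon\|x\|_d^d$ for all $x\in\R^n$, and $\left(\frac{\partial V}{\partial x}(x)\right)^{\top}(A_ix+b_i)\leq -\beta(V(x)-r)$ for all $x\in\R^n$ and all $i\in\M$. Everything else is extracted from these two inequalities.

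First I would use~\eqref{eq:sos_arb_1} to guarantee that $\cK_{\rm SOS}$ is compact: since $V(x)\geq \epsilon\|x\|_d^d$ with $\epsilon>0$, the function $V$ is radially unbounded, so its sublevel set $\{x\in\R^n\;\vert\;V(x)\leq r\}$ is closed and bounded. This compactness is precisely what will later allow me to invoke the minimality of $\cK_\infty$.

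The core of the argument is forward invariance, which I would derive from~\eqref{eq:sos_arb_2} by a comparison (Gr\"onwall-type) argument rather than via the Nagumo condition, so as to avoid treating boundary points where $\frac{\partial V}{\partial x}$ might vanish. Fix $x_0\in\cK_{\rm SOS}$, a signal $\sigma\in\cS$, and set $\phi(t)\coloneqq V(\Psi_\sigma(t,x_0))$. Since $V$ is a polynomial and $t\mapsto \Psi_\sigma(t,x_0)$ is absolutely continuous, $\phi$ is absolutely continuous and, at almost every $t$ (namely away from the switching instants of $\sigma$), satisfies $\dot\phi(t)=\left(\frac{\partial V}{\partial x}(\Psi_\sigma(t,x_0))\right)^{\top}(A_{\sigma(t)}\Psi_\sigma(t,x_0)+b_{\sigma(t)})\leq -\beta(\phi(t)-r)$, where the inequality is the pointwise version of~\eqref{eq:sos_arb_2} and holds regardless of the active mode $\sigma(t)$. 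Writing $\psi\coloneqq \phi-r$, this reads $\dot\psi\leq -\beta\psi$ a.e.; multiplying by $e^{\beta t}$ shows that $t\mapsto \psi(t)e^{\beta t}$ is non-increasing, whence $\psi(t)\leq \psi(0)e^{-\beta t}$. As $x_0\in\cK_{\rm SOS}$ gives $\psi(0)=\phi(0)-r\leq 0$ and $\beta\geq 0$, we obtain $\phi(t)\leq r$ for all $t\geq 0$, i.e. $\Psi_\sigma(t,x_0)\in\cK_{\rm SOS}$. Since $x_0$ and $\sigma$ were arbitrary, $\cK_{\rm SOS}$ is forward invariant on $\cS$.

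Finally, having shown that $\cK_{\rm SOS}$ is a compact forward invariant set, the inclusion $\cK_\infty\subseteq \cK_{\rm SOS}$ is immediate from the minimality of $\cK_\infty$ established in Theorem~\ref{lemma:K_inftyDef}. The only genuinely delicate point I anticipate is the almost-everywhere differentiation of $\phi$ across the switching instants together with the correct invocation of the comparison lemma for absolutely continuous functions; once that bookkeeping is in place, the monotonicity of $t\mapsto \psi(t)e^{\beta t}$ and the sign condition $\psi(0)\leq 0$ close the argument with no further computation.
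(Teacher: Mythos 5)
Your proof is correct and follows essentially the same route as the paper's: both read \eqref{eq:sos_arb_2} as a Lyapunov-type decrease condition for $V$ along trajectories and conclude forward invariance of the sublevel set, with the containment of $\cK_\infty$ coming from its minimality (Theorem~\ref{lemma:K_inftyDef}). If anything, your execution is tighter than the paper's, which only asserts qualitatively that $V$ decreases strictly outside $\cK_{\rm SOS}$ (a claim that implicitly needs $\beta>0$, whereas the hypothesis allows $\beta\geq 0$) and leaves the compactness of $\cK_{\rm SOS}$ and the inclusion $\cK_\infty\subseteq\cK_{\rm SOS}$ unstated; your Gr\"onwall comparison on $t\mapsto e^{\beta t}\left(V(\Psi_\sigma(t,x_0))-r\right)$ handles the case $\beta=0$ and boundary-grazing trajectories uniformly.
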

\begin{proof}
The SOS constraint~\eqref{eq:sos_arb_1} ensures that the polynomial $V(x)\in \R[x]$ is positive definite whereas~\eqref{eq:sos_arb_2} implies 
\begin{equation}
    \left(\frac{\partial V}{\partial x}(x)\right)^{\top}(A_ix+b_i)<0,~~\forall i\in\M,~\forall x\notin \cK_{\rm SOS},
\end{equation}
recalling the definition of $\cK_{\rm SOS}$ in~\eqref{eq:Ksos_def}. This condition implies that $V(x)$ decreases along solutions of~\eqref{eq:Switchingsystems} that lie outside of $\cK_{\rm SOS}$. Therefore, this set is attractive and forward invariant for system \eqref{eq:Switchingsystems} undergoing arbitrary switching.
\end{proof}

\begin{rem}
Minimizing, or even computing, the volume of the set $\cK_{\rm SOS}$ defined in~\eqref{eq:Ksos_def} is, in general, a very intricate task, see~\cite{lasserre2019volume} for some discussions. Therefore, a simple way to optimize the over estimation of $\cK_\infty$ is to minimize $r>0$ subject to \eqref{eq:sos_arb}, where $\epsilon>0$ is chosen to avoid the trivial solution $V(x)=0$. {Also, this optimization problem can be efficiently solved by a bisection procedure over the scalar $\beta>0$, as the problem becomes convex whenever $\beta$ is given.}

Another important remark is that any polynomial $V(x)$ satisfying \eqref{eq:sos_arb_2} and  $V(x)-\epsilon\|x\|_{d}^{d}~\text{\small is SOS}$ can be decomposed as $V(x) = V_H(x) + V_R(x)$ where $V_H(x)$ is a homogeneous polynomial of the same degree $d\in\N$ as $V(x)$ and $V_R(x)$ are the remaining terms. Also, $w(x) = (V_H(x))^{\frac{1}{d}}$ can be shown to be a Lyapunov function for the linearized system~\eqref{eq:DeAffinedSystem}, satisfying the conditions \eqref{eq:Cond1} and \eqref{eq:Cond2} for some $\kappa>0$. Indeed, the SOS constraint~\eqref{eq:sos_arb_2} implies that $-\left(\frac{\partial V}{\partial x}(x)\right)^{\top}(A_ix+b_i)-\beta r\geq \beta V(x)$ for all $i\in\M$ which, in turn, implies that $-\left(\frac{\partial V_H}{\partial x}(x)\right)^{\top}A_ix\geq \beta V_H(x)$ for all $i\in\M$. Therefore, $w(x) = (V_H(x))^{\frac{1}{d}}$ is a common Lyapunov function for the linearized system~\eqref{eq:DeAffinedSystem}  {homogeneous of degree $1$. This in particular implies the existence of a norm $v:\R^n\to \R_{\geq 0}$ satisfying the conditions in Lemma~\ref{lemma:ConverseLyap} for $\kappa=\beta/d$. Thus, recalling Proposition~\ref{prop:KLstability}, the approximating technique given by Proposition~\ref{prop:sos_arb} provides also an upper bound on the decay rate of system~\eqref{eq:Switchingsystems}.} \hfill $\triangle$
\end{rem}
The following example illustrates both methods presented in this section.
\begin{figure}[t!]
    \centering
	\def\svgwidth{1.1\linewidth}
    \hspace*{-1.4cm}\input{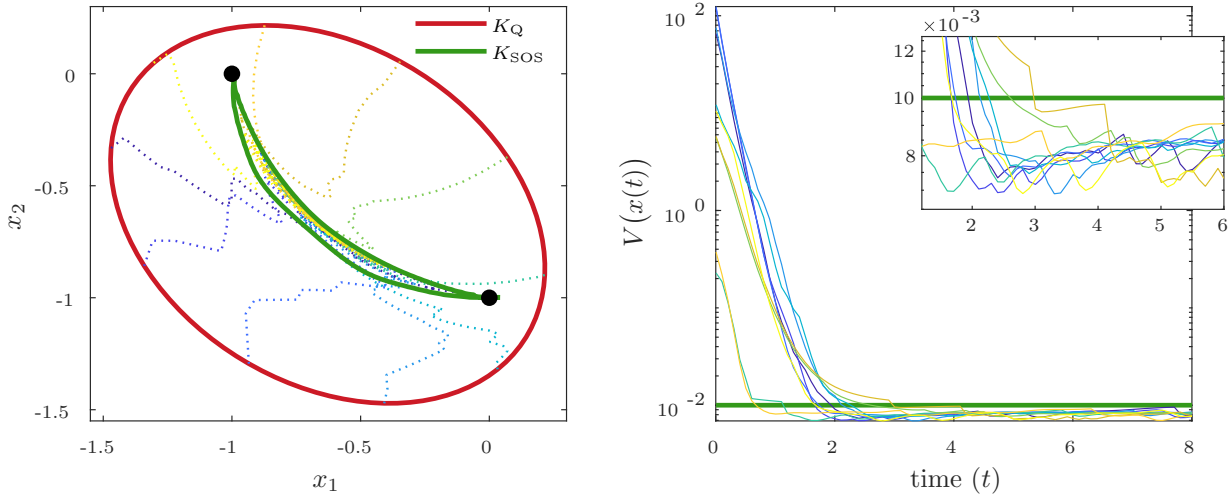}

%     \psfragfig[width=0.9\linewidth]{arb_swi_Kinf}{%
%   \psfrag{Ksos}{{\tiny $K_{\rm SOS}$}}
%   \psfrag{KQ}{{\tiny $K_{\rm Q}$}}
%   \psfrag{x1}{{$x_1$}}
%   \psfrag{x2}{{$x_2$}}
%   \psfrag{Vox}{{\!\!\!\!\!$V\big(x(t)\big)$}}
%   \psfrag{time}{{\!\!\!\!\!time $(t)$}}
%   }
    \caption{Illustration of Example~\ref{ex:arb_swi_Kinf}. On the left, a representation of the state space with the forward invariant sets $K_{\rm Q}$ and $K_{\rm SOS}$, the equilibria of each subsystem (black dots), and $10$ trajectories undergoing random switching. On the right, the evaluation of the polynomial $V(x)$ associated to $K_{\rm SOS}$ along each of these trajectories and the horizontal line represents the level set used to define $K_{\rm SOS}$. } 
    \label{fig:arb_swi_Kinf}
\end{figure}
\begin{myexample}\label{ex:arb_swi_Kinf}
Consider the switched affine system introduced in Example~\ref{ex:FirstExample}. We constructed two forward invariant sets that serve as outer approximations of the minimal one $\cK_\infty$. These sets are respectively based on Proposition~\ref{coro:lmi_arb} and Proposition~\ref{prop:sos_arb}, and all related optimization problems were solved using \textsc{yalmip}~\cite{Lofberg2004} and \textsc{mosek}. The first one is the ellipsoidal region $\cK_{\rm Q}$, which is defined in \eqref{eq:KQ_def} and relies on the existence of a common quadratic Lyapunov Function for the linearized system~\eqref{eq:DeAffinedSystem}. For the first outer bound, the obtained matrix $S\succ0$ and center $c\in \R^n$ were
\[
S =\begin{bmatrix}
      	0.7120 & -0.2021 \\ 
      	 -0.2021 & 0.7120
     \end{bmatrix},~~c=\begin{bmatrix}
      	-0.6291 \\ 
      	 -0.6291
     \end{bmatrix},
\]
which solve the problem $\min_{S,c,\kappa} {\rm trace}(S)$ subject to \eqref{eq:lmis_arb} for $\kappa=0.4785$. The second set is  $\cK_{\rm SOS}$, which was obtained by solving the SOS optimization problem $\min_{V(x),r} r$ subject to \eqref{eq:sos_arb}, where $\epsilon=10^{-2}$ is chosen to avoid a trivial solution $V(x)\equiv0$. Restricting the search for the polynomial $V(x)$ to those with maximum total degree $d=12$ and choosing $\beta=1$, the convex optimization problem yielded the forward invariant set $\cK_{\rm SOS}$ as defined in~\eqref{eq:Ksos_def}. {The average elapsed-times for the solution of these optimization problems were $0.1096$ and $0.2128$ seconds, respectively, which were executed on an Intel$^{\circledR}$ Core$^{\text{\tiny TM}}$ i7-10610U CPU @ 1.80~GHz$\times$8 with $16$~GB of memory running Matlab R2020a on Ubuntu 20.04.} Both $\cK_{\rm Q}$ and $\cK_{\rm SOS}$  are represented in Figure~\ref{fig:arb_swi_Kinf} (left) along with the equilibria of each subsystem (black dots) and 10 trajectories starting on the boundary of $\cK_{\rm Q}$ that undergo random switching. The value of $V(x)$ evaluated along each of these trajectories is also depicted in Figure~\ref{fig:arb_swi_Kinf} (right) as a function of time, where the horizontal line represents the level set defining $\cK_{\rm SOS}$ with $r=0.0110$. In detail, we can notice that the trajectories were all attracted to $\cK_{\rm SOS}\supseteq \cK_\infty$ and never left it once in its interior, as expected.
\end{myexample}
The numerical experiments carried out in the previous example indicate that, although the SOS-based method given in~Proposition~\ref{prop:sos_arb} may yield more precise outer approximations for $\cK_\infty$ {(possibly non-convex), the LMI-based approach from~Proposition~\ref{coro:lmi_arb} can be computationally more attractive since it scales better with the system dimension. This concludes the stability study of arbitrarily switching signals and we now move on to present the dwell-time switching results.}

% \bibliography{biblio.bib} 
% \bibliographystyle{ieeetr}
% \end{document}
%%%%%%%%%%%%%%%%%%%%%%%%%%%%%%%%%%%%%%%%%%%%%%%%%%%%%%%%%%%%

\section{Dwell-Time Switching Signals}\label{sec:DwellTime}
In many practical situations, stability under \emph{arbitrary} switching signals is a restrictive requirement, and system~\eqref{eq:Switchingsystems} is known to follow prescribed switching rules $\sigma\in \cS$ that satisfy constraints  bounding the frequency of the switching events. For that reason, in this section, given any threshold $\tau>0$ representing a \emph{dwell-time}, we restrict our analysis to the switching signals belonging to the \emph{dwell-time signal class}  $\cS_{\dw}(\tau)$ defined in~\eqref{eq:DwelltimeSignal}.
\subsection{Stability Analysis and Ultimate Boundedness}
Stability of switched linear systems as in~\eqref{eq:DeAffinedSystem} under dwell-time assumption is a well-studied problem and it is known that, if all the matrices $A_1,\dots,A_M\in \R^{n\times n}$ are Hurwitz, there exists a (large enough) dwell-time for which the switched system~\eqref{eq:DeAffinedSystem} is UGAS~\cite[Lemma~2]{Mor96},~\cite[Chapter 3]{liberzon}.  In this subsection, given $\tau>0$,  we analyze the stability/convergence properties of the switched \emph{affine} system~\eqref{eq:Switchingsystems} on  $\cS_{\dw}(\tau)$.  First of all we observe in the next result that, if the linearized system is unstable under arbitrary switching signals, there is no hope to find non-trivial forward invariant sets on $\cS_{\dw}(\tau)$, for \emph{any} dwell time $\tau>0$.
\begin{lemma}\label{lem:NonExistenceOFFIsetsDT}
Consider $\cF=\{ (A_i, b_i)_{i\in \M}\,\vert A_i\in \R^{n \times n}, b_i\in \R^n\}$, and suppose that the linearized system~\eqref{eq:DeAffinedSystem} is unstable on $\cS$. Then, \emph{for any} $\tau>0$ there does not exist a compact forward invariant set with non-empty interior on $\cS_{\dw}(\tau)$. If moreover the set $\cA=\{A_1,\dots, A_M\}$ is irreducible, there exists no compact non-singleton forward invariant set on $\cS_{\dw}(\tau)$.
\end{lemma}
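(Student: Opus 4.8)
The plan is to reduce the statement to its arbitrary-switching counterpart, Proposition~\ref{prop:Instability}, by observing that forward invariance on $\cS_\dw(\tau)$ is in fact \emph{equivalent} to forward invariance on $\cS$, for \emph{any} $\tau>0$. The crucial and elementary remark is that every constant signal $\sigma\equiv i$, with $i\in\M$, has no switching instants and therefore belongs to $\cS_\dw(\tau)$ regardless of $\tau$. Consequently, if a compact set $C$ is forward invariant on $\cS_\dw(\tau)$, then applying the definition to each such constant signal shows that every individual subsystem flow $x\mapsto\Psi_{\sigma\equiv i}(s,x)$ maps $C$ into itself, for all $s\in\R_{\geq 0}$ and all $i\in\M$.

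Next I would upgrade this subsystem-wise invariance to full invariance on $\cS$. Fix any $\sigma\in\cS$ and any $t\in\R_{\geq 0}$. Since signals in $\cS$ are piecewise constant with finitely many discontinuities on $[0,t]$, the map $\Psi_\sigma(t,\cdot)$ is a finite composition of the individual subsystem flows $\Psi_{\sigma\equiv i}(s,\cdot)$ evaluated over the successive dwell intervals (by the cocycle property $\Phi_\sigma(t,s)=\Phi_{\wt\sigma}(t-s,0)$ of the state-transition map already used in the excerpt). As each factor maps $C$ into $C$ by the previous step, so does their composition, and the same holds at every intermediate time. Hence $\Psi_\sigma(t,x)\in C$ for all $x\in C$, all $\sigma\in\cS$ and all $t\in\R_{\geq 0}$, i.e., $C$ is forward invariant on $\cS$.

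With this equivalence in hand, both assertions follow at once from Proposition~\ref{prop:Instability}: a compact set that is forward invariant on $\cS_\dw(\tau)$ is forward invariant on $\cS$, so under instability of~\eqref{eq:DeAffinedSystem} on $\cS$ it can neither have non-empty interior nor --- adding the irreducibility of $\cA$ --- be a non-singleton. I expect the only real subtlety to be conceptual rather than technical: the natural first attempt is to \emph{transfer} the destabilizing arbitrary-switching signal of Proposition~\ref{prop:Instability} into the class $\cS_\dw(\tau)$, which cannot work in general, since a dwell-time constraint may well render an arbitrarily-switching-unstable linear system stable. The correct move is instead to compare the \emph{invariant sets} of the two classes; because invariance is governed by the behaviour of the individual subsystems, which constant signals always make available, it is insensitive to the dwell-time restriction. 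As a by-product, this reasoning already hints at why stable dwell-time affine systems may nonetheless fail to possess forward invariant sets: stability on $\cS_\dw(\tau)$ does not preclude instability on $\cS$, and by the argument above the latter alone obstructs the existence of non-trivial forward invariant sets.
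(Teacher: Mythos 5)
Your proof is correct, and it takes a genuinely different route from the paper's. The paper argues by contradiction: it invokes Lemma~\ref{lemma:convexity} to replace the putative forward invariant set $C$ by its convex hull, locates the exit point $y\in\partial C$ of a trajectory that leaves $C$ under some $\sigma\in\cS$ (such a trajectory exists by Proposition~\ref{prop:Instability}), and then applies the Nagumo tangent-cone criterion to extract a single mode $i$ with $A_iy+b_i\notin\cT_C(y)$, so that the constant signal $\wt\sigma\equiv i$ --- which is dwell-time admissible --- escapes $C$. You instead prove the stronger and cleaner structural fact that forward invariance on $\cS_{\dw}(\tau)$ is \emph{equivalent} to forward invariance on $\cS$: constant signals lie in every class $\cS_{\dw}(\tau)$, so each subsystem flow $\Psi_i(s,\cdot)$ maps $C$ into $C$, and since the flow map of any piecewise constant signal is a finite composition of such subsystem flows (by the semigroup property $\Psi_\sigma(t,x)=\Psi_{\wt\sigma}(t-s,\Psi_\sigma(s,x))$ with $\wt\sigma(\cdot)=\sigma(\cdot+s)$), invariance transfers to all of $\cS$. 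Both arguments hinge on the availability of constant signals in $\cS_{\dw}(\tau)$, but yours dispenses with convexification and Nagumo's theorem entirely, needs no boundary or exit-time analysis, and yields a reusable equivalence of invariance across signal classes; the paper's version, by contrast, pinpoints a concrete destabilizing mode at a concrete boundary point, which is geometrically more explicit. Your closing observation --- that one must compare invariant sets rather than transfer the destabilizing signal into $\cS_{\dw}(\tau)$ --- correctly identifies the one conceptual trap, and both parts of the statement (non-empty interior, and the irreducible non-singleton case) then follow directly from Proposition~\ref{prop:Instability} as you say.
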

\begin{proof}[Sketch of Proof]
The proof follows by Proposition~\ref{prop:Instability} and is briefly sketched here. Given any $\tau>0$, suppose by contradiction that $C\subset\R^n$ is a compact forward invariant set with non-empty interior for~\eqref{eq:Switchingsystems} on $\cS_{\dw}(\tau)$. Without loss of generality, by Lemma~\ref{lemma:convexity}, we can suppose that $C$ is convex. By Proposition~\ref{prop:Instability}, $C$ is not forward invariant on $\cS$, i.e., there exists an $x\in C$, a $\sigma\in \cS$ and a $T>0$ such that $\Psi_\sigma(T,x)\notin C$. Consider $\overline t\coloneqq \sup_{t\in [0,T]}\{\Psi_\sigma(t,x)\in C\}$ and denote $y=\Psi_\sigma(\overline t,x)$. It can be seen that $y\in \partial C$ and, by convexity of $C$ and recalling the Nagumo Theorem for differential inclusions~\cite[Chapter 4, pag. 175]{AubinCellina84}, there exists $i\in \M$ such that $A_iy+b_i\notin \cT_C(y)$, where $\cT_C(y)$ denotes the tangent cone to $C$ at $y$.
It is thus clear that considering the constant signal $\wt \sigma\in \cS_{\dw}(\tau)$ defined by $\wt \sigma(s)\equiv i$ for all $s\in \R_{\geq 0}$, we have that there exists $t_0>0$ such that $\Psi_\sigma(t_0,y)\notin C$, contradicting the forward invariance of $C$ for~\eqref{eq:Switchingsystems} on $\cS_{\dw}(\tau)$. The second part, assuming the irreducibility of $\cA$, follows from the reasoning already presented in the proof of Proposition~\ref{prop:Instability}.
\end{proof}
Despite this limiting result, UGAS of the linearized system~\eqref{eq:DeAffinedSystem} on $\cS_{\dw}(\tau)$ does imply some remarkable asymptotic properties of the switched affine system~\eqref{eq:Switchingsystems} on $\cS_{\dw}(\tau)$, and thus we need to recall the following (weak) stability/boundedness notion. 
\begin{defn}\label{defn:UGUB}
Given any class of switching signals $\wt \cS\subset \cS$, the switched affine system \eqref{eq:Switchingsystems} is said to be \emph{uniformly globally ultimately bounded (UGUB) on $\wt \cS$} if there exists a compact set $\cV\subset \R^n$  such that
\[
\begin{aligned}
\forall x\in \R^n,\;\forall \sigma\in \wt \cS,\; \exists\, T(\sigma,x)\geq 0\;\; \text{such that}\;\forall t\geq T(\sigma,x), \;\;\Psi_\sigma(t,x)\in \cV .
\end{aligned}
\]
In this case the compact set $\cV\subset\R^n$ is said to be a  \emph{uniform bounding region}.\hfill $\triangle$
\end{defn}
Note that the set $\cV$ in Definition~\ref{defn:UGUB}, in general, is \emph{not} forward invariant and thus,  not an attractor.
In the following we prove that, if for some $\tau>0$ the linearized system~\eqref{eq:DeAffinedSystem} is UGAS on $\cS_{\dw}(\tau)$ then the switched affine system~\eqref{eq:Switchingsystems} is UGUB.\\

%%%%%%%%%%%%%%%%%%%%%%
%%%\lucas{From this point on I'm changing a bit everything. The old version is still in the tex after the {{$\backslash$\tt end\{document\}}}}.\\
%%%\lucas{(19 aug), I'm changing it all again. the centers for the norms don't have to be $c_i$, which helps the numerical solution. I'll rewrite in a more general way then.. }

\begin{thm}\label{thm:DwellTimeBoundedness}
For any given $\tau\in \R_{\geq 0}$ and  $\cF=\{ (A_i, b_i)_{i\in \M}\,\vert A_i\in \R^{n \times n}, b_i\in \R^n\}$, suppose that the linearized system~\eqref{eq:DeAffinedSystem} is UGAS on $\cS_{\dw}(\tau)$. Then the switched affine system \eqref{eq:Switchingsystems} is uniformly globally ultimately bounded on $\cS_{\dw}(\tau)$.
\end{thm}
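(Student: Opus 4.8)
The plan is to deduce ultimate boundedness of the affine system from a \emph{uniform} exponential bound on the linearized transition matrix, exactly as Proposition~\ref{prop:Kvm} did under arbitrary switching, the only novelty being that the single norm is replaced by the family $v_1,\dots,v_M$ provided by Lemma~\ref{lemma:DwellTimeConverseLyapLinear}. First I would fix norms $v_1,\dots,v_M$ and a rate $\kappa>0$ satisfying~\eqref{eq:Cond1}--\eqref{eq:Cond2}, and establish the uniform estimate
\[
|\Phi_\sigma(t,s)|\le C\,e^{-\kappa(t-s)},\qquad \forall\,0\le s\le t,\ \forall\,\sigma\in\cS_\dw(\tau),
\]
for a constant $C\ge 1$ depending only on $\{(A_i,b_i)\}_{i\in\M}$, on $\kappa$ and on $\tau$. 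Granting this, substitution into the variation-of-constants formula~\eqref{eq:Solution} gives, for every $x\in\R^n$, every $\sigma\in\cS_\dw(\tau)$ and every $t\ge0$,
\[
|\Psi_\sigma(t,x)|\le C e^{-\kappa t}|x|+\int_0^t C e^{-\kappa(t-s)}|b_{\sigma(s)}|\,ds\le C e^{-\kappa t}|x|+\frac{C\,\max_{i\in\M}|b_i|}{\kappa}.
\]
Choosing any $\rho>C\max_{i\in\M}|b_i|/\kappa$, the Euclidean ball $\cV:=\{z\mid|z|\le\rho\}$ is entered after the finite time $T(\sigma,x)$ at which $Ce^{-\kappa t}|x|$ drops below $\rho-C\max_{i\in\M}|b_i|/\kappa$, and is never left afterwards, which is precisely Definition~\ref{defn:UGUB}.

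The crux, and the step I expect to be the main obstacle, is the uniform transition bound. The naive identity $\Phi_\sigma(t,s)=\Phi_{\widetilde\sigma}(t-s,0)$ with $\widetilde\sigma(\cdot)=\sigma(\cdot+s)$ fails here, because shifting an arbitrary interior time $s$ forward need not preserve the dwell-time property: the first interval of $\widetilde\sigma$ may be shorter than $\tau$, so $\widetilde\sigma\notin\cS_\dw(\tau)$ and the exponential stability furnished by Lemma~\ref{lemma:DwellTimeConverseLyapLinear} cannot be invoked directly. I would resolve this by splitting the evolution at the first switching instant $t^\sigma_{k_0+1}>s$ (where $s\in[t^\sigma_{k_0},t^\sigma_{k_0+1})$): the factor $\Phi_\sigma(t^\sigma_{k_0+1},s)=e^{A_{\sigma(s)}(t^\sigma_{k_0+1}-s)}$ is a single-mode flow over a time shorter than $\tau$, hence bounded by $C_0:=\max_{i\in\M}\sup_{0\le u\le\tau}|e^{A_iu}|$, whereas the remaining factor equals $\Phi_{\widehat\sigma}(t-t^\sigma_{k_0+1},0)$ with $\widehat\sigma(\cdot)=\sigma(\cdot+t^\sigma_{k_0+1})$, and now $\widehat\sigma$ \emph{does} lie in $\cS_\dw(\tau)$, since shifting to a switching instant leaves every inter-switch gap $\ge\tau$. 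Exponential stability on $\cS_\dw(\tau)$ then yields $|\Phi_{\widehat\sigma}(t-t^\sigma_{k_0+1},0)|\le C_1 e^{-\kappa(t-t^\sigma_{k_0+1})}$; multiplying the two factors and absorbing the slack $e^{\kappa\tau}$ produces the claim with $C=C_0C_1e^{\kappa\tau}$.

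An equivalent, more self-contained route that sidesteps the shift argument is to discretize at the switching times $\{t^\sigma_k\}$. Writing $\xi_k:=\Psi_\sigma(t^\sigma_k,x)$ and $i_k:=\sigma(t^\sigma_k)$, the sampled dynamics read $\xi_{k+1}=e^{A_{i_k}\tau_k}\xi_k+c_k$ with $\tau_k=t^\sigma_{k+1}-t^\sigma_k\ge\tau$ and $c_k=\int_0^{\tau_k}e^{A_{i_k}(\tau_k-r)}b_{i_k}\,dr$. Tracking the Lyapunov sequence $V_k:=v_{i_k}(\xi_{k+1})$ and peeling the final duration $\tau$ off the $k$-th interval lets me combine~\eqref{eq:Cond1} (over the length $\tau_k-\tau$, where the norm index $i_k$ matches the matrix index) with~\eqref{eq:Cond2} (over the last $\tau$, taken with $j=i_{k-1}$) to get $v_{i_k}(e^{A_{i_k}\tau_k}\xi_k)\le e^{-\kappa\tau_k}V_{k-1}$; meanwhile $v_{i_k}(c_k)\le\max_{i\in\M}v_i(b_i)/\kappa$ by~\eqref{eq:Cond1}. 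This gives the scalar contraction $V_k\le e^{-\kappa\tau}V_{k-1}+\max_{i\in\M}v_i(b_i)/\kappa$, so $\limsup_k V_k$ is bounded by a data-dependent constant. A within-interval estimate of the same type, together with the equivalence of the norms $v_1,\dots,v_M$ on $\R^n$, then transfers the sampled bound to all $t\ge0$ and to a single Euclidean ball $\cV$, covering equally the cases of infinitely and finitely many switches. In either route the affine forcing contributes only the harmless bounded term; the entire difficulty lies in the uniform-in-$\sigma$ control of the linear part, i.e.\ the index bookkeeping between~\eqref{eq:Cond1} and~\eqref{eq:Cond2} (discrete route) or the shift subtlety at switching instants (transition-matrix route).
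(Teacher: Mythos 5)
Your proposal is correct, but it follows a genuinely different route from the paper. The paper proves the theorem through Lemma~\ref{lemma:SecurityBall}: it builds translated norms $\wt v_i(x)=v_i(x-c_i)$ and sublevel ``safety'' sets $\X_i$ outside of which the conditions \eqref{eq:Cond1}--\eqref{eq:Cond2} survive the affine perturbation, and then establishes two claims --- every trajectory reaches $\X=\bigcup_i\X_i$ at some finite switching instant, and from $\X$ it never leaves $\V=\bigcup_{i,j}\bigcup_{t\in[0,\tau]}\Psi_j(t,\X_i)$. That heavier construction is what later feeds the LMI-based computation of the bounding region and the proposition on forward invariance with respect to switching points. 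You instead reduce everything to a uniform bound $|\Phi_\sigma(t,s)|\le C e^{-\kappa(t-s)}$ over $\cS_{\dw}(\tau)$ and read off ultimate boundedness from the variation-of-constants formula; you correctly isolate the only delicate point (shifting by an interior time $s$ destroys the dwell-time property) and repair it by splitting at the next switching instant, where the shifted signal does belong to $\cS_{\dw}(\tau)$. This is shorter and yields an explicit Euclidean ball, but a much cruder bounding region than the paper's $\V$. One local slip: the first factor $e^{A_{\sigma(s)}(t^\sigma_{k_0+1}-s)}$ is a single-mode flow over a time that can be \emph{arbitrarily long}, not shorter than $\tau$, since the dwell-time condition only lower-bounds inter-switch gaps; so your constant $C_0=\max_{i}\sup_{0\le u\le\tau}|e^{A_iu}|$ does not cover it. The repair is immediate from facts you already invoke: \eqref{eq:Cond1} gives $v_i(e^{A_iu}x)\le e^{-\kappa u}v_i(x)$ for all $u\ge 0$, hence $|e^{A_iu}|\le c\,e^{-\kappa u}$ uniformly in $u\ge 0$ by equivalence of norms, and the composition argument goes through unchanged. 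Your second, discretized route is closer in spirit to the paper's proof (it uses the same index bookkeeping between \eqref{eq:Cond1} and \eqref{eq:Cond2} at switching instants), just without the translated-norm and safety-set machinery, and it is also sound.
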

For any $i$-th affine subsystem $\dot x=A_ix+b_i,~i\in\M$,  we denote by $\Psi_i(\cdot, x):\R_{\geq 0}\to\R^n$ the solution starting at $x(0)=x$ and by $x_{ei}\coloneqq -A_i^{-1}b_i$ the corresponding equilibrium.
The proof of Theorem~\ref{thm:DwellTimeBoundedness} requires the following preliminary result. 
\begin{lemma}\label{lemma:SecurityBall}
Consider the switched affine system given in~\eqref{eq:Switchingsystems}. For some $\tau>0$, suppose that the linearized system~\eqref{eq:DeAffinedSystem} is UGAS on $\cS_{\dw}(\tau)$. % Consider $\kappa>0$ and norms $v_1,\dots, v_M:\R^n\to \R$ as defined in Lemma~\ref{lemma:DwellTimeConverseLyapLinear}. 
Then, there exist translated norms\footnote{A function $w:\R^n\to \R$ is said to be a \emph{translated norm} if there exist a norm $v:\R^n\to \R$ and a vector $c\in \R^n$ (called the \emph{center} of $w$) such that $w(x)=v(x-c)$, for all $x\in \R^n$.} $\wt v_i:\R^n\to \R$, a scalar $\wt\kappa>0$ and a collection of compact sets $\X_{i}\subset\R^n,~ i\in \M$, such that
   
    \begin{subequations}
     \begin{equation}
        x_{ei}\in \Inn(\X_i),\;\;\forall\; i\in \M,\label{eq:CondCenterAff}
    \end{equation}
    \begin{equation}
    \wt v_i\big(\Psi_i(t,x)\big)\leq \wt v_i(x),\;\;\;\,\forall x\in \R^n\setminus \Inn(\X_{i}),\,\,\forall t\in \R_{\geq 0},\forall i\in \M,\label{eq:Cond1Aff}
    \end{equation}
   \begin{equation}
       \wt v_i\big(\Psi_i(\tau,x)\big)\leq e^{-\wt \kappa\tau}\wt v_j(x),\;\;\;\forall x\in \R^n\setminus \Inn(\X_{j}),\;\forall (i,j)\in \M^2.\label{eq:Cond2Aff}
        \end{equation} \end{subequations}
      
%        Moreover we necessarily have $c_i\in \B(\bar c,R_1)$, $\forall i\in \M$.
\end{lemma}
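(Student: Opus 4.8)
The plan is to construct everything explicitly from the multiple Lyapunov norms furnished by Lemma~\ref{lemma:DwellTimeConverseLyapLinear}. Since \eqref{eq:DeAffinedSystem} is UGAS on $\cS_{\dw}(\tau)$, that lemma gives norms $v_1,\dots,v_M$ and a scalar $\kappa>0$ obeying \eqref{eq:Cond1}--\eqref{eq:Cond2}. The single structural fact I would exploit is that, in coordinates centered at its own equilibrium $x_{ei}=-A_i^{-1}b_i$, the flow of the $i$-th affine subsystem is purely linear: the substitution $z=x-x_{ei}$ turns $\dot x=A_ix+b_i$ into $\dot z=A_iz$, whence
\[
\Psi_i(t,x)-x_{ei}=e^{A_it}(x-x_{ei}),\qquad \forall\,x\in\R^n,\ \forall\,t\in\R_{\geq 0}.
\]
I would then define the translated norms $\wt v_i(x)\coloneqq v_i(x-x_{ei})$, each $v_i$ recentered at $x_{ei}$. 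With this choice \eqref{eq:Cond1Aff} comes for free and in fact holds on all of $\R^n$, since the identity above together with \eqref{eq:Cond1} gives $\wt v_i(\Psi_i(t,x))=v_i(e^{A_it}(x-x_{ei}))\leq e^{-\kappa t}\wt v_i(x)\leq\wt v_i(x)$. The entire difficulty is therefore concentrated in the cross-condition \eqref{eq:Cond2Aff}, where the mismatch of equilibria enters.

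For \eqref{eq:Cond2Aff} I would compute in the same manner, this time invoking \eqref{eq:Cond2}, to obtain $\wt v_i(\Psi_i(\tau,x))=v_i(e^{A_i\tau}(x-x_{ei}))\leq e^{-\kappa\tau}v_j(x-x_{ei})$. The obstruction is that the bound is measured by $v_j$ recentered at $x_{ei}$, whereas $\wt v_j$ is recentered at $x_{ej}$; the two centers differ by the fixed vector $x_{ej}-x_{ei}$. Absorbing this discrepancy through the triangle inequality,
\[
v_j(x-x_{ei})\leq \wt v_j(x)+D_{ij},\qquad D_{ij}\coloneqq v_j(x_{ej}-x_{ei}),
\]
yields $\wt v_i(\Psi_i(\tau,x))\leq e^{-\kappa\tau}\wt v_j(x)+e^{-\kappa\tau}D_{ij}$. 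To turn this into the clean bound $e^{-\wt\kappa\tau}\wt v_j(x)$ I would deliberately weaken the rate, fixing any $\wt\kappa\in(0,\kappa)$ so that $e^{-\wt\kappa\tau}-e^{-\kappa\tau}>0$; the target inequality then holds exactly when $\wt v_j(x)\geq \rho_{ij}\coloneqq e^{-\kappa\tau}D_{ij}/(e^{-\wt\kappa\tau}-e^{-\kappa\tau})$. Because this fails near $x_{ej}$, where $\wt v_j$ vanishes, a compact exclusion region is unavoidable: I would set $\rho_j\coloneqq\max_{i\in\M}\rho_{ij}$ (taken strictly positive, which merely enlarges $\X_j$ and hence preserves all inequalities) and define $\X_j\coloneqq\{x\in\R^n\mid \wt v_j(x)\leq\rho_j\}$, a recentered norm-ball that is compact and contains $x_{ej}$ in its interior, giving \eqref{eq:CondCenterAff}. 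Then $x\notin\Inn(\X_j)$ forces $\wt v_j(x)\geq\rho_j\geq\rho_{ij}$ for every $i$, which is precisely \eqref{eq:Cond2Aff}. Note that the diagonal case $i=j$ is automatic, as $D_{ii}=0$ and hence $\rho_{ii}=0$.

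The one genuinely delicate point is this center-mismatch bookkeeping: one must recognize that dwell-time switching forces a comparison between norms recentered at distinct equilibria, that the resulting constant offset $D_{ij}$ can be soaked up only by trading the sharp rate $\kappa$ for a slightly smaller $\wt\kappa$, and that this trade necessarily breaks down in a neighborhood of each $x_{ej}$ — which is exactly the information encoded by the compact sets $\X_j$. Everything else reduces to the linear-in-centered-coordinates identity and a triangle inequality.
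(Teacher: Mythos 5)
Your proof is correct, and it follows the same strategy as the paper's: invoke Lemma~\ref{lemma:DwellTimeConverseLyapLinear} to get norms $v_1,\dots,v_M$ and a rate $\kappa$, recenter them into translated norms, trade $\kappa$ for a strictly smaller $\wt\kappa$ to absorb the constant offset caused by the equilibrium mismatch, and let the threshold at which this trade succeeds define the exclusion sets $\X_j$. The only substantive difference is that you fix the centers at $c_i=x_{ei}$ from the outset, which makes each subsystem's flow exactly linear in centered coordinates; this eliminates the drift term $\int_0^t e^{A_i(t-s)}(A_ic_i+b_i)\,ds$ that the paper must estimate in \eqref{eq:BoundDwellFlow}, gives you \eqref{eq:Cond1Aff} globally for free, and reduces the paper's threshold \eqref{eq:DefnR_ij} to your $\rho_{ij}$ (the second summand in the numerator vanishes and $\wt v_j(c_i)$ becomes your $D_{ij}$). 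The paper deliberately carries arbitrary centers $c_i$ through the computation — at the cost of the extra integral estimate and the verification that \eqref{eq:Cond1Aff} and \eqref{eq:CondCenterAff} still hold on the resulting sublevel sets — because the freedom to move the centers away from the equilibria is exploited later by the LMI formulation and the path-following algorithm that optimizes the $c_i$; your specialization proves the lemma as stated but would not support that subsequent development. One minor cosmetic difference: you allow the radii $\rho_j$ to depend on $j$, while the paper takes the common radius $R_\X=\max_{(i,j)}R_{ij}$, a choice that is immaterial for the lemma itself but is the form used in the proof of Theorem~\ref{thm:DwellTimeBoundedness}.
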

\begin{proof}
Since the linearized system~\eqref{eq:DeAffinedSystem} is UGAS on $\cS_\dw(\tau)$, we consider the scalar $\kappa>0$ and norms $v_1,\dots, v_M:\R^n\to \R_{\geq 0}$ verifying the conditions \eqref{eq:Cond1} and  \eqref{eq:Cond2} defined in Lemma~\ref{lemma:DwellTimeConverseLyapLinear}. Define $\wt v_i(x)\coloneqq  v_i(x-c_i)$ with arbitrary $c_i\in\R^n$ and take any positive scalar $\wt\kappa<\kappa$.
First, to demonstrate that~\eqref{eq:Cond1} implies \eqref{eq:Cond1Aff}, consider any $i\in \M$, any $x\in \R^n$, and any $t\in \R_{\geq 0}$. % Since $x_{ei}$ is an equilibrium of the subsystem $\dot x=A_ix +b_i$ we have
%\[
%\begin{aligned}
%x_{ei}&=\Psi_i(x_{ei},t)\\
%&=e^{A_it}x_{ei}+\int_0^t e^{A_i(t-s)}b_i\,ds,\;\;\;\forall \;t\in \R_{\geq 0}.
%\end{aligned}
%\]
Thus, computing
\begin{align}
 \wt v_i\big(\Psi_i(t,x)\big)\hspace{-0.05cm}&= v_i\left(e^{A_it}x+\int_0^t e^{A_i(t-s)}b_i\,ds-c_i + \Psi_i(t,c_i)- \Psi_i(t,c_i) \right)\nonumber\\&=
v_i\left(e^{A_it}x+\int_0^t e^{A_i(t-s)}b_ids-e^{A_it}c_i-\hspace{-0.05cm}\int_0^t e^{A_i(t-s)}b_ids +  \Psi_i(t,c_i)-c_i\right)\nonumber\\&\leq
v_i\left(e^{A_it}(x-c_i)\right) +v_i( \Psi_i(t,c_i)-c_i)\nonumber\\&
\leq e^{-\kappa t}v_i(x-c_i)+v_i( \Psi_i(t,c_i)-c_i)\nonumber\\&=
e^{-\kappa t}\wt v_i(x)+v_i( \Psi_i(t,c_i)-c_i).\label{eq:TranslatedNorms}
\end{align}
Also, notice that
\begin{align}
     \Psi_i(t,c_i)-c_i &=e^{A_it}c_i+\int_0^t e^{A_i(t-s)}b_i\,ds-c_i\nonumber\\
     &=(e^{A_it}-I)c_i+\int_0^t e^{A_i(t-s)}b_i\,ds\nonumber\\
     &=\int_0^t e^{A_i(t-s)}A_ic_i\,ds+\int_0^t e^{A_i(t-s)}b_i\,ds\nonumber\\
     &=\int_0^t e^{A_i(t-s)}(A_ic_i+b_i)\,ds,
\end{align}
which, combined with \eqref{eq:TranslatedNorms} yields
\begin{align}
    \wt v_i\big(\Psi_i(t,x)\big) &\leq e^{-\kappa t}\wt v_i(x)+v_i\left(\int_0^t e^{A_i(t-s)}(A_ic_i+b_i)\,ds\right)\nonumber\\
    &\leq e^{-\kappa t}\wt v_i(x)+\int_0^t v_i\left( e^{A_i(t-s)}(A_ic_i+b_i)\right)\,ds\nonumber\\
    &\leq e^{-\kappa t}\wt v_i(x)+\int_0^t e^{-\kappa (t-s)} v_i\left( A_ic_i+b_i\right)\,ds\nonumber\\
    &\leq e^{-\kappa t}\wt v_i(x)+ \frac{1- e^{-\kappa t}}{\kappa} v_i\left( A_ic_i+b_i\right),\label{eq:BoundDwellFlow}
\end{align}
which ensures \eqref{eq:Cond1Aff} for all $x\in\R^n$ such that $\wt v_i(x)\geq \kappa^{-1}v_i(A_ic_i+b_i)$. Moreover we note that~\eqref{eq:BoundDwellFlow} holds globally if one chooses $c_i=x_{ei}$ but, for the sake of generality (and for numerical reasons illustrated in what follows), this demonstration is carried out for arbitrary centers $c_i,~i\in\M$. 

 To prove~\eqref{eq:Cond2Aff}, consider any $x\in \R^n$ and any $(i,j)\in \M^2$. Reasoning similarly to~\eqref{eq:TranslatedNorms}, we have

\begin{align}
\wt v_i\big(\Psi_i(\tau,x)\big)&\leq
v_i\left(e^{A_i\tau}(x-c_i)\right) +v_i( \Psi_i(\tau,c_i)-c_i)\nonumber\\&\leq
 e^{-\kappa\tau}v_j(x-c_i) +\frac{1- e^{-\kappa \tau}}{\kappa} v_i\left( A_ic_i+b_i\right) \nonumber\\&\leq 
 e^{-\kappa\tau} \big(v_j(x-c_j)+v_j(c_j-c_i)\big)+\frac{1- e^{-\kappa \tau}}{\kappa} v_i\left( A_ic_i+b_i\right)\nonumber\\&=
 e^{-\kappa\tau} \big(\wt v_j(x)+\wt v_j(c_i)\big)+\frac{1- e^{-\kappa \tau}}{\kappa} v_i\left( A_ic_i+b_i\right).
\end{align}
Notice that, for any $0<\wt \kappa<\kappa$, we have that 
\[
\wt v_i\big(\Psi_i(\tau,x)\big)\leq e^{-\wt \kappa \tau}\wt v_j(x),
\]
if $x\in\R^n$ is such that 
\begin{equation}\label{eq:DefnR_ij}
    \wt v_j(x)\geq \frac{e^{-\kappa \tau}\wt v_j(c_i) +\kappa^{-1}(1- e^{-\kappa \tau})v_i\left( A_ic_i+b_i\right)}{e^{-\wt\kappa \tau}-e^{-\kappa \tau}}=:R_{ij}.
\end{equation}
Hence, conditions in~\eqref{eq:Cond2Aff} are verified considering%it  suffices to choose a center $\bar c\in\R^n$ and a scalar $R_1>0$ large enough such that
\begin{equation}\label{eq:Vij_def}
    \X_{i}\coloneqq \{x\in \R^n\,\vert\,\wt v_i(x)\leq R_\X\},%\subseteq \B(\bar c,R_1),\;\;\;\forall \;{(i,j)}\in \M^2,
\end{equation}
with $R_\X \coloneqq  \max_{(i,j)\in\M^2} R_{ij}$. Notice that~\eqref{eq:Cond1Aff} also holds for the same $\X_i$ because
\begin{align}\label{eq:Bounds}
    R_{ij}&=\frac{e^{-\kappa \tau}\wt v_j(c_i) +\kappa^{-1}(1- e^{-\kappa \tau})v_i\left( A_ic_i+b_i\right)}{e^{-\wt\kappa \tau}-e^{-\kappa \tau}}\nonumber\\
    &\geq\frac{\kappa^{-1}(1- e^{-\kappa \tau})v_i\left( A_ic_i+b_i\right)}{e^{-\wt\kappa \tau}-e^{-\kappa \tau}}\nonumber\\
    &> \frac{\kappa^{-1}(1- e^{-\kappa \tau})v_i\left( A_ic_i+b_i\right)}{1-e^{-\kappa \tau}}\nonumber\\
    &= \kappa^{-1}v_i\left( A_ic_i+b_i\right).
\end{align}
{Finally, considering $x=x_{ei}$ in~\eqref{eq:BoundDwellFlow} and any $t>0$, we note that $\wt v_i(x_{ei})\leq \kappa^{-1}v_i(A_ic_i+b_i)$. This, by~\eqref{eq:Bounds} implies that $\wt v_i(x_{ei})<R_{ij}$ for every $(i,j)\in \M^2$, and thus condition~\eqref{eq:CondCenterAff} holds, concluding the proof.}
\end{proof}

%{red}{This proof is ugly, maybe it should be moved in appendix, it is just brute force calculation.}
%\lucas{I tried to reduce the notation and hide some integrals in the $\Psi_i(t,x)$ to make it a bit more photogenic, but I'm cool with sending it to an appendix too.}\lucas{(23 Aug) Yeah, now I think it definetely should go to an appendix. And maybe move the lemma before the theorem is a good idea, then the proof can follow just after.}

We can now prove Theorem~\ref{thm:DwellTimeBoundedness}.
\begin{proof}[Proof of Theorem~\ref{thm:DwellTimeBoundedness}]
Consider translated norms $\wt v_1,\dots, \wt v_M:\R^n\to \R_{\geq 0}$, a scalar $\wt \kappa>0$ and sets $\X_{i}$ defined in~\eqref{eq:Vij_def} satisfying the conditions of Lemma~\ref{lemma:SecurityBall}.
Define the compact set $\X\coloneqq \bigcup_{i\in\M} \X_{i}$ and take 
\begin{equation}\label{eq:def_V}
\begin{aligned}
\V &\coloneqq  \bigcup_{i\in\M} \V_i, \text{ with}\\
{\V}_i &\coloneqq  \bigcup_{j\in\M} \bigcup_{t\in[0,\tau]} \Psi_j(t,\X_i)
\end{aligned}
\end{equation}
where, for every $j \in \M$ and $t\in \R_{\geq 0}$, the set-valued map $\Psi_j(t,\cdot):\R^n\rightrightarrows\R^n$ is defined by $\Psi_j(t,S)\coloneqq \bigcup_{x\in S}\left\{\Psi_j(t,x)\right\}$ for any set $S\subseteq\R^n$.
An illustration of the sets $\V$ and $\X$ is shown in Figure~\ref{fig:setsXV}.
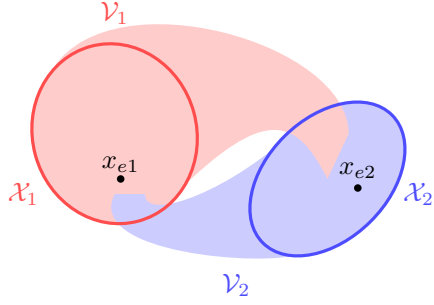
\begin{figure}
    \centering
    \def\myred{red!70}
    \def\myblue{blue!70}
    \def\mylred{red!20}
    \def\mylblue{blue!20}
    
    	\begin{tikzpicture}[scale=0.4]
	%	\draw (0,-2) grid[step=2] (10,3);
	    \coordinate (xe1) at (0,0);
	    \coordinate (xe2) at (7,-1.5);	
	    \fill[rotate=15,very thick,color=\mylred] (xe1) ellipse (2.7cm and 3cm);
		\fill[rotate=45,very thick,color=\mylblue] (xe2) ellipse (3cm and 2cm);
    	
    	\fill[color=\mylblue] (6,0.2) .. controls +(-2,-2) and +(0,-1) .. (1,-2) -- %
    	                    (0,-2) .. controls +(-1,-2) and +(-1,-.4) .. (6,-4) -- (6,0.2);
        \fill[color=\mylred] (-2,2.2) .. controls +(2,3) and +(0,3) .. (7.7,0) -- %
        (7,-1.5) .. controls +(-1,2) and +(3,3) .. (2,-2) -- (-2,2.2);
        
    	\draw[rotate=15,very thick,color=\myred] (xe1) ellipse (2.7cm and 3cm);
		\draw[rotate=45,very thick,color=\myblue] (xe2) ellipse (3cm and 2cm);
	% 
		
	%	\draw[rotate=25,dotted,very thick,color=\myred] (0,0) ellipse (5.3cm and 7.95cm);
	%	\draw[rotate=45,dotted,very thick,color=\myblue] (2,-1.5) ellipse (7.95cm and 5.3cm);
	%	 \draw [ dashed, thick,black] (-7,1.6) to[out=45, in=160] (-4,3.7) to[out=-20, in=150] (-1,1.6);
	 %\draw [ very thick,black] (-1,1.6) to[out=90, in=180] (0.5,3.7) to[out=0, in=100] (2.3,1.5);
	  %\draw [ very thick,black] (2.3,1.5) to[out=0, in=90] (5.8,0.5) to[out=-90, in=20] (3.3,-1);
		\node at (0,4) {\color{\myred}$\mathcal{V}_1$};
		\node at (-3,-2) {\color{\myred}$\mathcal{X}_1$};
		\node at (10,-2) {\color{\myblue}$\mathcal{X}_2$};
		\node at (4,-5) {\color{\myblue}$\mathcal{V}_2$};
\node[circle, fill, scale=0.3] at (0.2,-1.5) {};
\node at (0.2,-1.0) {$x_{e1}$};
\node[circle, fill, scale=0.3] at (8,-1.8) {};
\node at (8,-1.2) {$x_{e2}$};
	\end{tikzpicture}
    \caption{For the case of $2$-mode switched affine system, a qualitative representation of the sets $\X=\X_1 \cup \X_2$ (solid lines) and $\V=\V_1 \cup \V_2$ (filled area). }
    \label{fig:setsXV}
\end{figure}
Then, we establish uniform global ultimate boundedness proving the following two claims:
\begin{itemize}[leftmargin=1.5cm]
    \item[Claim 1.] For any $x\notin  \X$ and any $\sigma\in \cS_{\dw}(\tau)$ there exists a finite switching instant $t_k^\sigma>0$ such that $\Psi_\sigma(t^\sigma_k,x)\in  \X$.
    \item[Claim 2.] For any $x\in   \X$ and any $\sigma \in \cS_{\dw}(\tau)$, $\Psi_\sigma(t,x)\in  \V$, for all $t\in \R_{\geq 0}$.
\end{itemize}
Claims 1 and 2 ensure the UGUB property in Definition~\ref{defn:UGUB}, with the compact set $\cV$ defined in~\eqref{eq:def_V}.
Indeed, we note that we can suppose, without loss of generality, that $\sigma$ has infinitely many discontinuities: if $\sigma$ is eventually constant then, considering any $x\in \R^n$, it holds that $\Psi_\sigma(t,x)\to x_{ei}$ as $t\to \infty$ for some $i\in \M$, and $x_{ei}\in \Inn(\X_i)\subseteq \X\subseteq \cV$  by Lemma~\ref{lemma:SecurityBall}, and thus there exists a $T(\sigma, x)\geq0$ as in Definition~\ref{defn:UGUB}.\\
To prove Claim 1, consider any $x\notin \X$ and any $\sigma \in \cS_{\dw}(\tau)$.
Define the instant $\hat t^\sigma\coloneqq \min\{t^\sigma_k\,\,\,\,\vert\,\Psi_\sigma(t^\sigma_k,x)\in \X\}$, i.e., the first switching instant for which $\Psi_\sigma(t,x)$ is inside $\X$, and let us show that $\hat t^\sigma$ is always finite. Consider any switching instant $t_k^\sigma<\hat t^\sigma$, and suppose, without loss of generality that $\sigma(t_k^\sigma)=i$. Let us denote, by simplicity $x(k)\coloneqq \Psi_\sigma(t^\sigma_{k}, x)$ and $T_k\coloneqq  t^\sigma_{k+1}-t^\sigma_k\geq \tau$. From the definition of $\hat t^\sigma$ we have  $x(k)\notin \X\supseteq \X_{j}$ for any $j\in\M$. Therefore, using \eqref{eq:Cond1Aff} and \eqref{eq:Cond2Aff} we obtain
\begin{equation}\label{eq:MainStpeProofofBoundednees}
\begin{aligned}
\wt v_i (x(k+1))&=\wt v_i\big(\Psi_i(T_k,x(k))\big)\\
&\leq \wt v_i\big(\Psi_i(\tau,x(k))\big)\\
&\leq e^{-\wt \kappa \tau}\wt v_j(x(k)),
\end{aligned}
\end{equation}
for any $j\in \M$. Iterating this property backward yields
\[
\wt v_i (x(k+1))\leq e^{-\wt \kappa k\tau }\wt v_j(x),
\]
for any $j\in \M$.
Since this holds for an arbitrary $t^\sigma_k<\hat t^\sigma$ and $e^{-\wt \kappa k\tau}\to 0$ as $k\to \infty$, we can conclude that $\hat t_\sigma$ is finite as, otherwise, a contradiction would take place.

To prove Claim 2, consider $x\in \X$, which implies that there exists $j\in\M$ such that $\wt v_j(x)\leq  R_\X$, from the definition in~\eqref{eq:Vij_def}. Notice that, given the definition of $\V$, we have $\X_i\subseteq\V_i$ for all $i\in\M$. Suppose $\sigma\in \cS_{\dw}(\tau)$ is such that $\sigma(0)=i\in\M$ and take any $t\in [0,\tau]$, which ensures $\sigma(t)=i$ given the dwell-time constraint. From the definition of $\V_i$, it is straightforward to verify that $\Psi_i(t,\X_j)\subseteq\V_j$ for all $t\in[0,\tau]$. Then, we can analyse the worst case with respect to $x\in\X_j$ as
\begin{align}
\wt v_i\big(\Psi_i(\tau,x)\big)& \leq \max_{x\in \X_j}\wt v_i\big(\Psi_i(\tau,x)\big)\nonumber\\
&= \max_{x\in \partial\X_j}\wt v_i\big(\Psi_i(\tau,x)\big)\nonumber\\
&\leq \max_{x\in \partial\X_j} e^{-\wt\kappa \tau} \wt v_j\big(x\big)\nonumber\\
&\leq  R_\X,\label{eq:invariance_Xi}
\end{align}
where the equality holds because the maximum of the convex function $\wt v_i\big(\Psi_i(\tau,\,\cdot\,)\big):\R^n\to \R$ inside the convex set $\X_j$ occurs on its boundary $\partial\X_j$ (see~\cite[Corollary~32.3.2]{Rockafellar70}) and the subsequent inequality follows from~\eqref{eq:Cond2Aff}. This implies that $\Psi_i(\tau,\X_j)\subseteq \X_i$ for all $(i,j)\in\M^2$ and thus, by~\eqref{eq:Cond1Aff}, $\Psi_i(t,x)\in\X_i$ for all $t\in[\tau,t^\sigma_1]$. We can iterate this argument for every $k\in \N$ such that $\Psi_\sigma(t^\sigma_k,x)\in \X$, and the proof is concluded.
\end{proof}

As proven in~Lemma~\ref{lem:NonExistenceOFFIsetsDT}, in general, for any $\tau>0$, non-trivial forward invariant sets for~\eqref{eq:Switchingsystems} on the class $\cS_\dw(\tau)$ do not exist, even if the linearized system~\eqref{eq:DeAffinedSystem} is UGAS on $\cS_\dw(\tau)$. For that reason, in Theorem~\ref{thm:DwellTimeBoundedness} we focused on the weaker \emph{uniform global ultimate boundedness} property, constructing a bounding region $\cV$, starting from a family of ``safety'' sets $\X_1,\dots, \X_M$. On the other hand, we can focus on a more relaxed notion of forward invariance, which is only concerned by the solutions \emph{evaluated along the discrete sequence of switching points}, as formally defined below.
\begin{defn}
 Given $\wt \cS\subset \cS$, we say that a compact set $C\subset \R^n$ is \emph{forward invariant} for~\eqref{eq:Switchingsystems} on $\wt \cS$ \emph{with respect to the switching points} if, for all $x\in C$, all $\sigma\in \wt \cS$, we have that
\[
\Psi_\sigma(t^\sigma_k,x)\in C, \;\;\forall x\in C,\;\;\forall \sigma\in \wt \cS, \;\;\forall t^\sigma_k\geq 0,
\]
where, we recall, $\{t_k^\sigma\}$ denotes the (finite or countable) set of discontinuities of the signal $\sigma\in \wt \cS$.
\end{defn}
\begin{prop}[$\cX$ is forward invariant w.r.t. switching instants] 
Under the hypothesis of Theorem~\ref{thm:DwellTimeBoundedness}, the set $\cX=\bigcup_{i\in \M}\cX_i$  (where the sublevel sets $\cX_i$ are defined in~\eqref{eq:Vij_def}) is forward invariant for~\eqref{eq:Switchingsystems} on $\cS_\dw(\tau)$ w.r.t. the switching points. 
\end{prop}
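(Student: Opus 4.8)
The plan is to recognize that this statement is essentially a corollary of the estimates already obtained in the proof of Theorem~\ref{thm:DwellTimeBoundedness}, repackaged as an induction over the discrete sequence of switching instants only (which is strictly simpler than the UGUB argument, since here we never need to track the continuous-time sets $\cV_i$). The two ingredients I would isolate are: (i) each sublevel set $\cX_i$ is forward invariant under its own flow $\Psi_i$; and (ii) the ``cross'' inclusion $\Psi_i(\tau,\cX_j)\subseteq\cX_i$ for every $(i,j)\in\M^2$, which is exactly the content of~\eqref{eq:invariance_Xi} established inside Theorem~\ref{thm:DwellTimeBoundedness}.

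For ingredient (i), I would argue from~\eqref{eq:Cond1Aff}: since $\wt v_i$ is non-increasing along $\Psi_i$ on the complement of $\Inn(\cX_i)$, a trajectory starting in $\cX_i=\{x\mid\wt v_i(x)\le R_\cX\}$ cannot cross the boundary $\{\wt v_i=R_\cX\}$. Concretely, if such a trajectory exited, there would be a first exit time $\bar t$ at which $\Psi_i(\bar t,x)$ lies on the boundary; applying~\eqref{eq:Cond1Aff} at that point forbids any subsequent increase of $\wt v_i$, contradicting the exit. Hence $x\in\cX_i\Rightarrow\Psi_i(t,x)\in\cX_i$ for all $t\ge0$. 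Combining (i) with (ii) yields the one-step property I actually need: for any $x\in\cX_j$, any $i\in\M$, and any $T\ge\tau$, one has $\Psi_i(\tau,x)\in\cX_i$ and then, by (i), $\Psi_i(T,x)=\Psi_i(T-\tau,\Psi_i(\tau,x))\in\cX_i$.

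With this in hand I would run the induction over the switching points. Fix $x\in\cX$, so $x\in\cX_{j_0}$ for some $j_0$, and fix $\sigma\in\cS_\dw(\tau)$; denote by $m_k\coloneqq\sigma(t^\sigma_k)$ the mode that is constant on $[t^\sigma_k,t^\sigma_{k+1})$ and by $T_k\coloneqq t^\sigma_{k+1}-t^\sigma_k\ge\tau$ the dwell on that interval, so that $\Psi_\sigma$ coincides with $\Psi_{m_k}$ there. At the trivial instant $t^\sigma_0=0$ we have $\Psi_\sigma(0,x)=x\in\cX$. Assuming $\Psi_\sigma(t^\sigma_k,x)$ belongs to some $\cX_{m_{k-1}}$ (or to $\cX_{j_0}$ when $k=0$), the one-step property applied with $T_k\ge\tau$ gives $\Psi_\sigma(t^\sigma_{k+1},x)=\Psi_{m_k}(T_k,\Psi_\sigma(t^\sigma_k,x))\in\cX_{m_k}\subseteq\cX$. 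This closes the induction and establishes forward invariance of $\cX$ with respect to the switching points.

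I do not anticipate a genuine obstacle, precisely because the heavy lifting---the cross-inclusion~\eqref{eq:invariance_Xi}---was already carried out within Theorem~\ref{thm:DwellTimeBoundedness}. The single point deserving mild care is the boundary argument in (i): one must confirm that the $\wt v_i$-sublevel set $\cX_i$ is truly flow-invariant, rather than merely that $\wt v_i$ decreases on the exterior, and this is settled by the first-exit-time argument above (equivalently, by a Nagumo-type reasoning on the convex set $\cX_i$).
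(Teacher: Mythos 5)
Your proposal is correct and follows essentially the same route as the paper's proof: forward invariance of each $\cX_i$ under its own flow via~\eqref{eq:Cond1Aff}, the cross-inclusion $\Psi_i(\tau,\cX_j)\subseteq\cX_i$ from~\eqref{eq:invariance_Xi}, and an induction over the switching instants. The only difference is that you spell out the first-exit-time justification for the flow-invariance of $\cX_i$, which the paper simply asserts.
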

\begin{proof}
First of all, we note that, by~\eqref{eq:Cond1Aff}, for any $i\in \M$, the set $\cX_i$ is forward invariant for the subsystem $\dot x=A_ix+b_i$.
Let us now consider any $x\in \cX$ and any $\sigma\in \cS_\dw(\tau)$. By definition of $\cX$ there exists a $j\in \M$ such that $x\in \cX_j=\{x\in\R^n\;\vert\; \wt v_j(x)\leq R_\cX\}$.  Suppose $\sigma(0)=i_0\in \M$ and consider $t^\sigma_1\geq\tau$ as the first switching instant of $\sigma$. By forward invariance of $\cX_{i_0}$ with respect to the $i_0$-subdynamics, if $\Psi_\sigma(\tau,x)\in \cX_{i_0}$ then $\Psi_\sigma(t^\sigma_1,x)\in \cX_{i_0}$. By~\eqref{eq:invariance_Xi} we have
\[
\wt v_{i_0}(\Psi_{\sigma}(\tau,x))=\wt v_{i_0}(\Psi_{i_0}(\tau,x))\leq \wt v_j(x)\leq R_\cX
\]
proving that $\Psi_{i_0}(t^\sigma_1,x)\in \cX_{i_0}\subset \cX$. Summarizing, we have proven that,  for any $x\in \cX$ and any $\sigma\in \cS_\dw(\tau)$ we have $\Psi_\sigma(t^\sigma_1,x)\in \cX$; iterating the argument we conclude that $\cX$ is forward invariant with respect to the switching instants.
\end{proof}
We also note that if the set $\{t^\sigma_k\}$ is finite (corresponding to an eventually constant switching signal $\sigma$), we have that $\lim_{t\to +\infty}\Psi_\sigma(t,x)=x_{ei}\in \Inn(\cX_i)\subseteq \Inn(\cX)$, for some $i\in \M$. This proves that, for any eventually constant signal $\sigma\in \cS_\dw(\tau)$, there exists a finite $\overline T(\sigma,x)\geq0$ such that $\Psi_\sigma(t,x)\in \cX$ for all $t\in [\overline T(\sigma,x),+\infty)$. On the other hand, the solutions of~\eqref{eq:Switchingsystems} can possibly escape $\cX$ between switching times, and for that reason we introduced the ``security'' set $\cV$, which characterizes the UGUB property. For a graphical illustration, see Figure~\ref{fig:setsXV}.

Concluding this subsection, we discuss the relations between the dwell-time parameter $\tau>0$, the corresponding decay rate and the sets $\cX_i$ and $\cV$.
\begin{rem}[Dependence of $\cX_i$ and $\cV$ with respect to $\tau$]\label{rem:rho_dt_sets}
Consider the system $\cF=\{ (A_i, b_i)_{i\in \M}\,\vert A_i\in \R^{n \times n}, b_i\in \R^n\}$, and denote $\cA=\{A_1,\dots, A_M\}$ the set of the matrices describing the linear part of the subdynamics.
We denote the \emph{minimum dwell-time} by $\tau_{\min}(\cA)$, formally defined as
\[
\tau_{\min}(\cA)\coloneqq \inf\left\{\tau\geq 0\;\vert\;\eqref{eq:DeAffinedSystem} \text{ is GUAS on }\cS_\dw(\tau) \right\}.
\]
By Lemma~\ref{lemma:DwellTimeConverseLyapLinear}, an equivalent definition is
\[
\begin{aligned}
\tau_{\min}(\cA)=&\inf_{\tau\geq 0} \;\tau,~~~\text{s.t.} \\
&\exists\kappa>0\; \text{and norms } \;v_1,\dots, v_M:\R^n\to \R_{\geq 0} \;\text{such that \eqref{eq:Cond1} and~\eqref{eq:Cond2} hold. } 
\end{aligned}
\]
We recall, once again, that if $A_1,\dots, A_M$ are Hurwitz, then $\tau_{\min}(\cA)$ is finite, see~\cite{Mor96}. 
Now, given $\tau>\tau_{\min}(\cA)$ we define the \emph{decay rate} with respect to $\tau>0$, i.e.,
\begin{equation}\label{eq:DefnDecayDwellTime}
\kappa_\cA(\tau)\!\coloneqq\! \sup \left \{\kappa\geq 0\;\vert\;\exists \text{norms }v_1,\dots,v_M:\R^n\to \R_{\geq 0}\;\text{such that~\eqref{eq:Cond1} and \eqref{eq:Cond2} hold}\right \}.
\end{equation}
The computation/approximation of the minimum dwell-time is a well-studied problem in the literature, see for example~\cite{geromel2006stability,Bri15,CheCol12,BlaCol10,YuanLv20,DelPasAng22} and references therein. As showed in~\cite{Wir05} the function $\kappa_\cA:(\tau_{\min}(\cA),\infty)\to \R_{\geq 0}$ is continuous and non-decreasing, and it can be seen that $\lim_{\tau\searrow \tau_{\min}(\cA)} \,\kappa(\tau)=0$, and thus we extent $\kappa_\cA $ by continuity defining $\kappa_\cA(\tau_{\min}(\cA))=0$. Moreover, the function $\kappa_\cA$ is bounded from above, by the number $\overline \kappa\coloneqq \min_{i\in \M} \kappa(A_i)$, where, given a matrix $A\in \R^{n\times n}$, $\kappa(A)\coloneqq \max_{i\in \{1,\dots, n\}}\{\text{Re}(\lambda_i(A))\}$. As discussed in~\cite{Wir05}, if the set $\cA$ is irreducible, then the $\sup$ in \eqref{eq:DefnDecayDwellTime} can be replaced by a $\max$, i.e., for any $\tau\geq \tau_{\min}(\cA)$ we can find norms satisfying~\eqref{eq:Cond1} and \eqref{eq:Cond2} for $\kappa_\cA(\tau)$. In what follows we will assume that $\cA$ is irreducible, without loss of generality. Moreover, we assume that $\tau_{\min}(\cA)>0$ since the case $\tau_{\min}(\cA)=0$ correspond to the case of stability under arbitrary switching, already studied in Section~\ref{sec:Arbitrary}.

We want now to use these notions to analyze how the sets $\cX_i$ and $\cV$ are affected while varying the dwell-time parameter $\tau$. We organize our analysis in two cases, described in what follows:
\begin{itemize}[leftmargin=*]
    \item The case of \emph{slow switching signals}, i.e., when the dwell-time $\tau$ becomes arbitrarily large.
    \item The case of \emph{fast switching signals}, i.e., the case $\tau \searrow \tau_{\min}(\cA)$ in which the dwell-time threshold approaches the minimum dwell-time $\tau_{\min}(\cA)$.
\end{itemize} 
\emph{Slow Switching:} Consider any $\underline \tau>\tau_{\min}(\cA)$ and consider any $\kappa>0$ and $v_1,\dots, v_M:\R^n\to \R_{\geq 0}$ satisfying the condition in Lemma~\ref{lemma:DwellTimeConverseLyapLinear}. It can be verified that the same $\kappa$ and norms $v_1,\dots, v_M$ satisfy the conditions in Lemma~\ref{lemma:DwellTimeConverseLyapLinear} \emph{for any} $\tau\geq\underline \tau$. For such $v_1,\dots, v_M:\R^n\to \R$, we apply the construction in the proof of Lemma~\ref{lemma:SecurityBall} and  we choose $c_i=x_{ei}$ for any $i\in \M$. Given any $0<\wt \kappa <\kappa$,  from \eqref{eq:DefnR_ij} it can be seen that, for all $(i,j)\in \M^2$
\[
\lim_{\tau\to +\infty} R_{ij}(\tau)=\lim_{\tau\to+\infty}\frac{e^{-\kappa \tau}\wt v_j(x_{ei})}{e^{-\wt\kappa \tau}-e^{-\kappa \tau}}=\lim_{\tau\to+\infty}\frac{ \wt v_j(x_{ei})}{e^{(\kappa-\wt \kappa)\tau}-1}=0.
\]
This implies that $R_\cX(\tau)=\max_{(i,j)\in \M^2} R_{ij}(\tau)\to 0$ as $\tau\to +\infty$, and thus, considering the compact sets $\cX_i(\tau)$ defined in~\eqref{eq:Vij_def} (making explicit the dependence with respect to $\tau$) we have
\[
\cX_i(\tau)\to \{x_{ei}\},\;\text{as } \tau\to +\infty, \;\;\forall\;i\in \M,
\]
where the convergence is in the Hausdorff sense. Similarly, the set $\cV(\tau)$ defined in~\eqref{eq:def_V} converges to the union of a finite number of trajectories, i.e., to the set
\[
\bigcup_{(i,j)\in \M^2}\bigcup_{t\in \R_{\geq 0}}\{\Psi_j(t, x_{ei})\}.
\]
\emph{Fast Switching:} We study the case $\tau\searrow \tau_{\min}(\cA)$, showing that the sets $\cX_i$ and $\cV$ explode, i.e., are eventually unbounded. This is somewhat expected since we are approaching the instability limit $\tau_{\min}(\cA)$  of the linearized system~\eqref{eq:DeAffinedSystem}. To this aim, consider any decreasing sequence $\tau_k\searrow \tau_{\min}(\cA)$ and, for any $k\in \N$, consider $\kappa_k\coloneqq \kappa_\cA(\tau_k)$ and $v^k_1,\dots, v^k_M:\R^n\to \R$ any norms satisfying~\eqref{eq:Cond1} and~\eqref{eq:Cond2} for $\tau_k$ and $\kappa_k$. As discussed, it holds that $\kappa_k\searrow 0$ as $k\to \infty$. In following the construction of proof of Lemma~\ref{lemma:SecurityBall} we set $\wt \kappa_k\coloneqq \kappa_k/2$ and this is without loss of generality, since $\kappa_k\to 0$. Similarly, we set $c_i^k=x_{ei}$, for any $k\in \N$; the general case follows the same argument. Computing, recalling~\eqref{eq:DefnR_ij}, for any $(i,j)\in \M^2$ we have
\[
\lim_{\tau\searrow \tau_{\min}(\cA)} R_{ij}(\tau)=\lim_{k\to+\infty}\frac{e^{-\kappa_k \tau_k}\wt v^k_j(x_{ei})}{e^{-\frac{\kappa_k}{2} \tau_k}-e^{-\kappa_k \tau_k}}=\lim_{k\to+\infty}\frac{ \wt v^k_j(x_{ei})}{e^{\frac{\kappa_k}{2}\tau_k}-1}.
\]
Since, by hypothesis $\kappa_k\searrow 0$ and $\tau_k\searrow \tau_{\min}(\cA)>0$ as $k\to \infty$, we have that $\lim_{\tau\searrow \tau_{\min}(\cA)} R_{ij}(\tau)=+\infty$. This means that, while approaching the stability margin given by $\tau_{\min}(\cA)$, our outer approximations of the safety sets $\cX_i$, $i\in \M$ are growing unboundedly, providing no information on the asymptotic behavior of system~\eqref{eq:Switchingsystems}. Of course, the same exploding behavior is inherited by the bounding region $\cV$. The numerical examples in the next section illustrate this phenomenon.  \hfill $\triangle$
\end{rem}
\subsection{Numerical Construction of Bounding Regions}
To devise a numerical procedure allowing the study of the switched affine system~\eqref{eq:Switchingsystems} under dwell-time switching we present an LMI-based method relying on sufficient conditions for obtaining functions $\wt{v}_i,~i\in\M$ satisfying the conditions~\eqref{eq:Cond1Aff}-\eqref{eq:Cond2Aff} in Lemma~\eqref{lemma:SecurityBall}.
For this, we consider norms induced by quadratic forms as
\begin{equation}
    v_i(x) = \sqrt{x^{\top}P_ix}\label{eq:quad_norms_i}
\end{equation}
where $P_i,~i\in\M,$ are positive definite matrices. From these norms, we define the functions $\wt{v}_i(x)=v_i(x-c_i)$ for given centers $c_i\in\R^n,~i\in\M$. The following corollary is the core of the numerical procedure to construct the bounding region $\V$.

\begin{cor}
For given $\tau\in \R_{\geq 0}$ and  $\cF=\{ (A_i, b_i)_{i\in \M}\,\vert A_i\in \R^{n \times n}, b_i\in \R^n\}$, the switched affine system \eqref{eq:Switchingsystems} is uniformly globally ultimately bounded on $\cS_{\dw}(\tau)$ if there exist positive definite matrices $P_i,W_{ij}\in\R^{n\times n}$ and vectors $c_i,d_{ij}\in\R^n$ satisfying the inequalities
\begin{subequations}
 \begin{align}
		\A_i^\top\P_i+\P_i\A_i&\prec-\E_{ii}\label{eq:lmi_flow}\qquad\forall i\in\M\\
		e^{\A_i^\top\tau}\P_i e^{\A_i\tau}-\P_j &\prec-\E_{ij}\qquad\forall (i,j)\in\M^2, i\neq j\label{eq:lmi_jump}
\end{align}
\end{subequations}
with 	
\begin{equation}
	\P_i = \begin{bmatrix}
	P_i & -P_ic_i\\
	-c_i^{\top}P_i & c_i^{\top}P_ic_i
	\end{bmatrix},~ \A_i =\begin{bmatrix}
		A_i & b_i \\
		0   &  0
		\end{bmatrix},~\E_{ij}=\begin{bmatrix}
	W_{ij} & -W_{ij}d_{ij}\\
	-d_{ij}^{\top}W_{ij} &  d_{ij}^{\top}W_{ij}d_{ij} -1
	\end{bmatrix}.\label{eq:def_cals}
\end{equation}
\end{cor}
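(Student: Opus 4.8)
The plan is to exploit the homogenization that is already built into the block matrices \eqref{eq:def_cals}. Introducing the augmented state $\xi\coloneqq[x^\top\ 1]^\top\in\R^{n+1}$, each affine subsystem $\dot x=A_ix+b_i$ becomes the \emph{linear} system $\dot\xi=\A_i\xi$, and since $\A_i$ is block upper-triangular with zero bottom row, its flow satisfies $e^{\A_i t}\xi=[\Psi_i(t,x)^\top\ 1]^\top$. A direct computation shows that the chosen $\P_i$ reproduces the translated quadratic norms \eqref{eq:quad_norms_i}, namely $\xi^\top\P_i\xi=(x-c_i)^\top P_i(x-c_i)=\wt v_i(x)^2$, while $\xi^\top\E_{ij}\xi=(x-d_{ij})^\top W_{ij}(x-d_{ij})-1$. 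Hence, once contracted against $\xi$, the two LMIs encode decrease inequalities for the translated norms along the affine flow and across one dwell-time interval.

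The cleanest route to the UGUB conclusion is to reduce everything to the linearization. Since a principal submatrix of a negative-definite matrix is itself negative definite, restricting \eqref{eq:lmi_flow} and \eqref{eq:lmi_jump} to their top-left $n\times n$ blocks yields, after noting that $e^{\A_i\tau}$ is block upper-triangular with diagonal blocks $e^{A_i\tau}$ and $1$,
\begin{align*}
A_i^\top P_i+P_iA_i&\prec-W_{ii}\prec0,\qquad\forall i\in\M,\\
e^{A_i^\top\tau}P_ie^{A_i\tau}&\prec P_j-W_{ij}\prec P_j,\qquad\forall(i,j)\in\M^2,\ i\neq j.
\end{align*}
Because $W_{ii},W_{ij}\succ0$, the first family gives $A_i^\top P_i+P_iA_i\preceq-2\kappa_iP_i$ for some $\kappa_i>0$, i.e.\ \eqref{eq:Cond1} for $v_i(x)=\sqrt{x^\top P_ix}$, whereas the second gives $e^{A_i^\top\tau}P_ie^{A_i\tau}\preceq(1-\varepsilon_{ij})P_j$ with $\varepsilon_{ij}=\lambda_{\min}(P_j^{-1/2}W_{ij}P_j^{-1/2})\in(0,1)$, i.e.\ \eqref{eq:Cond2} for $i\neq j$; the case $i=j$ of \eqref{eq:Cond2} is just \eqref{eq:Cond1} evaluated at $t=\tau$. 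Taking $\kappa>0$ as the minimum of all the rates so produced, the norms $v_1,\dots,v_M$ satisfy \eqref{eq:Cond1}--\eqref{eq:Cond2}, so Lemma~\ref{lemma:DwellTimeConverseLyapLinear} certifies that the linearization \eqref{eq:DeAffinedSystem} is UGAS on $\cS_\dw(\tau)$, and Theorem~\ref{thm:DwellTimeBoundedness} immediately delivers the claimed UGUB property.

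The step I expect to be the main obstacle is converting the \emph{strict} matrix inequalities into a \emph{uniform, strictly positive} exponential rate: one must quantify how far $W_{ii}$ (resp.\ $W_{ij}$) is from $0$ relative to $P_i$ (resp.\ $P_j$) so that the contraction factors $e^{-\kappa_i\tau}$ and $\sqrt{1-\varepsilon_{ij}}$ are genuinely below $1$, and then harmonize them into a single $\kappa>0$. A secondary technical point is verifying that the top-left block of $e^{\A_i^\top\tau}\P_ie^{\A_i\tau}$ is exactly $e^{A_i^\top\tau}P_ie^{A_i\tau}$; this follows from the block-triangular form of $e^{\A_i\tau}$ together with the structure of $\P_i$, and is the only place where the bottom-row/last-column entries must be tracked carefully.

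Finally, I would point out that the affine data $c_i,d_{ij}$ — discarded when passing to the top-left blocks — are precisely what make the certificate \emph{constructive}, which is the purpose announced before the statement. Contracting \eqref{eq:lmi_flow}--\eqref{eq:lmi_jump} against $\xi$ directly produces the decrease conditions \eqref{eq:Cond1Aff}--\eqref{eq:Cond2Aff} of Lemma~\ref{lemma:SecurityBall} for the translated norms $\wt v_i(x)=v_i(x-c_i)$, with the safety sets $\X_i$ taken as $\wt v_i$-sublevel sets enclosing the ellipsoids $\{x\,\vert\,(x-d_{ij})^\top W_{ij}(x-d_{ij})\le1\}$; this alternative path also yields UGUB and, in addition, the explicit sets $\X_i$ and the bounding region $\V$ of \eqref{eq:def_V} that the numerical procedure optimizes over.
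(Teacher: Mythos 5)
Your proof is correct, but your primary route is genuinely different from the one in the paper. The paper never passes through the linearization at this point: it contracts \eqref{eq:lmi_flow}--\eqref{eq:lmi_jump} against the augmented vector $\xi=[x^\top~1]^\top$ and verifies \emph{directly} the three conditions \eqref{eq:CondCenterAff}--\eqref{eq:Cond2Aff} of Lemma~\ref{lemma:SecurityBall} for the translated norms $\wt v_i$, taking $\X_i$ as $\wt v_i$-sublevel sets enclosing the ellipsoids $\{x~\vert~\xi^\top\E_{ij}\xi\le 0\}$ (and checking $x_{ei}\in\Inn(\X_i)$ by evaluating at $\xi_{ei}$, using $\A_i\xi_{ei}=0$); UGUB then follows from the machinery of Theorem~\ref{thm:DwellTimeBoundedness}. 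That is essentially your closing paragraph, which you correctly identify as the constructive path that also produces the sets $\X_i$ and $\V$ used by the numerical procedure. Your main argument instead extracts the top-left $n\times n$ blocks --- which are indeed $A_i^\top P_i+P_iA_i+W_{ii}$ and $e^{A_i^\top\tau}P_ie^{A_i\tau}-P_j+W_{ij}$, since $e^{\A_i\tau}$ is block upper-triangular with $(1,1)$ block $e^{A_i\tau}$ --- recovers \eqref{eq:Cond1}--\eqref{eq:Cond2} for the quadratic norms $v_i(x)=\sqrt{x^\top P_ix}$ (with the $i=j$ case of \eqref{eq:Cond2} supplied by \eqref{eq:Cond1} at $t=\tau$, as you note), concludes UGAS of the linearization on $\cS_{\dw}(\tau)$ via Lemma~\ref{lemma:DwellTimeConverseLyapLinear}, and invokes Theorem~\ref{thm:DwellTimeBoundedness} as a black box. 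This is shorter and logically clean --- the uniformization of the decay rates via $\lambda_{\min}(P_j^{-1/2}W_{ij}P_j^{-1/2})$ and the minimum over finitely many rates is handled adequately --- but it discards the data $c_i,d_{ij}$ and therefore proves only the existence of a bounding region, not the explicit one tied to the chosen centers; the paper's route keeps that information, which is the whole point of the surrounding numerical construction (and is why the paper phrases the proof as establishing the hypotheses of Lemma~\ref{lemma:SecurityBall} rather than those of Theorem~\ref{thm:DwellTimeBoundedness}). Since the corollary as stated only claims UGUB, either route suffices.
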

\begin{proof}
Let us show that \eqref{eq:lmi_flow}-\eqref{eq:lmi_jump} imply \eqref{eq:Cond1Aff}-\eqref{eq:Cond2Aff}, respectively. First, notice that, using the augmented state vector $\xi=[x^\top~1]^\top$ one can rewrite the system~\eqref{eq:Switchingsystems} as $\dot{\xi} = \A_\sigma \xi$ and the functions $\wt v_i(x)=\sqrt{\xi^\top\P_i\xi}$. Consequentially, the solution to this system is simply given by $\Psi_i(t,x)=e^{\A_i t} \xi$. Also, notice that the set $\Eb_{ij} \coloneqq  \{x\in\R^n~\vert~\xi^\top \E_{ij}\xi\leq 0 \}$ defines a generic ellipsoid in $\R^n$, see~\cite[Section~3.7]{boyd1994linear}. For an arbitrary $x\in\R^n$, multiplying~\eqref{eq:lmi_flow} to the left by $\xi^\top$ and to the right by $\xi$ implies that
\[
\frac{d}{dt}\wt v_i(\Psi_i(t,x))^2 < - \xi^\top\E_{ii}\xi,\;\;\forall t\in \R_{\geq 0},
\]
which readily implies that \eqref{eq:Cond1Aff} holds for some level set $\X_i$ of $\wt v_i(x)$ containing the ellipsoid $\Eb_{ii}$. Analogously, multiplying~\eqref{eq:lmi_jump} to the left by $\xi^\top$ and to the right by $\xi$ implies that
\[
\wt v_i(\Psi_i(\tau,x))^2 - \wt v_j(x)^2 <-\xi^\top \E_{ij}\xi,
\]
which, in turn, implies that \eqref{eq:Cond1Aff} also holds for some level set $\X_i$ of $\wt v_i(x)$ containing $\Eb_{ij}$ with some positive scalar $\tilde{\kappa}$, as strict inequalities are considered. Therefore $\X_i$ can be defined as a level set of $\wt v_i(x)$ containing $\Eb_{ij}$.  The proof that $x_{ei}\in\Inn(\X_i)$ is done by multiplying~\eqref{eq:lmi_flow} to the left by $\xi_{ei}^\top = [x_{ei}^{\top}~1]$ and to the right by $\xi_{ei}$ and noticing that $\A_i\xi_{ei}=0$, which leads to $\xi_{ei}^\top\E_{ii}\xi_{ei}<0$, concluding the proof.
\end{proof}

\begin{rem}[Equivalence with the conditions of~\cite{geromel2006stability}] \label{rem:WeAREstrongas,JOse}
We highlight that the conditions \eqref{eq:lmi_flow}-\eqref{eq:lmi_jump} are equivalent, in terms of conservatism, to the classical dwell-time stability conditions presented in~\cite[Theorem~1]{geromel2006stability} for the linearized system~\eqref{eq:DeAffinedSystem}. In this result, the sufficient condition for stability is given by  the existence of matrices $\wt P_i\in\R^{n\times n}$, $\wt P_i\succ0$ such that
\begin{subequations}
  \begin{align}
		A_i^\top\wt P_i+\wt P_iA_i&\prec0\label{eq:lmi_flow_gero}\qquad\forall i\in\M\\
		e^{A_i^\top\tau}\wt P_j e^{A_i\tau}-\wt P_i &\prec0\qquad\forall (i,j)\in\M^2, i\neq j.\label{eq:lmi_jump_gero}
\end{align}
\end{subequations}
Indeed, the fact that \eqref{eq:lmi_flow}-\eqref{eq:lmi_jump} imply \eqref{eq:lmi_flow_gero}-\eqref{eq:lmi_jump_gero} follows from the fact that the latter pair constitutes the matrix blocks (1,1) of the former, taking $\wt P_i=e^{A_i^\top\tau}P_i e^{A_i\tau},~i\in\M,$ and performing simple algebraic manipulations. Additionally, whenever  \eqref{eq:lmi_flow_gero}-\eqref{eq:lmi_jump_gero} hold for given $\wt P_i,~i\in\M,$ the inequalities generated by the matrix blocks (1,1) of \eqref{eq:lmi_flow}-\eqref{eq:lmi_jump} also hold for $ P_i=\epsilon e^{-A_i^\top\tau}\wt P_i e^{-A_i\tau},~i\in\M,$ with $\epsilon>0$. Also, for any $c_i,~i\in\M,$ these conditions can be fully satisfied for $\epsilon$ small enough, some $W_{ij}>0$ close enough to $0$, and $d_{ij}$ such that $W_{ij}d_{ij}$ equal the off-diagonal terms of the left-hand sides. {In the literature concerning dwell-time linear switched systems, several stability conditions equivalent to~\eqref{eq:lmi_flow_gero}-\eqref{eq:lmi_jump_gero} have been proposed, see for example~\cite{Bri15,YuanLv20,AllSha11,Xiang15} and references therein. Our proposed LMI-conditions~\eqref{eq:lmi_flow}-\eqref{eq:lmi_jump}, while tackling a more challenging problem, have the same region of feasibility of~\eqref{eq:lmi_flow_gero}-\eqref{eq:lmi_jump_gero} with respect to the dwell-time parameter $\tau>0$ and the matrices $A_i,~i\in\M$. On the other hand, conditions~\eqref{eq:lmi_flow_gero}-\eqref{eq:lmi_jump_gero} are known to be conservative as they restrict the choice of norms in~Lemma~\ref{lemma:DwellTimeConverseLyapLinear} to \emph{quadratic} norms. To asymptotically reach tight conditions for stability analysis of dwell-time linear switched systems one could consider SOS or polyhedral Lyapunov functions (see~\cite{CheCol12} and~\cite{BlaCol10}, respectively) increasing the computational complexity. Of course these approaches can be adapted in our switched affine systems framework, but for simplicity we do not proceed in this direction.} \hfill $\triangle$
\end{rem}

In contrast with \eqref{eq:lmi_flow_gero}-\eqref{eq:lmi_jump_gero}, the direct verification of \eqref{eq:lmi_flow}-\eqref{eq:lmi_jump} is difficult to be performed numerically due to the product of many decision variables. However, selecting the center of the functions $\wt v_i:\R^n\to \R$ at the equilibrium of the $i$-th subsystem (i.e., choosing $c_i=x_{ei}$) {allows us to restate it as an LMI problem given as:}
\begin{subequations}
  		\begin{align}\label{eq:obj_dwell}
		\min_{P_i,M_{ij},d_{ij}} \sum_{(i,j)\in\M^2} {\rm Tr}~ M_{ij},~~&\text{\rm s.t.}\\
		\begin{bmatrix}
		-Q_{ij}(P_i,P_j)+D &T(d_{ij})\\
		T(d_{ij})^\top & M_{ij}
		\end{bmatrix}&\succ0,\qquad\forall (i,j)\in\M^2\label{eq:lmi_dwell}
		\end{align}
\end{subequations}
where $T(d_{ij})^\top = [I~~d_{ij}]$, $D = {\rm diag}(0,\cdots,0,1)$ and
\[
Q_{ij}(P_i,P_j) = \left\{ \begin{array}{cl}
  \A_i^\top \P_i+ \P_i\A_i,   &~{\rm if}~ i=j \\
  e^{\A_i^\top\tau} \P_i e^{\A_i\tau}- \P_j,   &~{\rm if}~ i\neq j 
\end{array}  \right. .
\]
By applying the Schur Complement Lemma (see~\cite[Section A.5.5]{boyd2004convex}) with respect to $M_{ij}$ in the LMI~\eqref{eq:lmi_dwell}, one can easily verify that these conditions are equivalent to~\eqref{eq:lmi_flow}-\eqref{eq:lmi_jump} by taking $W_{ij}=M_{ij}^{-1}$. The objective function~\eqref{eq:obj_dwell} seeks to indirectly reduce the size of $\V$ by minimizing the sum of the squared lengths of the semi-axes of the ellipsoids $\X_{i},~i\in\M$, which provides a tight estimation of the minimum bounding region. Indeed, exactly minimizing the volume of $\V$ is {overwhelmingly difficult} since it is a non-convex set constituted by the union of uncountable many ellipsoids, recall the definition in~\eqref{eq:def_V}. 

Once a solution $P_i,~M_{ij},~d_{ij},~(i,j)\in\M^2$ has been found, the regions $\X_i,~i\in\M$ can be obtained by solving 
\begin{subequations}
  		\begin{align}\label{eq:obj_Xi}
		\min_{\beta_{ij}\geq 0,\gamma\geq 0} \gamma,~~&\text{\rm s.t.}\\
        \beta_{ij} \mathcal{E}_{ij}-\P_j+\gamma D 		
		&\succ0,\qquad\forall (i,j)\in\M^2,\label{eq:lmi_Xi}
		\end{align}
\end{subequations}
with $\mathcal{E}_{ij}$ defined in \eqref{eq:def_cals} and $W_{ij}=M_{ij}^{-1}$. According to the discussions presented in \cite[Section~3.7.1]{boyd1994linear}, taking $R_\X = \sqrt{\gamma}$ implies that 
$\X_i\supseteq \Eb_{ij}=\{x\in\R^n~\vert~\xi^\top \E_{ij}\xi\leq 0 \},~\forall (i,j)\in\M^2$. Finally, once the region $\X_i$ is characterized, for an arbitrary point $x\in\R^n$ the test whether $x\in\V$ can be performed without major difficulties, as described in the next proposition.

\begin{prop}
A point $x\in\R^n$ is contained in the bounding region $\V$ defined in~\eqref{eq:def_V}  {with norms $v_1,...,v_M$ given in~\eqref{eq:quad_norms_i}}, if and only if there exist $t\in[0,\tau]$ and a pair $(i,j)\in\M^2$ such that
\begin{equation}
\big(x-e^{A_j t}(c_i-x_{ej})-x_{ej}\big)^\top e^{-A_j^\top t}P_ie^{-A_j t} \big(x-e^{A_j t}(c_i-x_{ej})-x_{ej}\big)\leq R_\X^2.\label{eq:x_in_V_cond}
\end{equation}

\end{prop}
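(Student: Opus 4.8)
The plan is to unfold the definition of $\V$ and exploit the closed form of the affine flow, so that membership in $\V$ reduces to a single quadratic inequality. By~\eqref{eq:def_V}, a point $x\in\R^n$ lies in $\V$ if and only if there exist indices $(i,j)\in\M^2$ and a time $t\in[0,\tau]$ with $x\in\Psi_j(t,\X_i)$, i.e. with $x=\Psi_j(t,y)$ for some $y\in\X_i$. First I would rewrite the flow of the $j$-th subsystem around its equilibrium $x_{ej}=-A_j^{-1}b_j$ as
\[
\Psi_j(t,y)=e^{A_jt}(y-x_{ej})+x_{ej},
\]
which follows by differentiation using $A_jx_{ej}+b_j=0$ together with $\Psi_j(0,y)=y$.

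Since $e^{A_jt}$ is invertible for every $t$, the relation $x=\Psi_j(t,y)$ can then be inverted to recover the unique preimage
\[
y=e^{-A_jt}(x-x_{ej})+x_{ej},
\]
so that $x\in\Psi_j(t,\X_i)$ holds exactly when this $y$ belongs to $\X_i$. Recalling from~\eqref{eq:Vij_def} and~\eqref{eq:quad_norms_i} that $\X_i=\{z\in\R^n\mid (z-c_i)^\top P_i(z-c_i)\leq R_\X^2\}$, the next step is simply to substitute this $y$ into the defining quadratic inequality $(y-c_i)^\top P_i(y-c_i)\leq R_\X^2$.

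The only computation is the factoring $y-c_i=e^{-A_jt}\big(x-e^{A_jt}(c_i-x_{ej})-x_{ej}\big)$, obtained by writing $c_i-x_{ej}=e^{-A_jt}e^{A_jt}(c_i-x_{ej})$ and pulling $e^{-A_jt}$ outside; the quadratic form then becomes $w^\top e^{-A_j^\top t}P_ie^{-A_jt}w$ with $w=x-e^{A_jt}(c_i-x_{ej})-x_{ej}$, which is precisely~\eqref{eq:x_in_V_cond}. Because each step is an equivalence, quantifying over $(i,j)\in\M^2$ and $t\in[0,\tau]$ gives the claim. I do not expect a genuine obstacle: the only care needed is to keep the safety-set index $i$ (entering through $P_i,c_i$) separate from the flow mode $j$ (entering through $A_j,x_{ej}$), and to confirm that pulling the matrix exponential out of the quadratic form produces exactly the weighting $e^{-A_j^\top t}P_ie^{-A_jt}$ appearing in~\eqref{eq:x_in_V_cond}.
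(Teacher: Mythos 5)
Your proposal is correct and follows essentially the same route as the paper: both unfold the definition of $\V$, use the closed form $\Psi_j(t,x_0)=e^{A_jt}(x_0-x_{ej})+x_{ej}$ of the affine flow, and rewrite the quadratic inequality defining $\X_i$ to obtain~\eqref{eq:x_in_V_cond}. The only (cosmetic) difference is that you invert the flow to express the preimage $y$ explicitly and substitute it into $(y-c_i)^\top P_i(y-c_i)\leq R_\X^2$, whereas the paper inserts $e^{A_j^\top t}e^{-A_j^\top t}$ into the quadratic form and rewrites $e^{A_jt}(x_0-c_i)$ in terms of $\Psi_j(t,x_0)$ --- the same algebra read in the opposite direction, with your version making the equivalence slightly more explicit.
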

\begin{proof}
By definition, $x\in\V$ if for some pair $(i,j)\in\M$ there exists $x_0\in\X_i$ and $t\in[0,\tau]$ such that $x=\Psi_j(t,x_0)$. Notice that, for all $x_0\in\X_i$ we have
\begin{align}
    R_\X^2&\geq  (x_0-c_i)^\top P_i(x_0-c_i)\nonumber\\
    &=(x_0-c_i)^\top e^{A_j^\top t}e^{-A_j^\top t} P_ie^{-A_j t}e^{A_j t}(x_0-c_i)\label{eq:proof_x_in_V}
\end{align}
and also that
\begin{align}
e^{A_j t}(x_0-c_i) &=e^{A_j t}(x_0-x_{ej}-c_i+x_{ej})\nonumber\\
&= \Psi_j(t,x_0)-x_{ej} -e^{A_j t}(c_i-x_{ej})
\end{align}
because one can rewrite $\Psi_j(t,x_0) = e^{A_jt}(x_0-x_{ej}) +x_{ej}$. Replacing this last equality in~\eqref{eq:proof_x_in_V} yields~\eqref{eq:x_in_V_cond} for $x=\Psi_j(t,x_0)$, concluding the proof.
\end{proof}
It is noteworthy to mention that~\eqref{eq:x_in_V_cond} can be easily verified through a line search for $t\in[0,\tau]$ during which the inequality is evaluated for each pair $(i,j)\in\M^2$. The next example illustrates the overall procedure presented in this section. 
\begin{example}
Let us adapt Example~1 in~\cite{geromel2006stability} by introducing affine terms and consider a switched affine system~\eqref{eq:Switchingsystems} defined by
\begin{equation}A_1=\begin{bmatrix}
      	0 & 1 \\ 
      	 -10 & -1
     \end{bmatrix},~ A_2=\begin{bmatrix}
      	0 & 1 \\ 
      	 -0.1 & -0.5
     \end{bmatrix},~ b_1=\begin{bmatrix}
      	-1 \\ 
      	 -1
     \end{bmatrix},~ b_2=\begin{bmatrix}
      	1 \\ 
      	 0
     \end{bmatrix}  
\end{equation}
Solving the optimization problem \eqref{eq:obj_dwell}-\eqref{eq:lmi_dwell} for a minimum dwell-time of $\tau=2.76$ yields the matrices 
\[
P_1=\begin{bmatrix}
      	6.4537 & 0.1019 \\ 
      	 0.1019 & 0.8028
     \end{bmatrix}\times10^{-3},~P_2=\begin{bmatrix}
      	0.4290 & 0.0320 \\ 
      	 0.0320 & 3.0485
     \end{bmatrix}\times10^{-3} 
\]
as solution. This is the least dwell-time value for which we have obtained a feasible solution and it matches the one found in \cite{geromel2006stability} for the linearized system~\eqref{eq:DeAffinedSystem}, as proved in Remark~\ref{rem:WeAREstrongas,JOse}. To obtain the regions $\X_i,~i\in\M$ and subsequently the bounding region $\V$, we solved the optimization problem~\eqref{eq:obj_Xi}-\eqref{eq:lmi_Xi} and obtained $R_\X = \sqrt{\gamma}=22.3274$ which allows us to draw $\X_1,~\X_2,~\V_1$ and $\V_2$ as given in Figure~\ref{fig:dwell_bounding_geromel}. These same regions are also represented in Figure~\ref{fig:dwell_bounding_geromel} for $\tau=5$ and $\tau=10$. The bounding region $\V=\V_1\cup\V_2$ ensures by Theorem~\ref{thm:DwellTimeBoundedness} that, for any initial condition $x\in\R^n$, we have the solution $\Psi_\sigma(t,x)\in \V$ for all $t$ after a finite amount of time and all $\sigma\in \cS_{\dw}(\tau)$. This figure also illustrates that the area of these regions decreases as $\tau$ increases, as discussed in Remark~\ref{rem:rho_dt_sets}.
\begin{figure}
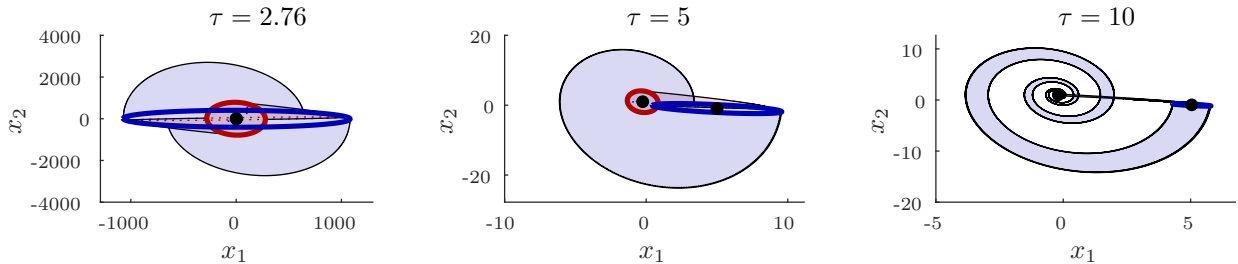

    \centering
	\def\svgwidth{.32\linewidth}
    \input{gero_dt_min.pdf_tex}
	\def\svgwidth{.32\linewidth}
    \input{gero_dt_5.pdf_tex}
	\def\svgwidth{.32\linewidth}
    \input{gero_dt_10.pdf_tex}
    \caption{For three different values of dwell-time $\tau$, the regions $X_1$ (red line), $X_2$ (blue line), $\V_1$ (red area) and $\V_2$ (blue area) are represented. The bounding region $\V=\V_1\cup\V_2$ is to where all solutions converge under any dwell-time switching signal $\sigma\in \cS_{\dw}(\tau)$. The black points represent the equilibria of each subsystem and the dotted regions represent $\Phi_2(\tau,\X_1)$ and $\Phi_1(\tau,\X_2)$ keeping the same color pattern.}
    \label{fig:dwell_bounding_geromel}
\end{figure}
\end{example}
The above example showed how the methodology developed in this paper allows the estimation of bounding regions for switched affine systems under dwell-time switching. In the following section, we present a path-following algorithm based on~\cite{hassibi1999path} that is shown to provide more accurate bounding regions. 

\subsubsection{Path-following method to enhance the ellipsoid positioning}
Inspired by the methodology presented in~\cite{hassibi1999path}, we derive a local optimization procedure for selecting the centers $c_i$ of the functions $\wt v_i(x) =\sqrt{(x-c_i)^\top P_i(x-c_i)}, ~i\in\M$. Particularly, the present problem is well adapted to be solved by this path-following method given that the choice of $c_i\in\R^n,~i\in\M$ does not interfere in the feasibility of the optimization problem in~\eqref{eq:obj_dwell}-\eqref{eq:lmi_dwell}. Also, though this method only guarantees convergence to local optima, the fact that suitable points $c_i\in\R^n,~i\in\M$, should lie close to $x_{ei}$ (tending to it when $\tau\rightarrow \infty$) allows us to efficiently warm-start this method. {As described in~\cite{hassibi1999path}, the idea behind the path-following algorithm is to linearize the non-linear contraints using a first-order approximation around a given feasible solution and iteratively compute a direction in the decision space that slightly improves the objective function.} Before presenting the algorithm adapted to our context, let us use~\eqref{eq:obj_dwell} and~\eqref{eq:lmi_dwell} to define the following objective function
\begin{equation}
    f(M_{ij})=\sum_{(i,j)\in\M^2} {\rm Tr}~ M_{ij}
\end{equation}
and the matrix-valued function
\begin{equation}
    h_{ij}(P_i,P_j,c_i,c_j,M_{ij},d_{ij}) = \begin{bmatrix}
		-Q_{ij}(P_i,P_j)+D &T(d_{ij})\\
		T(d_{ij})^\top & M_{ij}
		\end{bmatrix}
\end{equation}
where the dependence on $c_i$ and $c_j$ happens through the the matrices $\P_i$ and $\P_j$ in the definition of $Q_{ij}$. 
%\SetNlSty{textbf}{}{ }

\begin{algorithm}[tb]
	  \caption{
		    Path-following method for optimizing the centers $c_i,~i\in\M$
		  }
	  \label{algo:path_following}
\begin{algorithmic}[1]
	\REQUIRE{System matrices $(A_i,b_i),~i\in\M$, a dwell-time $\tau$, a precision $\varepsilon>0$ and a step-bound $\delta>0$}
	\STATE{$c_i \gets x_{ei}, ~\forall i \in \M$} \label{line:init_1}
	\STATE{$(P_i,M_{ij},d_{ij})\gets \arg\min_{P_i,M_{ij},d_{ij}} f(M_{ij}) $~ s.t.~\eqref{eq:lmi_dwell}}\label{line:init_2}
	\REPEAT
	   	\STATE{$(\hat P_i,\hat c_i, \hat M_{ij},\hat d_{ij})\gets \arg\min_{ \hat P_i, \hat c_i,  \hat M_{ij}, \hat d_{ij}} f( \hat M_{ij}) $~ s.t.\label{line:lin_opt_prob}\\
		   {$$h_{ij}(P_i,P_j,c_i,c_j,M_{ij},d_{ij}) +\sum_{\substack{X\in\{ P_i, P_j, c_i,\\\;\;\; c_j, M_{ij}, d_{ij}\}}}\frac{\partial h_{ij}(P_i,P_j,c_i,c_j,M_{ij},d_{ij})}{\partial X}(\hat{X}-X)>0,~ $$
		   $$  \hspace{5.2cm}\text{for } (i,j)\in\M^2,$$
		   $$-\delta<\hat{X}-X<\delta,\quad\hspace{2cm}\text{for } X\in( P_i, c_i),$$}}%
		   
		\STATE{ $c_i\gets c_i-\hat c_i$}
 		\STATE{$(P_i,M_{ij},d_{ij})\gets \arg\min_{P_i,M_{ij},d_{ij}} f(M_{ij}) $~ s.t.~\eqref{eq:lmi_dwell}}
	\UNTIL{$|f(M_{ij})-f(\hat M_{ij})|>\epsilon f(M_{ij})$}

	\RETURN $( P_i, c_i,  M_{ij}, d_{ij})$
\end{algorithmic}
\end{algorithm}
%\begin{algorithm}[tb]
%  \SetAlgoLined
%  \KwData{
%    System matrices $(A_i,b_i),~i\in\M$, a dwell-time $\tau$, a precision $\varepsilon>0$ and a step-bound $\delta>0$
%  }
%  
%  \nl $c_i \gets x_{ei}, ~\forall i \in \M$\; \label{line:init_1}
%  \nl $(P_i,M_{ij},d_{ij})\gets \arg\min_{P_i,M_{ij},d_{ij}} f(M_{ij}) $~ s.t.~\eqref{eq:lmi_dwell}\;\label{line:init_2}
%  \SetKwRepeat{Do}{do}{while}
% 
%    
%  \nl  \Do{$|f(M_{ij})-f(\hat M_{ij})|>\epsilon f(M_{ij})$}{
%   \nl $(\hat P_i,\hat c_i, \hat M_{ij},\hat d_{ij})\gets \arg\min_{ \hat P_i, \hat c_i,  \hat M_{ij}, \hat d_{ij}} f( \hat M_{ij}) $~ s.t.\label{line:lin_opt_prob}\\
%   $$h_{ij}(P_i,P_j,c_i,c_j,M_{ij},d_{ij}) +\sum_{X\in\{ P_i, P_j, c_i, c_j, M_{ij}, d_{ij}\}}\frac{\partial h_{ij}(P_i,P_j,c_i,c_j,M_{ij},d_{ij})}{\partial X}(\hat{X}-X)>0,~(i,j)\in\M^2 $$
%   $$-\delta<\hat{X}-X<\delta,\quad X\in( P_i, c_i);\;$$%
%   
%     \nl $c_i\gets c_i-\hat c_i$\;
%   \nl  $(P_i,M_{ij},d_{ij})\gets \arg\min_{P_i,M_{ij},d_{ij}} f(M_{ij}) $~ s.t.~\eqref{eq:lmi_dwell}\;
%  }
% 
%  \nl \Return{$( P_i, c_i,  M_{ij}, d_{ij})$}\;
%  \caption{
%    Path-following method for optimizing the centers $c_i,~i\in\M$
%  }
%  \label{algo:path_following}
%\end{algorithm}
Algorithm~\ref{algo:path_following} performs the optimization with respect to the centers $c_i,~i\in\M$. With some abuse of notation, we refer to the solution tuple $(P_i,c_i,M_{ij},d_{ij})_{(i,j)\in\M^2}$ simply as $(P_i,c_i,M_{ij},d_{ij})$. In Lines~\ref{line:init_1} and~\ref{line:init_2} the algorithm initializes each $c_i$ with the associated equilibrium $x_{ei}$ and the variables $P_i,M_{ij},d_{ij}$ with the corresponding solution to \eqref{eq:obj_dwell}-\eqref{eq:lmi_dwell}. Then, in Line~\ref{line:lin_opt_prob}, we solve the first-order approximation of \eqref{eq:obj_dwell}-\eqref{eq:lmi_dwell} taking into account $c_i$ as variables. The first constraint characterizes the linear approximation of the nonlinear constraint \eqref{eq:lmi_dwell} around the point defined by the current variables $(P_i,c_i,M_{ij},d_{ij})$ and the second constraint should be interpreted as an element-wise bound on the difference between $(P_i,c_i)$ and the ``perturbed'' variables $(\hat P_i,\hat c_i)$. This constraint is not applied to $\hat M_{ij}$ and $\hat d_{ij}$ as the original problem is convex with respect to them. The control parameter $\delta=10^{-1}$ was chosen in our numerical experiments, which was sufficiently small such that the first-order approximation is valid and large enough such that convergence is met in a few iterations. Afterward, another feasible point $(P_i,c_i,M_{ij},d_{ij})$ is determined by solving~\eqref{eq:obj_dwell}-\eqref{eq:lmi_dwell} for new centers $c_i$ and this procedure is repeated until a convergence criterion is satisfied. The next example illustrates cases where the use of Algorithm~\ref{algo:path_following} allows the determination of tighter bounding regions.

\begin{example}
Consider a switched affine system~\eqref{eq:Switchingsystems} defined by
{\small\[
A_1=\begin{bmatrix}
      	-5 & 1 \\ 
      	 -1 & -4
     \end{bmatrix},~A_2= 
      \begin{bmatrix}
      	-5 & -1 \\ 
      	 1 & -4
     \end{bmatrix},~A_3= 
      \begin{bmatrix}
      	-2 & 8 \\ 
      	 -5 & -5
     \end{bmatrix},~b_1=\begin{bmatrix}
      	-50 \\ 
      	 -10
     \end{bmatrix},~b_2=
      \begin{bmatrix}
      	-10 \\ 
      	 -40
     \end{bmatrix},~b_3=0.
\]}
The equilibrium points of each subsystem are $x_{e1}=[-10~~0]^\top$, $x_{e2}=[0~~-10]^\top$ and $x_{e3}=0$. For two different values of dwell-time $\tau\in\{0.1, 0.5\}$, we applied Algorithm~\ref{algo:path_following} to determine suitable centers $c_i,~i\in\M,$ for the functions $\wt v_i(x)$ used to define the sets $\X_i,~i\in\M,$ and the bounding region $\V$, given in~\eqref{eq:def_V}. Considering $\epsilon=10^{-3}$ For $\tau=0.1$, after $48$ iterations the algorithm converged to the centers
{\small\[
c_1=\begin{bmatrix}
      	-6.9714 \\ 
      	 -1.9481
     \end{bmatrix},~c_2= 
      \begin{bmatrix}
      	-0.2497 \\ 
      	 -7.2263
     \end{bmatrix},~c_3=
      \begin{bmatrix}
      	-0.9154 \\ 
      	 -1.5790
     \end{bmatrix} 
\]}
and for $\tau=0.5$, after 8 iterations the algorithm converged to
{\small\[
c_1=\begin{bmatrix}
      	-9.7879 \\ 
      	 -0.5505
     \end{bmatrix},~c_2= 
      \begin{bmatrix}
      	-0.2918 \\ 
      	 -9.5003
     \end{bmatrix},~c_3= 
      \begin{bmatrix}
      	0.3777 \\ 
      	 0.3399
     \end{bmatrix} 
\]}
The value of the objective function $f(W_{ij})$ over each iteration for each case is depicted in Figure~\ref{fig:obj_func_x_iter}. Notice that for the smaller dwell-time value, optimizing the centers allowed us to reduce more significantly the value of the objective function, which indicates that, for the larger dwell-time, the equilibria $x_{ei}$ were closer to the locally best points $c_i,~i\in\M$.  The corresponding regions $\X_i$ and $\V_i$ are given in Figure~\ref{fig:reg_X_V_dwell_2}.

\begin{figure}
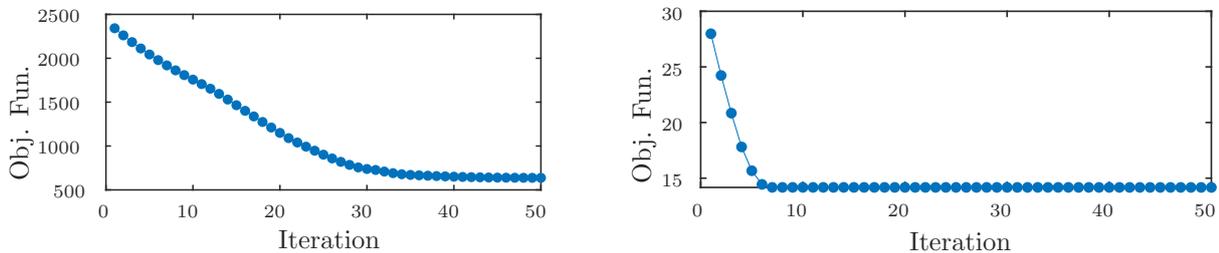

    \centering
	\def\svgwidth{.46\linewidth}
    \input{ci_opt_iter.pdf_tex}
	\def\svgwidth{.51\linewidth}
    \input{ci_opt_iter_2.pdf_tex}
    \caption{In the course of several iterations of Algorithm~\ref{algo:path_following}, the value of the objective function $f(M_{ij})$ is depicted for $\tau=0.1$ (left) and $\tau=0.5$ (right).}
    \label{fig:obj_func_x_iter}
\end{figure}
\begin{figure}
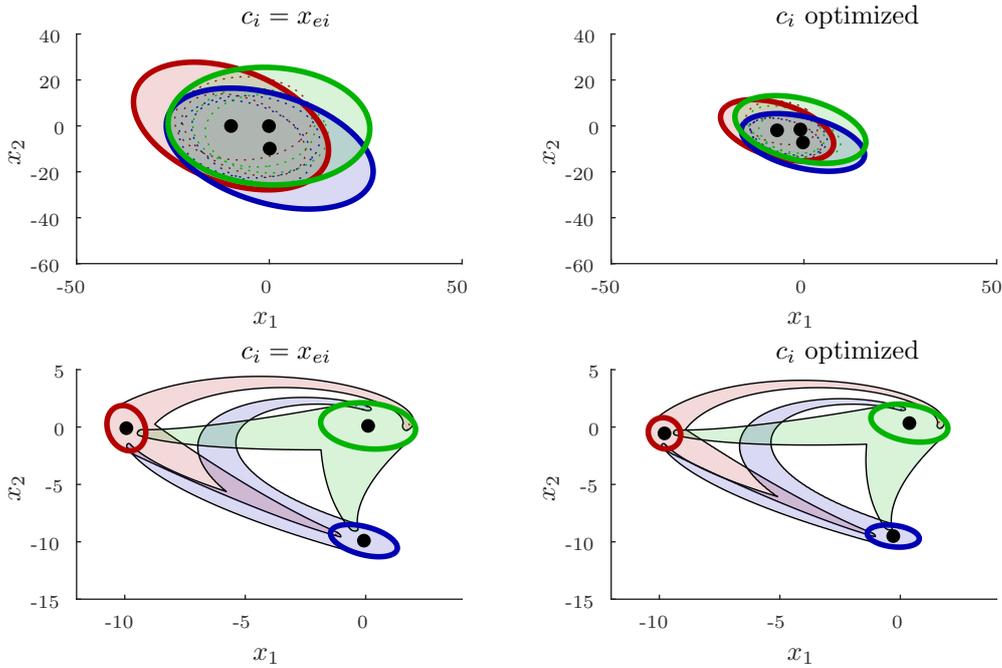

    \centering
	\def\svgwidth{.4\linewidth}
    \input{ci_dt_xe.pdf_tex}
	\def\svgwidth{.4\linewidth}
    \input{ci_dt_opt.pdf_tex} 
	\def\svgwidth{.4\linewidth}
    \input{ci_dt_xe_2.pdf_tex}
	\def\svgwidth{.4\linewidth}
    \input{ci_dt_opt_2.pdf_tex} 
    \caption{Regions $\X_1,~\X_2$ and $\X_3$ (red, blue and green lines, resp.) and regions $\V_1,~\V_2$ and $\V_3$ (red, blue and green areas, resp.) for $\tau=0.1$ (top) and $\tau=0.5$ (bottom). The figures on the left take $c_i=x_{ei}$ for all $i\in\M$ whereas the ones on the right have $c_i$ obtained from Algorithm~\ref{algo:path_following}.  }
    \label{fig:reg_X_V_dwell_2}
\end{figure}
\end{example}
\section{Conclusion}\label{sec:Concl}

Stability properties of continuous-time switched affine systems under arbitrary and dwell-time switching were discussed in this paper. We demonstrated that, in both cases, the boundedness of the state trajectories is closely related to the global asymptotic stability of the associated linearized system, i.e., when the affine terms are ignored. However, differently from switched linear systems, switched affine systems are not stable with respect to a point but rather to sets. In the arbitrary switching case, we characterized the existence of forward invariant sets. On the other hand, when dwell-time switching is considered, forward invariant sets need not exist and bounding regions are considered. Theoretical results ensuring the existence and non-existence of such sets were given and numerical methods based on convex optimization were devised to outer approximate them. These results were illustrated by numerical examples for each case.

In the future, we plan to study the stability of switched affine systems under other classes of switching signals such as periodic, path-constrained, and Markov-jump switching. {The case of state-dependent switching systems will also be considered, since it naturally arises considering affine dynamics as representation of a local first-order approximation of general smooth vector-fields. Moreover, this shall provide novel insights into the analysis of symbolic-abstraction based systems, for which local approximations techniques represent a central tool.}
\bibliography{biblio.bib} 

\begin{thebibliography}{10}

\bibitem{liberzon}
D.~Liberzon, {\em Switching in systems and control}.
\newblock Birkha\"user, 2003.

\bibitem{ShorWirMas07}
R.~Shorten, F.~Wirth, O.~Mason, K.~Wulff, and C.~King, ``Stability criteria for
  switched and hybrid systems,'' {\em SIAM Review}, vol.~49, no.~4,
  p.~545–592, 2007.

\bibitem{Jung09}
R.~M. Jungers, {\em The Joint Spectral Radius: Theory and Applications},
  vol.~385 of {\em Lecture Notes in Control and Information Sciences}.
\newblock Springer-Verlag, 2009.

\bibitem{Wir05}
F.~Wirth, ``A converse {Lyapunov} theorem for linear parameter-varying and
  linear switching systems,'' {\em SIAM Journal on Control and Optimization},
  vol.~44, no.~1, pp.~210--239, 2005.

\bibitem{geromel2006stability}
J.~C. Geromel and P.~Colaneri, ``Stability and stabilization of continuous-time
  switched linear systems,'' {\em SIAM Journal on Control and Optimization},
  vol.~45, no.~5, pp.~1915--1930, 2006.

\bibitem{kerrigan2002optimal}
E.~C. Kerrigan and D.~Q. Mayne, ``Optimal control of constrained, piecewise
  affine systems with bounded disturbances,'' in {\em Proceedings of the 41st
  IEEE Conference on Decision and Control, 2002.}, vol.~2, pp.~1552--1557,
  IEEE, 2002.

\bibitem{garulli2012survey}
A.~Garulli, S.~Paoletti, and A.~Vicino, ``A survey on switched and piecewise
  affine system identification,'' {\em IFAC Proceedings Volumes}, vol.~45,
  no.~16, pp.~344--355, 2012.

\bibitem{deaecto2010switched}
G.~S. Deaecto, J.~C. Geromel, F.~S. Garcia, and J.~A. Pomilio, ``Switched
  affine systems control design with application to {DC--DC} converters,'' {\em
  IET Control Theory \& Applications}, vol.~4, no.~7, pp.~1201--1210, 2010.

\bibitem{egidio2020switched}
L.~N. Egidio, G.~S. Deaecto, and T.~A. Barros, ``Switched control of a
  three-phase {AC--DC} power converter,'' {\em IFAC-PapersOnLine}, vol.~53,
  no.~2, pp.~6471--6476, 2020.

\bibitem{Mor96}
A.~S. {Morse}, ``Supervisory control of families of linear set-point
  controllers - {Part I.} exact matching,'' {\em IEEE Transactions on Automatic
  Control}, vol.~41, no.~10, pp.~1413--1431, 1996.

\bibitem{LinAnts09}
H.~Lin and P.~Antsaklis, ``Stability and stabilizability of switched linear
  systems: A survey of recent results,'' {\em {IEEE} Transactions on Automatic
  Control}, vol.~54, no.~2, pp.~308 -- 322, 2009.

\bibitem{DayMat99}
W.~{Dayawansa} and C.~{Martin}, ``A converse {Lyapunov} theorem for a class of
  dynamical systems which undergo switching,'' {\em IEEE Transactions on
  Automatic Control}, vol.~44, no.~4, pp.~751--760, 1999.

\bibitem{MolPya89}
A.~Molchanov and Y.~Pyatnitskiy, ``Criteria of asymptotic stability of
  differential and difference inclusions encountered in control theory,'' {\em
  Systems and Control Letters}, vol.~13, no.~1, pp.~59 -- 64, 1989.

\bibitem{ChiGugSig19}
Y.~Chitour, N.~Guglielmi, M.~Sigalotti, and V.~Y. Protasov, ``Switching systems
  with dwell time: computation of the maximal {Lyapunov} exponent,'' 2019.

\bibitem{AthJun16}
N.~Athanasopoulos and R.~M. Jungers, ``Invariant sets for switching affine
  systems subject to semi-algebraic constraints,'' {\em IFAC-PapersOnLine},
  vol.~49, no.~18, pp.~158 -- 163, 2016.
\newblock 10th IFAC Symposium on Nonlinear Control Systems NOLCOS 2016.

\bibitem{NilBosSig13}
P.~Nilsson, U.~Boscain, M.~Sigalotti, and J.~Newling, ``Invariant sets of
  defocused switched systems,'' in {\em 52nd IEEE Conference on Decision and
  Control}, pp.~5987--5992, 2013.

\bibitem{Filippov88}
A.~F. Filippov, {\em Differential Equations with Discontinuous Right-Hand
  Side}.
\newblock Kluwer Academic Publisher, 1988.

\bibitem{AubinCellina84}
J.-P. Aubin and A.~Cellina, {\em Differential Inclusions: Set-Valued Maps and
  Viability Theory}.
\newblock Grundlehren der Mathematischen Wissenschaften, Springer-Verlag, 1984.

\bibitem{khalil2002nonlinear}
H.~K. Khalil, {\em Nonlinear Systems}.
\newblock Pearson Education, Prentice Hall, 2002.

\bibitem{TeelPraly00}
A.~R. Teel and L.~Praly, ``A smooth {Lyapunov} function from a
  class-$\mathcal{KL}$ estimate involving two positive semidefinite
  functions,'' {\em ESAIM: COCV}, vol.~5, pp.~313--367, 2000.

\bibitem{RockWets98}
R.~Rockafellar and R.-B. Wets, {\em Variational Analysis}, vol.~317 of {\em
  Gundlehren der mathematischen Wissenchaften}.
\newblock Berlin: Springer-Verlag, 3rd printing, 2009~ed., 1998.

\bibitem{BroTan20}
B.~Brogliato and A.~Tanwani, ``Dynamical systems coupled with monotone
  set-valued operators: Formalisms, applications, well-posedness, and
  stability,'' {\em SIAM Review}, vol.~62, no.~1, pp.~3--129, 2020.

\bibitem{ChiMasSig12}
Y.~Chitour, P.~Mason, and M.~Sigalotti, ``On the marginal instability of linear
  switched systems,'' {\em Systems \& Control Letters}, vol.~61, no.~6,
  pp.~747--757, 2012.

\bibitem{ProJun15a}
V.~Y. Protasov and R.~M. Jungers, ``Resonance and marginal instability of
  switching systems,'' {\em Nonlinear Analysis: Hybrid Systems}, vol.~17,
  pp.~81--93, 2015.

\bibitem{boyd2004convex}
S.~Boyd and L.~Vandenberghe, {\em Convex optimization}.
\newblock Cambridge university press, 2004.

\bibitem{Rockafellar70}
R.~Rockafellar, {\em Convex Analysis}.
\newblock Princeton University Press, 1970.

\bibitem{apkarian2000robust}
P.~Apkarian and H.~D. Tuan, ``Robust control via concave minimization local and
  global algorithms,'' {\em IEEE Transactions on Automatic Control}, vol.~45,
  no.~2, pp.~299--305, 2000.

\bibitem{papachristodoulou2002construction}
A.~Papachristodoulou and S.~Prajna, ``On the construction of lyapunov functions
  using the sum of squares decomposition,'' in {\em IEEE Conference on Decision
  and Control}, vol.~3, pp.~3482--3487, 2002.

\bibitem{lasserre2019volume}
J.~Lasserre, ``Volume of sub-level sets of polynomials,'' in {\em 2019 18th
  European Control Conference (ECC)}, pp.~1975--1980, IEEE, 2019.

\bibitem{Lofberg2004}
J.~L{\"{o}}fberg, ``{YALMIP} : A toolbox for modeling and optimization in
  {MATLAB},'' in {\em Proceedings of the CACSD Conference}, 2004.

\bibitem{Bri15}
C.~Briat, ``Convex conditions for robust stabilization of uncertain switched
  systems with guaranteed minimum and mode-dependent dwell-time,'' {\em Systems
  and Control Letters}, vol.~78, pp.~63 -- 72, 2015.

\bibitem{CheCol12}
G.~Chesi, P.~Colaneri, J.~C. Geromel, R.~Middleton, and R.~Shorten, ``A
  nonconservative {LMI} condition for stability of switched systems with
  guaranteed dwell time,'' {\em IEEE Transactions on Automatic Control},
  vol.~57, no.~5, pp.~1297--1302, 2012.

\bibitem{BlaCol10}
F.~{Blanchini} and P.~{Colaneri}, ``Vertex/plane characterization of the
  dwell-time property for switching linear systems,'' in {\em 49th IEEE
  Conference on Decision and Control (CDC)}, pp.~3258--3263, 2010.

\bibitem{YuanLv20}
S.~{Yuan}, M.~{Lv}, S.~{Baldi}, and L.~{Zhang}, ``Lyapunov-equation-based
  stability analysis for switched linear systems and its application to
  switched adaptive control,'' {\em IEEE Transactions on Automatic Control},
  vol.~66, no.~5, pp.~2250--2256, 2021.

\bibitem{DelPasAng22}
M.~{Della Rossa}, M.~Pasquini, and D.~Angeli, ``Continuous-time switched
  systems with switching frequency constraints: {Path}-complete stability
  criteria,'' {\em Automatica}, vol.~137, p.~110099, 2022.

\bibitem{boyd1994linear}
S.~Boyd, L.~El~Ghaoui, E.~Feron, and V.~Balakrishnan, {\em Linear matrix
  inequalities in system and control theory}.
\newblock SIAM, 1994.

\bibitem{AllSha11}
L.~I. Allerhand and U.~Shaked, ``Robust stability and stabilization of linear
  switched systems with dwell time,'' {\em IEEE Transactions on Automatic
  Control}, vol.~56, no.~2, pp.~381--386, 2011.

\bibitem{Xiang15}
W.~Xiang, ``On equivalence of two stability criteria for continuous-time
  switched systems with dwell time constraint,'' {\em Automatica}, vol.~54,
  no.~C, p.~36–40, 2015.

\bibitem{hassibi1999path}
A.~Hassibi, J.~How, and S.~Boyd, ``A path-following method for solving bmi
  problems in control,'' in {\em Proceedings of the 1999 American control
  conference (Cat. No. 99CH36251)}, vol.~2, pp.~1385--1389, IEEE, 1999.

\end{thebibliography}
\bibliographystyle{ieeetr}
\end{document}